\title{Complexity of Unambiguous Problems in $\Sig{2}$}
\date{}
\author{
Matan Gilboa$^{1}$, 
Paul W. Goldberg$^{1}$, 
Elias Koutsoupias$^{1}$, 
Noam Nisan$^{2}$ \\
\\
$^{1}$University of Oxford \\
$^{2}$Hebrew University of Jerusalem
}
  \def\Sigma{Σ}
  \def\forall{∀}
  \def\exists{∃}
  \def\ensuremath#1{#1}
  \def\complexityclass#1{#1}
  \def\PNP{P^NP}
\newtcolorbox[auto counter, number within=section]{wbox}[1][]{%
  colback=white,
  colframe=black,
  fonttitle=\bfseries,
  title=Problem~\thetcbcounter: #1,
  boxrule=0.5pt,
  boxsep=1pt,        
  left=2pt,          
  right=2pt,
  top=2pt,
  bottom=2pt,
  before skip=4pt,   
  after skip=4pt,    
  arc=0pt            
}
\newtheorem{theorem}{Theorem}[section]
\newtheorem{lemma}[theorem]{Lemma}
\newtheorem{prop}{Proposition}
\newtheorem{observation}{Observation}
\newtheorem{definition}[theorem]{Definition}
\crefname{wbox}{Problem}{Problems}
\Crefname{wbox}{Problem}{Problems}
\crefname{prop}{proposition}{propositions}   
\Crefname{prop}{Proposition}{Propositions}  
\crefname{lemma}{lemma}{lemmas}   
\Crefname{lemma}{Lemma}{Lemmas}
\crefname{observation}{observation}{observations}   
\Crefname{observation}{Observation}{Observations}
\definecolor{mygray}{RGB}{218,215,203}
\newcommand{\subqed}{%
  {\renewcommand{\qedsymbol}{$\triangle$}\qedhere}%
}
\newcommand{\EX}{\ensuremath{\mathbb{E}}}
\newcommand{\NN}{\ensuremath{\mathbb{N}}}
\newcommand{\RR}{\ensuremath{\mathbb{R}}}
\newcommand{\Prob}{\ensuremath{\mathbb{P}}}
\DeclareMathOperator{\sgn}{sgn} 
\newcommand{\fCC}{\ensuremath{\mathfrak{C}}} 
\newcommand{\CC}{\ensuremath{\mathcal{C}}} 
\newcommand{\VV}{\ensuremath{\mathcal{V}}} 
\newcommand{\fXX}{\ensuremath{\mathcal{X}}} 
\newcommand{\fYY}{\ensuremath{\mathcal{Y}}} 
\newcommand{\XX}{\ensuremath{\mathtt{x}}} 
\newcommand{\YY}{\ensuremath{\mathtt{y}}} 
\newcommand{\ZZ}{\ensuremath{\mathtt{z}}} 
\newcommand{\WW}{\ensuremath{\mathtt{w}}} 
\newcommand{\bs}{\setminus}
\newcommand{\reduce}{\ensuremath{\leq_P}}
\newcommand{\bits}[1]{\ensuremath{\{0,1\}^{#1}}}
\newcommand{\concat}{\mathbin\Vert}
\newcommand{\complexityclass}[1]{\ensuremath{\bf{#1}}}
\newcommand{\NP}{\complexityclass{NP}}
\newcommand{\coNP}{\complexityclass{coNP}}
\newcommand{\Pclass}{\complexityclass{P}}
\newcommand{\UP}{\complexityclass{UP}}
\newcommand{\UAP}{\complexityclass{UAP}}
\newcommand{\DifP}{\complexityclass{D^P}}
\newcommand{\PNP}{\complexityclass{\Pclass^{NP}}}
\newcommand{\PCW}{\complexityclass{PCW}}
\newcommand{\PTW}{\complexityclass{PTW}}
\newcommand{\PMA}{\complexityclass{PMA}}
\newcommand{\USig}[1]{\complexityclass{U\Sigma_{#1}^P}}
\newcommand{\SymP}{\complexityclass{S_2^P}}
\newcommand{\ZPPNP}{\complexityclass{ZPP^{NP}}}
\newcommand{\MA}{\complexityclass{MA}}
\newcommand{\AM}{\complexityclass{AM}}
\newcommand{\BPP}{\complexityclass{BPP}}
\newcommand{\LOPClass}{\complexityclass{L_2^P}}
\newcommand{\coAM}{\complexityclass{coAM}}
\newcommand{\Piclass}[1]{\complexityclass{\Pi_{#1}^P}}
\newcommand{\Dif}[1]{\complexityclass{D_{#1}^P}}
\newcommand{\Sig}[1]{\complexityclass{\Sigma_{#1}^P}}
\newcommand{\SAT}{\textsc{SAT}}
\newcommand{\USat}{\textsc{USAT}}
\newcommand{\UnSat}{\textsc{UNSAT}}
\newcommand{\UQSAT}{\textsc{U\textnormal{\ensuremath{\exists\forall}}-Sat}}
\newcommand{\UTSP}{\textsc{TSP-Unique-Opt}}
\newcommand{\UOPT}{\textsc{Ckt-Unique-Opt}}
\newcommand{\CKTConsensus}{\textsc{Ckt-Consensus}}
\newcommand{\SDomStrat}{\textsc{Strong-Dominant-Strategy}}
\newcommand{\CktWinThresh}{\textsc{Ckt-Winner-Threshold}}
\newcommand{\CKTCondorcet}{\textsc{Ckt-Condorcet}}
\newcommand{\kCKTCondorcet}[1]{\textsc{Ckt-Condorcet}[\mathit{#1}]}
\newcommand{\ASHG}{\textsc{Ashg-Strong-Popularity}}
\newcommand{\GDice}{\textsc{Graph-Dice}}
\newcommand{\CKTDice}{\textsc{Ckt-Dice}}
\newcommand{\StrictCKTDice}{\textsc{Strict-Ckt-Dice}}
\newcommand{\WTournSource}{\textsc{Weak-Tournament-Source}}
\newcommand{\TournSource}{\textsc{Tournament-Source}}
\newcommand{\MultiTournSource}{\textsc{Multi-Weak-Tournament-Source}}
\newcommand{\LOP}{\textsc{LOP}}
\newcommand{\EMaj}{\textsc{Edge-Majority}}
\newcommand{\kEMajBal}[1]{\textsc{Edge-Majority-Balanced}[\mathit{#1}]}
\newcommand{\EMajBal}{\textsc{Edge-Majority-Balanced}}
\newcommand{\EMajSet}{\textsc{Edge-Majority-Set}}
\newcommand{\WDomStrat}{\textsc{WDom-Strategy}}
\newcommand{\CKTUVAL}{\textsc{Ckt-Unique-Value}}
\newcommand{\CKTPareto}{\textsc{2-Ckt-Pareto}}
\newcommand{\QSAT}{\textsc{$\exists\forall$-Sat}}
\newcommand{\coQSAT}{\textsc{$\forall\exists$-Sat}}
\begin{document}

\maketitle

\begin{abstract}
Various practical problems within the class \Sig{2} possess an {\em unambiguity} property, meaning that yes-instances correspond with a {\em unique} witness. The semantic class containing all unambiguous $\Sig{2}$ problems is denoted $\USig{2}$.
Examples include the existence of (1) a dominating strategy in a game, (2) a Condorcet winner, 
(3) a strongly popular partition in hedonic games, and 
(4) a winner (source) in a tournament.
The computational complexity of unambiguous problems is not well understood, leaving many questions unresolved.
We address this gap in a broad complexity-theoretic sense; our main contributions consist of the following.

\begin{itemize}
    \item We identify three {\em syntactic} subclasses of $\USig{2}$ associated with general properties of problems that guarantee uniqueness: Polynomial Tournament Winner (\PTW{}), Polynomial Condorcet Winner (\PCW{}), and Polynomial Majority Argument (\PMA{}). 
    \item We establish complexity upper and lower bounds for our proposed classes. In particular, we show that they are all contained in \SymP{} and are thus significantly easier than the immediate \Sig{2} upper bound.
    \item We characterize the complexity of various practical problems using this framework.
\end{itemize}
\end{abstract}
\newpage
\tableofcontents
\newpage

\section{Introduction}
\label{sec:introduction}
Problems in \NP{} ask about the existence of solutions that can be efficiently checked, and problems in \Sig{2} ask about the existence of solutions that can be efficiently checked in \coNP{}. 
Such a decision problem is called \textit{unambiguous} if it is guaranteed to have at most one witness.
Unambiguous problems within \NP{} (that is, those in \UP{}) that are not known to belong to \Pclass{} are quite scarce. Notable examples include Prime Factorization, Parity Games, and Mean Payoff Games \cite{jurdzinski1998deciding}. 
Beyond \NP{}, however, unambiguous problems become considerably more diverse.
One level higher in the polynomial hierarchy, we encounter many natural examples, and our focus therefore lies on problems contained in the semantic\footnote{The distinction between syntactic complexity classes, which have complete problems, and semantic classes, which are unlikely to have complete problems, is discussed in \cite{Papa94a}. Syntactic classes are associated with machines having some recognizable structure, whereas semantic classes have problems solved by machines in which some (possibly undecidable) assumption is needed about how the machine behaves (for \UP{}, the assumption is that a nondeterministic Turing machine has at most one accepting path for any input).} class \USig{2}, consisting of all unambiguous problems in \Sig{2}.
The classification of their computational complexity poses a new and interesting challenge: 
Unambiguity becomes an obstacle, since polynomial-time reductions---our main tool for such tasks---tend to be parsimonious, meaning that they preserve the number of solutions between instances \cite{hartmanis1976isomorphisms,valiant1974reduction,simon1977difference}.
Hence, it is not always clear how to reduce a problem that may admit multiple solutions to one that cannot.
This paper is a comprehensive exploration of \USig{2}.

We first motivate the study of \USig{2} with some concrete problems.
In voting, a Condorcet winner \cite{Cond85a} is a candidate $w$ that is preferred over any other candidate $x$ by a majority of voters, if voters' preferences are limited to $w$ versus $x$. Clearly, there can be at most one Condorcet winner. The existence of a Condorcet winner may be hard to determine in settings where the number of candidates is exponential in the input length. One problem where this occurs is \ASHG{}, the existence problem of a strongly popular partition in an additively separable hedonic game.
In this problem, outcomes (the candidates) are partitions of a set of agents (the voters) into {\em coalitions}, and any agent values a partition based on the sum of values that they assign to the agents in their own coalition.
This problem was shown in \cite{BrBu20a} to be \coNP{}-hard, but its best known upper bound is the trivial \Sig{2} bound.    

As another example, we introduce a problem \GDice{} based on the notion of \emph{intransitive dice} \cite{gardner1970paradox} --- we seek a die that has a higher probability of winning than losing against any other die, aiming to roll the higher number.

A further unambiguous problem is determining the existence of a strongly dominant strategy in a game with an exponential strategy space (\SDomStrat{}). Deciding whether a given player has a strongly dominant strategy is syntactically in \Sig{2}, and there cannot be more than one such strategy.

Naturally, a central goal in this work is to characterize the computational complexity of the above problems, as well as related ones, noting that known complexity classes may not suffice to capture their structure. 
Perhaps the only well-known class in $\Sig{2}$ (and presumably not in \NP{}) associated primarily with unambiguous problems is \PNP{} \cite{stockmeyer1976polynomial}. Papadimitriou \cite{papadimitriou1984complexity} showed that determining whether an instance of the traveling salesman problem admits a uniquely optimal solution (\UTSP{}) is complete for \PNP{}. Various other problems, typically involving a solution space that induces a weak order, have subsequently received a similar characterization \cite{krentel1986complexity}.
The structure of the underlying order allows the usage of an \NP{} oracle to perform binary search over the solution space, yielding containment in \PNP{}.
However, many of our motivating problems lack this convenient structure.  
We thus aim to isolate the core features of \USig{2} problems by identifying the combinatorial principles that give rise to unambiguity.

This idea results in the introduction of three complexity classes based on corresponding Boolean-circuit problems. The first two classes, denoted Polynomial Tournament Winner (\PTW{}) and Polynomial Condorcet Winner (\PCW{}), manage to capture {\em all problems mentioned above}. 
\PTW{} is based on the problem of determining the existence of a source in a tournament graph.
The above problems are all tournament problems of sorts, where we are able to efficiently compare candidate solutions and we seek a winner beating all others (e.g., in \ASHG{}, a partition $\pi$ beats a partition $\pi'$ if $\pi$ is more popular than $\pi'$ in a majority vote among the agents). 
There can be at most one such winner because of the anti-symmetric nature of the ``beat'' relation defining the tournament ($\YY$ cannot beat $\XX$ if $\XX$ beats $\YY$).
\PCW{} is a subclass of \PTW{} which captures tournament-winner problems where comparison between vertices is obtained via a majority vote among a polynomial number of voters. 
The third class, Polynomial Majority Argument (\PMA{}), relies on a completely different combinatorial principle. Rather than obtaining unambiguity from the anti-symmetry of the ``beat'' relation, we derive it from a global condition that is not locally checkable. 
More concretely, we syntactically construct a bipartite graph $G=(X,Y,E)$ whose number of edges is strictly less than $2|Y|$, and we ask whether there exists a vertex in $X$ adjacent to all vertices in $Y$. 
Such a vertex would consume a majority of the edges, and thus yes-instances admit just one witness.
We find the definition of \PMA{} intriguing, as it exposes where $\USig{2}$ (and the class \SymP{}, as later discussed) seems stronger than \PTW{}.

Using this framework, we classify the complexity of the problems discussed, and others. Specifically, we show that \SDomStrat{}, \CKTConsensus{}, and the Condorcet winner problem with two voters, are all \PNP{}-complete, along with another problem denoted \CktWinThresh{}; and that \GDice{} is \PCW{}-complete. 
In \cite{gilboa2026StrongPopularity} it is further shown that \ASHG{} is \PCW{}-complete.

The classes \PTW{}, \PCW{}, and \PMA{} seem robust, in that they capture at least all \USig{2} problems we have been able to come up with. 
To better understand the complexity of these problems, we relate them to pre-existing complexity classes. 
Are these unambiguous classes significantly easier than their immediate superclass \Sig{2}? 
Conversely, can we find any computational lower bounds for them?
We answer these questions in the affirmative. Specifically, we prove that $
\PNP{} \subseteq \PCW{} \subseteq \PTW{} \subseteq \SymP{} $, $\BPP\subseteq\MA\subseteq\LOPClass{}\subseteq\PTW$, and $ \coNP{} \subseteq \PMA{} \subseteq \SymP{}\cap\coAM{}$ (see \Cref{fig:complexity_classes_arrows}).

To show that $\PTW\subseteq \SymP{}$, we build on an interesting graph-theoretic result by Erd{\H{o}}s. 
The class \SymP{}, introduced independently by \cite{russell1998symmetric} and \cite{canetti1996more}, captures problems in which there are symmetrically-defined polynomial-size witnesses for yes-instances and no-instances (formally defined in \Cref{sec:prelims}). 
\SymP{} is shown in \cite{cai2007s2p} to be a subclass of \ZPPNP{}. 
This means that, with randomization, we have that \PTW{} collapses to \PNP{} and \PMA{} become a subset of \PNP{}.
We thus show that these broad classes of unambiguous problems have much stronger upper bounds on their complexity than \Sig{2}.

The class \LOPClass{} is based on the problem of finding the minimal element of a linear order \cite{korten2024strong}. It seems to retain most of the interesting properties of \SymP{}, such as the strengthening of the Karp-Lipton collapse \cite{KarpLiptonCollapse1980,cai2007s2p,LOPBounds2025}, the containment of \MA{}, \BPP{}, and the range avoidance problem, which tightly relates to circuit complexity \cite{kleinberg2021total,korten2022hardest,ren2022range,chen2023range,korten2024strong} (see \cite{korten2025range} for a recent survey on range avoidance).
\PTW{} finds itself in the intriguing position in-between \LOPClass{} and \SymP{}, thus inheriting the above properties of \LOPClass{}.
Furthermore, this implies an oracle separation between \PNP{} and \PTW{}, as there is an oracle relative to which \PNP{} is not a subset of \BPP{} \cite{muchnik1996general}.
It is also interesting to note that both \LOPClass{} and \PCW{} generalize \PNP{}, are generalized by \PTW{}, but are not trivially comparable to each other.

Next, we address the question of how pivotal is the role of unambiguity in the computational complexity of problems. 
We look into some of the unambiguous problems in this paper, and make small changes to eliminate this property. 
We show that these variations turn out to be \Sig{2}-complete, significantly harder than their unambiguous counterparts. 
We exemplify this with problems we denote \CKTUVAL{} and \CKTPareto{}, which are variations of \UOPT{} and \CKTConsensus{} (defined in \Cref{sec:PNP}), respectively, and with a variant of hedonic games.

Finally, we consider an unambiguous problem which seems to lie outside of \Sig{2}, namely determining the existence of a weakly-dominant strategy in a game. We show it is computationally equivalent to the two-quantified version of \USat{} (\UQSAT{}). 
We further prove that they are positioned between \Piclass{2} and \Dif{2}, supporting the view that the hardness of unambiguous problems does not typically reflect the full potential of their immediate upper-bounding class---in this case, $\Sig{3}$.

The rest of this paper is organized as follows.
Following a brief account of related literature, in \Cref{sec:prelims} we introduce some relevant notation and definitions. \Cref{sec:PTW,sec:PCW,sec:PMA} introduce the classes \PTW{}, \PCW{}, and \PMA{} respectively, discuss their relation with other complexity classes, and problems that are captured by them.
In \Cref{sec:PNP} we exhibit several \PNP{}-complete unambiguous problems. 
\Cref{sec:Sig2} considers ambiguous variants of unambiguous problems, and shows they are \Sig{2}-complete. 
\Cref{sec:Beyond_sig2} discusses the complexity of the existence of a weakly-dominant strategy and \UQSAT{}.
All proofs are deferred to the appendix.
\begin{figure}[h]
\centering
\begin{tikzpicture}[
    node distance=1cm and 0.8cm,
    class/.style={align=center},
    arrow/.style={-{Stealth[length=3mm]}, thick}
]

\node[class] (coNP) {\coNP{}};
\node[class, above=0.3cm of coNP] (BPP) {\BPP{}};
\node[class, right=1cm of BPP] (MA) {\MA{}};
\node[class, right=1cm of MA] (LOP) {\LOPClass{}};
\node[class, right=1cm of coNP] (PNP) {\PNP{}};
\node[class, red, right=1cm of PNP] (PCW) {\PCW{}};
\node[class, red, right=1cm of LOP] (PTW) {\PTW{}};
\node[class, right=1.5cm of PTW] (S2P) {\SymP{}};
\node[class, right=1cm of S2P] (ZPPNP) {\ZPPNP{}};
\node[class, red, below right=0.3cm and 3.1cm of coNP] (PMA) {\PMA{}};
\node[class, right=11.5cm of coNP] (Sigma2) {\Sig{2}};
\node[class, right=5cm of PCW] (USigma2) {\USig{2}};
\node[class, right=5cm of PMA] (coAM) {\coAM{}};

\draw[arrow] (coNP) -- (PNP);
\draw[arrow] (BPP) -- (MA);
\draw[arrow] (MA) -- (LOP);
\draw[arrow] (PNP) -- (LOP);
\draw[arrow] (LOP) -- (PTW);
\draw[arrow] (PNP) -- (PCW);
\draw[arrow] (PCW) -- (PTW);
\draw[arrow] (PTW) -- (S2P);
\draw[arrow] (PMA) -- (USigma2);
\draw[arrow] (PTW) -- (USigma2);
\draw[arrow] (S2P) -- (ZPPNP);
\draw[arrow] (USigma2) -- (Sigma2);
\draw[arrow] (ZPPNP) -- (Sigma2);

\draw[arrow] (coNP) -- (PMA);
\draw[arrow] (PMA) -- (S2P);
\draw[arrow] (PMA) -- (coAM);
\draw[arrow] (coAM) -- (Sigma2);

\end{tikzpicture}
\caption{Our proposed complexity classes (in red) relative to known subclasses of \Sig{2}.}
\label{fig:complexity_classes_arrows}
\end{figure}
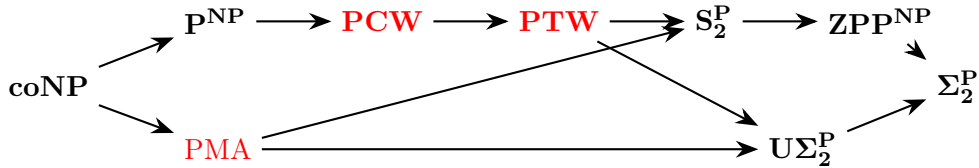
\subsection{Related Literature}
\label{sec:RL}
We conclude this introduction with a brief account of related studies, in addition to the works discussed throughout the paper.
A result of Stearns \cite{stearns1959voting} shows that, given a tournament $T$ on $n$ vertices, $\theta(\frac{n}{\log n})$ voters may be required to induce a majority graph that coincides with $T$. 
A related result of Erd\H{o}s \cite{erdos1964representation} shows that $\theta(\frac{n}{\log n})$ voters also suffice to induce any tournament.
The former result may hint at a separation between the classes \PTW{} and \PCW{}.

Another well-studied problem with the unambiguity property is \USat{}: Given a Boolean formula, does it admit a {\em unique} satisfying assignment? \cite{papadimitriou1982complexity} shows that \USat{} is contained in \DifP{}, and several papers observe that it is \coNP{}-hard. 
A proof for this can be found in \cite{blass1982unique}, where they also show an oracle relative to which the problem is complete for \DifP{}, and another one relative to which it is not. 
\USat{} seems easier than most of the problems discussed in this work, as \DifP{} is a subset of \PNP{}. See \Cref{sec:Beyond_sig2} for further discussion on \DifP{} and its generalization.

In \Cref{sec:PCW} we address computational aspects of intransitive dice. The literature typically approaches them from the perspectives of combinatorics, probability theory, and game theory.
Intransitive dice first appeared in \cite{gardner1970paradox}, which credits Bradley Efron for a set of three dice, known as Efron's dice, that form a winning cycle.
In \cite{akin2020generalized,schaefer2017balanced}, it is shown that every tournament on $n$ vertices can be induced by a set of dice using the ``beat'' relation. 
\cite{schaefer2017balanced} further show that for any $n,m\ge 3$, there exists a set of $n$ dice, each with $m$ faces, that form an $n$-cycle, a result later rediscovered by \cite{coelho2023central}. 
\cite{coelho2023central} study a random model where the dice's faces are random variables. 
\cite{hulko2019game} consider a game where two agents simultaneously construct an $n$-faced die, and then compare them; they show it admits a unique pure Nash equilibrium.
\cite{cornacchia2020intransitive} study the probability of obtaining an intransitive tournament when randomly constructing four dice.

Numerous works have discussed unambiguous computation. 
The class \UP{} was introduced in \cite{valiant1976relative}, and has since then been tied with foundational questions in cryptography. It is shown in \cite{grollmann1988complexity} that one-way functions exist if and only if $\Pclass{}\ne\UP{}$.
The famous Valiant-Vazirani theorem establishes that \complexityclass{Promise-\UP{}} is as hard as \NP{}, up to randomization \cite{VV85}.
In \cite{lange1994unambiguous}, three seemingly different hierarchies of unambiguous classes were introduced and characterized using Boolean circuits of exponential size with suitable restrictions on gate behavior.
\cite{niedermeier1998unambiguous} provides an alternative characterization of \UP{} and introduces the class \UAP{}, defined via local functions on the nodes of a Turing machine’s computation tree.
Our definition of \USig{2} aligns with the one found in \cite{FortnowYamakami96}, where they oracle-separate $\PNP{}$ from $\USig{2}\cap \Piclass{2}$ (a result subsumed by the above observation that $\PNP{}$ is oracle-separated from $\PTW{}$). Interestingly, their proof involves a language bearing some resemblance to our definition of \PMA{}.
\section{Preliminaries}
\label{sec:prelims}
Given a directed graph $G=(V,E)$ and $v_1,v_2\in V$, we say $v_1$ beats $v_2$ if $(v_1,v_2)\in E$ while $(v_2,v_1)\notin E$ (namely, an outgoing edge is considered a win in a contest between the vertices). If both edges are present or neither of them are, we normally say there is a tie between the vertices. 
We say $G$ is a \emph{weak tournament} if for every $v_1,v_2\in V$, at least one of the edges $(v_1,v_2)$ or $(v_2,v_1)$ is present in $E$. We say it is a \emph{tournament} if exactly one of those edges is present in $E$ for every pair of vertices.

We let $\Sigma$ be a fixed, finite alphabet; without loss of generality $\Sigma=\{0,1\}$. $\Sigma^*$ is then the set of all finite strings over $\{0,1\}$.
We define $\USig{k}$ as follows.
\begin{definition}
(\cite{FortnowYamakami96}).
Let $k\ge 1$. A language $L\subseteq\Sigma^*$ is in $\USig{k}$ if there is a polynomial-time Turing machine $M$ and a polynomial $p$ such that for all $w\in \Sigma^*$ we have:
\begin{itemize}
    \item $w\in L\iff \exists \XX_1 \;\forall \XX_2\;\exists \XX_3\dots Q\XX_k\; M(w,\XX_1,\dots,\XX_k)=1$, and
    \item there exists at most one $\XX_1$ such that $\forall \XX_2\;\exists \XX_3\dots Q\XX_k\; M(w,\XX_1,\dots,\XX_k)=1$
\end{itemize}
where $Q$ alternates between $\exists$ and $\forall$, and for all $i$ we have $\XX_i\in\Sigma^*$ and $|\XX_i|\leq p(|w|)$.
\end{definition}
Using the dual characterization of the polynomial hierarchy (\cite{stockmeyer1976polynomial,wrathall1976complete}), it may be shown that an equivalent definition is $\USig{k}=\UP{}^{\Sig{k-1}}$ for all $k\ge 1$.
We informally say a problem is \emph{unambiguous} if some argument shows it can have at most one witness.
Given problems $A$ and $B$, $A\reduce B$ means $A$ is many-one reducible to $B$ in polynomial time.
$\NN$ is the set of positive integers, and $\NN_0:=\NN\cup\{0\}$.
Given two bit strings $\XX$ and $\YY$ we write $\XX\YY$ or $\XX\concat\YY$ for their concatenation. 
We often interpret bit strings as binary numbers, and use operations like $\XX<\YY$.

\begin{definition}
\label{def:S2P}
(\cite{russell1998symmetric,canetti1996more}). A language $L\subseteq\Sigma^*$ is in $\SymP$ if there is a polynomial-time predicate $P(\cdot,\cdot,\cdot)$ such that for all $w\in\Sigma^*$ we have:
\begin{itemize}
    \item $w\in L\implies \exists \XX \;\forall \YY\;\; P(w,\XX,\YY)=1$, and
    \item $w\notin L\implies \exists \YY\; \forall \XX\;\; P(w,\XX,\YY)=0$
\end{itemize}
where $|\XX|$ and $|\YY|$ are bounded by a polynomial function of $|w|$.
\end{definition}

\begin{definition}
\label{def:L2P}
(\cite{korten2024strong}). A language $L\subseteq\Sigma^*$ is in $\LOPClass{}$ if there is a polynomial-time predicate $P(\cdot,\cdot,\cdot)$ and a polynomial $p$ such that for all $w\in\Sigma^*$, $P(w,\cdot,\cdot)$ defines a strict total order on $\bits{p(n)}$ whose minimal element $a$ has $a_1=L(w)$ ($a_1$ being the first bit of $a$).
Minimality of $a$ is defined by $P(w,a,b)=1$ for all $b$.
\end{definition}

It is straightforward that \USig{2} and $\SymP{}$ are closed under polynomial-time many-one reductions. The same holds for the classes \PTW{}, \PCW{}, and \PMA{} defined in \Cref{sec:PTW,sec:PCW,sec:PMA}. Hence, to show that a class $\complexityclass{A}$ is contained in a class $\complexityclass{C}\in\{\USig{2},\SymP{},\PTW{},\PCW{},\PMA{}\}$, it suffices to exhibit an $\complexityclass{A}$-complete language $L$ such that $L\in \complexityclass{C}$.

\section{The Class Polynomial Tournament Winner (\PTW{})}
\label{sec:PTW}
In this section, we introduce a class denoted Polynomial Tournament Winner (\PTW{}). \PTW{} is a subclass of \USig{2} which aims to describe problems of a ``tournament'' nature, where we seek a winner that beats all other participants under a polynomial-time computable criterion. This class contains all unambiguous \Sig{2} problems discussed in this paper, apart from \EMaj{} and its variations, discussed in \Cref{sec:PMA}.
We define \PTW{} via a problem we call \WTournSource{}. \WTournSource{} is the problem of determining the existence of a source in an exponentially large weak tournament. It is described through a Boolean circuit which compares pairs of vertices in polynomial time, and the question is whether there exists a vertex that beats all other vertices, which immediately places the problem in \Sig{2}.

\begin{wbox}
    \WTournSource\\
    \textbf{Input:} Boolean circuit $\CC\colon \bits{2n}\to\bits{}$.\\
    \textbf{Question:}
    $\exists \XX\in\bits{n} \;\; \forall \YY\in\bits{n} \text{ with } \YY\neq \XX\;\; \CC(\XX\concat \YY)=1 \land \CC(\YY\concat \XX)=0$?
\end{wbox}

\begin{definition}
\label{def:PTW}
The class Polynomial Tournament Winner (\PTW{}) is defined by
\[\PTW{}=\{L\subseteq\Sigma^*\colon L\reduce \WTournSource{}\}.\]
\end{definition}

$\CC$ induces a weak tournament where, for all distinct $\XX,\YY$, the edge $(\XX,\YY)$ is present if and only if $\CC(\XX\concat \YY)\ge\CC(\YY\concat \XX)$. In particular, both $(\XX,\YY)$ and $(\YY,\XX)$ are present if $\CC(\XX\concat \YY)=\CC(\YY\concat \XX)$.
Thus, a solution corresponds to a source in the induced tournament.
Indeed, \WTournSource{} is unambiguous, and so $\PTW{}\subseteq\USig{2}$. This is because given two vertices, either they tie against each other or one of them beats the other, and so either way they cannot both be sources.
   
An important result of this section is that $\PTW{}\subseteq \SymP{}$, which implies that $\PTW{}\subseteq \ZPPNP{}$, since $\SymP{}\subseteq\ZPPNP{}$ \cite{cai2007s2p}. 
An interesting step along the way of our proof shows that finding a source in a weak tournament is computationally equivalent to finding one in a tournament (\TournSource{}), as shown in \Cref{thm:TournSource_PTW}.

\begin{wbox}
    \TournSource\\
    \textbf{Input:} Boolean circuit $\CC\colon \bits{2n}\to\bits{}$.\\
    \textbf{Question:}
    $\exists \XX\in\bits{n} \;\; \forall \YY\in\bits{n} \text{ with } \YY\neq \XX\;\; \big(\CC(\XX\concat \YY)=1 \land \CC(\YY\concat \XX)=0\big)\lor \big(\CC(\XX\concat \YY)=\CC(\YY\concat \XX) \land \XX>\YY\big)$?
\end{wbox}

$\CC$ induces a tournament where the edge $(x,y)$ is present if and only if $\big(\CC(\XX\concat \YY)=1 \land \CC(\YY\concat \XX)=0\big)\lor \big(\CC(\XX\concat \YY)=\CC(\YY\concat \XX) \land \XX>\YY\big)$ (we use their numeric values as a tie breaker to ensure it describes a tournament rather than a weak one). Once again, a solution corresponds to a source. The following two theorems establish the main results of the section.

\begin{theorem}
\label{thm:TournSource_PTW}
\TournSource{} is \PTW{}-complete.
\end{theorem}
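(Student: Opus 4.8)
The plan is to prove \PTW{}-completeness of \TournSource{} by exhibiting many-one reductions in \emph{both} directions between \TournSource{} and \WTournSource{}; since \PTW{} is by definition the closure of \WTournSource{} under $\reduce$, and \PTW{} is $\reduce$-closed, this suffices (membership uses one direction, hardness the other together with transitivity of $\reduce$). The easy direction, $\TournSource{} \reduce \WTournSource{}$, gives $\TournSource{} \in \PTW{}$: given a circuit $\CC$ whose \TournSource{}-induced graph is the tournament $T$, define $\CC'$ by $\CC'(u\concat v)=1$ iff $(u,v)\in T$ (this is computable from two evaluations of $\CC$ and a comparison of $u$ and $v$). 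Since $T$ is a tournament, exactly one of $\CC'(u\concat v),\CC'(v\concat u)$ is $1$ for $u\neq v$, so the weak tournament that $\CC'$ induces for \WTournSource{} is $T$ itself (no ties), and the two instances have identical sources.

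The substance is the reverse reduction $\WTournSource{} \reduce \TournSource{}$: turn a weak tournament $G$ on $\bits{n}$ into a genuine tournament with the same ``has a source'' status. Naive tie-breaking fails --- two vertices that tie in $G$ but beat everyone else give a source-free instance, yet breaking the tie creates a source. The fix is to double the vertex set to $\bits{n+1}$, giving each $\XX$ a ``real'' copy $(\XX,0)$ and a ``dummy'' copy $(\XX,1)$, with edges as follows: between two real copies, keep the $G$-edge when it is decisive and break ties arbitrarily (say by numeric value); a dummy $(\YY,1)$ loses to its own real copy $(\YY,0)$ and to every other vertex, \emph{except} that $(\YY,1)$ beats $(\XX,0)$ whenever $\XX\neq\YY$ and $\XX$ ties $\YY$ in $G$; dummy--dummy edges are broken arbitrarily. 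This is a tournament computable by a polynomial-size circuit, and as in the easy direction its edge relation can be encoded directly into a circuit $\CC'$, yielding a legal \TournSource{} instance.

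Correctness is the crux and I would verify it in three short steps. First, no dummy is a source, since $(\YY,1)$ loses to $(\YY,0)$. Second, a real vertex $(\XX,0)$ is a source of the new tournament iff it beats every $(\YY,1)$ --- which forces $\XX$ to tie nobody in $G$ --- and every $(\YY,0)$ --- which, given that $\XX$ ties nobody, forces $\XX$ to beat every $\YY$ in $G$ (because then the real--real edges incident to $(\XX,0)$ are all decisive $G$-edges, never tie-broken); together these say exactly that $\XX$ is a source of $G$. Third, conversely, a source $\XX^*$ of $G$ ties nobody, so no dummy beats $(\XX^*,0)$ and $(\XX^*,0)$ beats all real and all dummy vertices, hence is a source. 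Thus $G$ has a source iff the new tournament does.

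The one subtlety to pin down is that breaking a real--real tie cannot manufacture a spurious source: if a tie-broken win let $(\XX,0)$ beat all real vertices, then $\XX$ ties some $\YY$ in $G$, so $(\XX,0)$ loses to the dummy $(\YY,1)$ and is not a source --- this is precisely what the dummies are for. I expect this to be the main obstacle: designing the gadget so that tie-breaking never creates a new source yet never destroys an existing one; the remaining work (circuit size bounds, checking the edge relation is well-defined as a tournament) is routine bookkeeping.
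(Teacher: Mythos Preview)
Your proof is correct. Both directions are sound, and your verification of the hard direction covers exactly the right points: dummies are never sources; a real copy $(\XX,0)$ that is a source must beat every dummy $(\YY,1)$ (forcing $\XX$ to tie nobody in $G$) and every real copy $(\YY,0)$ (which, given no ties, means every real--real edge incident to $(\XX,0)$ is a decisive $G$-win, not a tie-break); and conversely a $G$-source yields a source in the new tournament. The observation that tie-breaking cannot create a spurious source because the corresponding dummy punishes it is precisely the crux.

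Your construction for $\WTournSource{} \reduce \TournSource{}$ differs from the paper's. The paper blows the vertex set up to $\bits{2n}$ by introducing, for every \emph{unordered pair} $\{\XX,\YY\}$ of original vertices, a ``structure vertex'' $v_{\XX,\YY}$; when $\XX$ and $\YY$ tie in $G$, the three vertices $\XX,\YY,v_{\XX,\YY}$ are placed on a $3$-cycle (so neither original vertex can be a source), and when one beats the other, $v_{\XX,\YY}$ loses to the winner. You instead blow up only to $\bits{n+1}$, giving each vertex $\YY$ a single dummy $(\YY,1)$ that beats exactly those real copies $(\XX,0)$ with which $\YY$ ties in $G$. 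Both gadgets enforce the same invariant---any original vertex involved in a tie acquires an incoming edge from an auxiliary vertex, so tie-breaking among originals cannot manufacture a source---but yours is more economical (one auxiliary per vertex rather than per pair) and the correctness argument is shorter. The paper's per-pair gadget has the minor aesthetic advantage that the witness against a tied vertex is localized to a single $3$-cycle, but this does not buy anything technically here.
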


\begin{theorem}
\label{thm:PTW_S2P}
$\PTW{} \subseteq \SymP{}$.
\end{theorem}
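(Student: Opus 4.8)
The plan is to work with the $\PTW{}$-complete problem $\TournSource{}$ (by \Cref{thm:TournSource_PTW} together with the closure remarks in \Cref{sec:prelims}, it suffices to place this single problem in $\SymP{}$). So fix a circuit $\CC$ inducing a tournament $T$ on vertex set $\bits{n}$. We must exhibit a polynomial-time predicate $P(\CC,\XX,\YY)$ with the symmetric-witness property of \Cref{def:S2P}: if $T$ has a source, then some $\XX$ beats all $\YY$; if $T$ has no source, then some $\YY$ refutes every $\XX$. The natural first attempt is $P(\CC,\XX,\YY) = 1 \iff \XX$ beats $\YY$ in $T$ (or $\XX=\YY$). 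On a yes-instance the unique source $s$ satisfies $\forall \YY\, P(\CC,s,\YY)=1$; the problem is the no-instance direction, because a single vertex $\YY$ need not beat all other vertices in a sourceless tournament — there may be no dominating vertex at all.

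The fix is the classical tournament fact, used to show $\SymP{}$-type bounds (this is the content of the argument that every tournament has a ``king'', i.e. a vertex reaching every other vertex in at most two steps). More precisely: I would let the $\SymP{}$-witnesses be \emph{pairs} of vertices rather than single vertices, and design $P$ so that a pair $(\XX_1,\XX_2)$ acts as a yes-witness when $\XX_1$ beats $\XX_2$ and together they dominate the whole tournament in the sense that every other vertex $\ZZ$ is beaten by $\XX_1$ or by $\XX_2$; dually a pair $(\YY_1,\YY_2)$ acts as a no-witness when it witnesses that a given candidate is not a source. Concretely, the predicate on a yes-witness pair $\XX=(\XX_1,\XX_2)$ and a no-witness pair $\YY=(\YY_1,\YY_2)$ evaluates to $1$ iff $\XX_1$ beats $\XX_2$, $\XX_1$ beats $\YY_1$ and $\YY_2$ or $\XX_2$ beats whichever of them $\XX_1$ does not — i.e. the $\XX$-pair ``covers'' the $\YY$-pair. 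The key combinatorial claims to verify are: (i) on a yes-instance with source $s$, the pair $(s,s')$ for an arbitrary second vertex $s'$ covers every pair $\YY$, since $s$ alone beats everything; (ii) on a no-instance, there is a pair $(\YY_1,\YY_2)$ that is \emph{not} covered by any $\XX$-pair — for this one uses that in a sourceless tournament, for the out-degree-maximal vertex $v$ (a king), pick $\YY_1 = v$ and $\YY_2$ a vertex beating $v$ (exists since $v$ is not a source), and argue no two vertices can jointly beat both $v$ and $\YY_2$ while also being internally consistent; the king property forces $v$ to reach everything in two steps, which is what blocks a covering pair.

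The main obstacle I anticipate is getting the bookkeeping of claim (ii) exactly right: one must choose the no-witness so that it simultaneously defeats \emph{all} possible yes-witness pairs $(\XX_1,\XX_2)$, including degenerate ones where $\XX_1=\XX_2$ or where one coordinate equals $\YY_1$ or $\YY_2$. This is where the argument is most delicate, and it may require taking the no-witness to encode slightly more data — for instance a vertex of maximum out-degree together with a short certificate of sourcelessness — and defining $P$ to check a two-step reachability condition rather than a one-step one. A clean way to organize this is: (a) reduce to $\TournSource{}$; (b) recall/prove the king lemma (max-out-degree vertex reaches all others within distance $2$); (c) define $P$ on pair-witnesses via the ``covering'' condition above and check it is poly-time (one circuit evaluation per adjacency query); (d) verify the yes-direction (trivial, using the source); (e) verify the no-direction using the king lemma. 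Steps (d) and part of (c) are routine; step (e) is the crux.

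Finally, since $\SymP{} \subseteq \ZPPNP{}$ by \cite{cai2007s2p}, this also yields $\PTW{}\subseteq\ZPPNP{}$ as noted, but that requires no further work here.
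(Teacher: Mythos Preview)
Your reduction to \TournSource{} and your handling of the yes-direction are exactly right, and match the paper. The gap is entirely in step~(e), and it is not a bookkeeping issue but a genuine combinatorial obstruction: pair-sized no-witnesses are \emph{insufficient}. Already on the $3$-cycle $a\to b\to c\to a$ (the smallest sourceless tournament), your proposed no-witness fails. Take your candidate $\YY=(v,\YY_2)$ with $v$ a king and $\YY_2$ beating $v$; by symmetry say $\YY=(a,c)$. Then $\XX=(b,c)$ covers it: $b$ beats $c$, $b$ beats $\YY_2=c$, and $c$ beats $\YY_1=a$. One checks that \emph{every} pair $\YY$ is covered by some $\XX$-pair in the $3$-cycle, so no pair works as a universal refutation. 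More generally, for every constant $k$ there exist sourceless tournaments in which no $k$ vertices dominate all others (this is classical; random or Paley tournaments have minimum dominating sets of size $\Theta(\log |V|)$), so no fixed-arity witness scheme can work.

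The king lemma is a red herring here: it gives two-step reachability from a single vertex, but the $\SymP{}$ predicate must be checkable in polynomial time on the witnesses alone, and ``$v$ reaches $u$ in two steps'' hides an existential over the intermediate vertex that you cannot discharge. What you actually need is the Erd\H{o}s dominating-set lemma: every sourceless tournament on $N$ vertices has a set $S$ of size at most $\lceil\log N\rceil+1$ that beats every vertex. With $N=2^n$ this gives $|S|=O(n)$, i.e.\ polynomial in the input, so you can take the no-witness to be such an $S$ and define $P(\CC,\XX,S)=1$ iff no $s\in S$ beats $\XX$. On a yes-instance the source $\XX^*$ satisfies this for every $S$; on a no-instance the dominating $S$ refutes every $\XX$. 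This is precisely what the paper does (\Cref{lem:exists_strong_set}), and it is the missing ingredient in your outline.
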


The containment of \PTW{} in \SymP{} provides a non-trivial upper bound for several problems, such as \ASHG{} and \GDice{}, discussed in \Cref{sec:PCW}. 
In \Cref{sec:PCW} we define the subclass \PCW{} of \PTW{} for which these problems are complete. We further show there that $\PNP{}\subseteq \PCW{}$, thus also providing a lower bound for \PTW{}.

We also obtain a distinct lower bound, relying on the work of \cite{korten2024strong}. They define the class \LOPClass{} based on the problem \textsc{Linear-Order-Principle} (\LOP{}), which essentially seeks the minimal element in a linear order. They show it contains both \PNP{} and \MA{}.
By interpreting linear orders as tournaments, we show that $\LOPClass{}\subseteq\PTW{}$.

\begin{theorem}
\label{thm:L2P_PTW}
$\LOPClass{} \subseteq \PTW{}$.
\end{theorem}

To illustrate the robustness of the definition of \PTW{}, we introduce a generalization of \WTournSource{} which we call \MultiTournSource{}, and show that it is \PTW{}-complete. It generalizes \WTournSource{} by adding a third input string $z$ for the circuit; ignoring this input recovers an instance of \WTournSource{}. 
Intuitively, each $z$ induces a weak tournament $T_z$, and we seek a vertex that serves as a source across all such tournaments.

\begin{wbox}
    \MultiTournSource\\
    \textbf{Input:} Boolean circuit $\CC\colon \bits{3n}\to\bits{}$.\\
    \textbf{Question:}
    $\exists \XX\in\bits{n} \;\; \forall \YY,\ZZ\in\bits{n} \text{ with } \YY\neq \XX\;\; \CC(\XX\concat \YY\concat \ZZ)=1 \land \CC(\YY\concat \XX\concat \ZZ)=0$?
\end{wbox}

\begin{theorem}
\label{thm:coNP_source_PTW}
\MultiTournSource{} is \PTW{}-complete.
\end{theorem}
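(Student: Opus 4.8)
The plan is to prove two directions: membership in \PTW{} (i.e.\ a reduction $\MultiTournSource{}\reduce\WTournSource{}$), and \PTW{}-hardness. The hardness direction is immediate: given an instance $\CC\colon\bits{2n}\to\bits{}$ of \WTournSource{}, define $\CC'\colon\bits{3n}\to\bits{}$ by $\CC'(\XX\concat\YY\concat\ZZ):=\CC(\XX\concat\YY)$, simply ignoring the third block $\ZZ$. Then the quantifier $\forall\ZZ$ is vacuous, and the resulting \MultiTournSource{} question is literally the \WTournSource{} question for $\CC$. Since $\WTournSource{}$ is \PTW{}-complete by definition (\Cref{def:PTW}), and \PTW{} is closed under many-one reductions, this gives \PTW{}-hardness of \MultiTournSource{}.

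The substantive direction is membership, for which I would reduce \MultiTournSource{} to \WTournSource{}. The idea is that the collection $\{T_z\}_z$ of weak tournaments should be amalgamated into a single weak tournament on the same vertex set $\bits{n}$ whose source is exactly a common source of all the $T_z$. Define a new circuit $\CC'\colon\bits{2n}\to\bits{}$ so that the induced edge relation on a pair $\{\XX,\YY\}$ reflects the ``worst case'' over all $\ZZ$: declare $\XX$ to beat $\YY$ only if $\XX$ beats $\YY$ in $T_z$ for \emph{every} $z$, and otherwise put in whatever edges are needed to make $\{\XX,\YY\}$ a tie (or a loss for $\XX$). Concretely, one wants $\CC'(\XX\concat\YY)=1$ iff for all $\ZZ\in\bits{n}$, $\CC(\XX\concat\YY\concat\ZZ)=1$; this is a $\forall$-statement over a polynomially long string, so it is \emph{not} directly computable by a polynomial-size circuit. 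The fix is to route this through the fact that \WTournSource{} is a \Sig{2}-style problem whose comparison need only be polynomial-time \emph{given the right certificate}: more precisely, I would instead appeal to \Cref{thm:TournSource_PTW} and the $\SymP{}$ machinery, or — cleaner — mimic the standard trick used for \coNP{} in tournament reductions: enlarge the vertex set so that each ``vertex'' of the new tournament is a pair $(\XX,\text{dummy})$ and encode the universal quantifier over $\ZZ$ by having the prospective source be beaten by an auxiliary vertex whenever some $\ZZ$ witnesses a failure of $\XX$ against some $\YY$.

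A cleaner route, which I would actually pursue, is to compose reductions through \Cref{thm:PTW_S2P} and the known closure properties: show directly that \MultiTournSource{} lies in \SymP{} and is \coNP{}-hard — wait, that only gives membership in \SymP{}, not in \PTW{}. So instead I return to the amalgamation idea but make the comparison circuit polynomial by a padding/witness device: replace each vertex $\XX$ by the set of ``annotated'' vertices $\XX\concat\ZZ$ for $\ZZ\in\bits{n}$; in the new weak tournament on $\bits{2n}$, let $\XX\concat\ZZ$ beat $\XX'\concat\ZZ'$ according to $\CC(\XX\concat\XX'\concat\ZZ)$ vs.\ $\CC(\XX'\concat\XX\concat\ZZ')$ when $\XX\neq\XX'$, and within a block ($\XX=\XX'$) break ties by numeric value of $\ZZ$. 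One then checks that $\XX$ is a common source of all $T_z$ in the original instance if and only if \emph{some} annotated vertex $\XX\concat\ZZ^{*}$ is a source of the new weak tournament — the point being that a source $\XX\concat\ZZ^{*}$ must beat $\XX\concat\ZZ'$ for all $\ZZ'$ (forcing $\ZZ^{*}$ maximal in its block, which is harmless) and must beat $\XX'\concat\ZZ'$ for all $\XX'\neq\XX$ and all $\ZZ'$, which unwinds to: for all $\XX'\neq\XX$ and all $\ZZ'$, $\CC(\XX\concat\XX'\concat\ZZ^{*})=1$ and $\CC(\XX'\concat\XX\concat\ZZ')=0$ — i.e.\ exactly the \MultiTournSource{} condition with the roles of the two $z$'s handled by ranging over all $\ZZ'$ on the losing side and a single (arbitrary) $\ZZ^{*}$ on the winning side. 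I expect the main obstacle to be getting this equivalence exactly right: ensuring the induced structure is genuinely a weak tournament (no missing \emph{and} no problematic double edges that would spuriously create or destroy a source), and verifying that the asymmetry between the ``$\forall\ZZ$ on the loser'' and ``$\exists\ZZ^{*}$ on the winner'' sides collapses correctly because the \MultiTournSource{} condition $\CC(\XX\concat\YY\concat\ZZ)=1$ must hold for \emph{all} $\ZZ$, so a single fixed $\ZZ^{*}$ on the winning side loses no generality. Once the equivalence is established, the map $\CC\mapsto\CC'$ is clearly polynomial-time computable, completing the reduction and hence the proof.
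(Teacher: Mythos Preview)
Your hardness direction is correct and matches the paper's. The membership direction, however, has a genuine gap in the backward implication.

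In your annotated-vertex construction, you set $\CC'(\XX\ZZ\concat\XX'\ZZ')=\CC(\XX\concat\XX'\concat\ZZ)$, so ``$\XX\ZZ$ beats $\XX'\ZZ'$'' unwinds to $\CC(\XX\concat\XX'\concat\ZZ)=1 \land \CC(\XX'\concat\XX\concat\ZZ')=0$. If $\XX^{*}\ZZ^{*}$ is a source (forcing $\ZZ^{*}=1^{n}$ by your in-block tiebreak), ranging over opponents $\XX'\ZZ'$ does give $\CC(\XX'\concat\XX^{*}\concat\ZZ')=0$ for \emph{all} $\ZZ'$, but only gives $\CC(\XX^{*}\concat\XX'\concat\ZZ^{*})=1$ for the \emph{single} value $\ZZ^{*}$. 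The \MultiTournSource{} condition requires the first conjunct for all $\ZZ$, so the backward direction fails. Concretely, with $n=1$: set $\CC(0\concat 1\concat 0)=0$, $\CC(0\concat 1\concat 1)=1$, $\CC(1\concat 0\concat 0)=\CC(1\concat 0\concat 1)=0$. Then neither $0$ nor $1$ solves \MultiTournSource{}, yet in your reduced instance $01$ beats $00$ (in-block), beats $10$ (since $\CC(0\concat 1\concat 1)=1$ and $\CC(1\concat 0\concat 0)=0$), and beats $11$ (since $\CC(0\concat 1\concat 1)=1$ and $\CC(1\concat 0\concat 1)=0$), so $01$ is a source. Your closing remark that ``a single fixed $\ZZ^{*}$ on the winning side loses no generality'' is valid only in the forward direction; it is precisely what breaks in the backward direction.

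The paper's fix is simple but decisive: make the comparison circuit use the \emph{opponent's} annotation for \emph{both} queries to $\CC$, and restrict potential winners to a canonical form. Define $\CC'(\XX_{1}\XX_{2}\concat\YY\ZZ)=1$ iff $\XX_{2}=0^{n}$ and $\big(\XX_{1}\neq\YY\implies \CC(\XX_{1}\concat\YY\concat\ZZ)=1\land\CC(\YY\concat\XX_{1}\concat\ZZ)=0\big)$. Now a source must be of the form $\XX^{*}0^{n}$, and beating every $\YY\ZZ$ forces both conjuncts for every $\ZZ$, while the reverse evaluation $\CC'(\YY\ZZ\concat\XX^{*}0^{n})$ is automatically $0$ whenever $\ZZ\neq 0^{n}$. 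This closes exactly the hole in your construction.
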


\section{The Class Polynomial Condorcet Winner (\PCW{})}
\label{sec:PCW}
In this section, we introduce a subclass of \PTW{} termed Polynomial Condorcet Winner (\PCW{}), inspired by the notion of a Condorcet winner \cite{Cond85a}. 
This class captures tournament problems in which the underlying tournament is induced by pairwise majority comparisons among candidates, determined by a polynomial-size set of voters. 
Each voter is represented by a Boolean circuit that succinctly encodes the voter’s preferences.
Given a bit string representing a candidate, each circuit outputs a numerical value, thereby inducing a weak total order over the preferences.
The question is whether there exists a candidate who defeats every other candidate in a pairwise majority vote---namely, a Condorcet winner. 
To formalize this notion, we define the problem \CKTCondorcet{}, and we define \PCW{} as the class of problems polynomial-time reducible to it.

\begin{wbox}
    \CKTCondorcet\\
    \textbf{Input:} A sequence of Boolean circuits $\fCC=\langle\CC_1,\dots,\CC_m\rangle$, each with $n$ inputs and $n$ outputs.\\
    \textbf{Question:}
    $\exists \XX\in\bits{n} \;\; \forall \YY\in\bits{n} \text{ with } \YY\neq \XX\;\; \sum_{i=1}^m\big(\sgn(\CC_i(\XX)-\CC_i(\YY))\big)>0$?\\
    A solution is called a \emph{Condorcet string}.
\end{wbox}

\begin{definition}
The class Polynomial Condorcet Winner (\PCW{}) is defined by
\[\PCW{}=\{L\subseteq\Sigma^*\colon L\reduce \CKTCondorcet{}\}.\]
\end{definition}

We establish lower and upper computational bounds on \PCW{}, formalized as follows.

\begin{theorem}
\label{thm:PCW_PTW}
$\PCW{}\subseteq\PTW{}$.
\end{theorem}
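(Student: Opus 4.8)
The plan is to show $\CKTCondorcet{}\reduce\WTournSource{}$; since $\PCW{}$ is defined as the closure under $\reduce$ of $\CKTCondorcet{}$, and $\PTW{}$ is closed under $\reduce$ (as noted after Definition~\ref{def:S2P}), this suffices. Given an instance $\fCC=\langle\CC_1,\dots,\CC_m\rangle$ of $\CKTCondorcet{}$ with each $\CC_i\colon\bits{n}\to\bits{n}$, I would build a single Boolean circuit $\CC\colon\bits{2n}\to\bits{}$ as follows. On input $\XX\concat\YY$, the circuit computes $\CC_i(\XX)$ and $\CC_i(\YY)$ for every $i$, compares them as binary numbers, accumulates the signed count $s(\XX,\YY)=\sum_{i=1}^m\sgn(\CC_i(\XX)-\CC_i(\YY))$, and outputs $1$ iff $s(\XX,\YY)>0$. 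This is polynomial-size since $m$ and the $\CC_i$ are polynomial and the arithmetic is on $n$-bit integers with bounded sums.

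The key observation is that $s(\YY,\XX)=-s(\XX,\YY)$, so the weak tournament induced by $\CC$ is in fact always a genuine tournament away from ties: for distinct $\XX,\YY$, either $s(\XX,\YY)>0$ (so $\CC(\XX\concat\YY)=1$, $\CC(\YY\concat\XX)=0$, i.e.\ $\XX$ beats $\YY$), or $s(\XX,\YY)<0$ (symmetric), or $s(\XX,\YY)=0$ (so $\CC(\XX\concat\YY)=\CC(\YY\concat\XX)=0$, a tie). Now I verify the reduction is correct in both directions. If $\XX$ is a Condorcet string for $\fCC$, then $s(\XX,\YY)>0$ for all $\YY\neq\XX$, so $\CC(\XX\concat\YY)=1\land\CC(\YY\concat\XX)=0$ for all such $\YY$, making $\XX$ a source of the weak tournament; hence the $\WTournSource{}$ instance is a yes-instance. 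Conversely, a source $\XX$ of the induced weak tournament must, for each $\YY\neq\XX$, satisfy $\CC(\XX\concat\YY)=1\land\CC(\YY\concat\XX)=0$, which by construction forces $s(\XX,\YY)>0$; hence $\XX$ is a Condorcet string. (Because a source cannot merely tie, the ``tie'' case is automatically excluded — this is exactly why the construction lands us in $\WTournSource{}$ rather than needing the tie-breaking of $\TournSource{}$.)

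I expect no serious obstacle here: the reduction is essentially a syntactic repackaging, and the main thing to get right is the bookkeeping that $\sgn(\CC_i(\XX)-\CC_i(\YY))$ can be computed by a small circuit (an $n$-bit magnitude comparator outputting a value in $\{-1,0,1\}$) and that summing $m$ such values and testing positivity is polynomial-size. The only subtlety worth stating explicitly is the anti-symmetry $s(\XX,\YY)=-s(\XX,\YY)$ ensuring that the ``$\land$'' condition in $\WTournSource{}$ lines up precisely with ``$s>0$'', so that sources of the induced weak tournament coincide exactly with Condorcet strings; I would make that the one displayed line of the argument.
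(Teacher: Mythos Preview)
Your proposal is correct and follows essentially the same approach as the paper: reduce $\CKTCondorcet{}$ to $\WTournSource{}$ by building a circuit that outputs $1$ on $\XX\concat\YY$ iff $\XX$ is more popular than $\YY$ in the original instance, and observe that Condorcet strings correspond exactly to sources. Your write-up is more detailed than the paper's (which is a two-sentence proof), but the idea is identical; note the small typo in your last paragraph, where $s(\XX,\YY)=-s(\XX,\YY)$ should read $s(\YY,\XX)=-s(\XX,\YY)$.
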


\begin{theorem}
\label{thm:PNP_PCW}
$\PNP{}\subseteq\PCW{}$.
\end{theorem}

Our bounds can be extended to the parameterized version of \CKTCondorcet{}, where the number of input circuits is a fixed parameter $k$.
For $k=1$ and $k=2$, the resulting problems are shown to be \PNP{}-complete in \Cref{sec:PNP} (when $k=1$ we obtain the problem \UOPT{}, defined in \Cref{sec:PNP}).
For any $k\ge 3$, we do not know the problem's exact complexity, but given the results for $k=1$ and $k=2$ it is naturally \PNP{}-hard and in \PCW{}.

\begin{wbox}
    $\kCKTCondorcet{k}$\\
    \textbf{Parameter:} A fixed, positive integer $k$.\\
    \textbf{Input:} Sequence of $k$ Boolean circuits $\fCC=\langle\CC_1,\dots,\CC_k\rangle$, each with $n$ inputs and $n$ outputs.\\
    \textbf{Question:}
    $\exists \XX\in\bits{n} \;\; \forall \YY\in\bits{n} \text{ with } \YY\neq \XX\;\; \sum_{i=1}^k\big(\sgn(\CC_i(\XX)-\CC_i(\YY))\big)>0$?
\end{wbox}

\begin{theorem}
\label{thm:Condorcet_parameterized}
Fix $k\geq 1$. We have that:
\begin{enumerate}
    \item For all $k'>k$, we have that $\kCKTCondorcet{k}\reduce \kCKTCondorcet{k'}$.\label{thm:Condorcet_parameterized:kk'}
    \item $\kCKTCondorcet{k}\in\PCW{}$\label{thm:Condorcet_parameterized:PCW}.
    \item $\kCKTCondorcet{k}$ is \PNP{}-hard.\label{thm:Condorcet_parameterized:PNP}
\end{enumerate} 
\end{theorem}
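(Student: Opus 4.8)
The plan is to handle the three parts in turn; parts~\ref{thm:Condorcet_parameterized:kk'} and~\ref{thm:Condorcet_parameterized:PCW} are essentially definitional, so almost all of the content lives in part~\ref{thm:Condorcet_parameterized:PNP}, which I would reduce to the single-circuit case.

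For part~\ref{thm:Condorcet_parameterized:kk'} I would pad with constant circuits. Given an instance $\langle\CC_1,\dots,\CC_k\rangle$ of $\kCKTCondorcet{k}$, with each $\CC_i$ on $n$ inputs and $n$ outputs, append $k'-k$ copies of the (polynomial-time constructible) circuit $\CC_0$ that outputs $0^n$ on every input. Since $\sgn(\CC_0(\XX)-\CC_0(\YY))=\sgn(0)=0$ for all $\XX,\YY$, we have $\sum_{i=1}^{k'}\sgn(\CC_i(\XX)-\CC_i(\YY))=\sum_{i=1}^{k}\sgn(\CC_i(\XX)-\CC_i(\YY))$, so the set of Condorcet strings --- and hence the answer --- is preserved. (A tie contributes $0$, not $\pm1$, to the majority sum, so there is no parity obstruction to changing the number of voters.) For part~\ref{thm:Condorcet_parameterized:PCW}, observe that $\kCKTCondorcet{k}$ is precisely $\CKTCondorcet{}$ restricted to circuit sequences of length exactly $k$ --- both ask $\exists\XX\,\forall\YY\neq\XX\colon\sum_i\sgn(\CC_i(\XX)-\CC_i(\YY))>0$, with $m=k$ --- so the identity map is a polynomial-time many-one reduction $\kCKTCondorcet{k}\reduce\CKTCondorcet{}$, and thus $\kCKTCondorcet{k}\in\PCW{}$ by definition.

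For part~\ref{thm:Condorcet_parameterized:PNP}, I would first note that $\kCKTCondorcet{1}$ is exactly the problem \UOPT{} of \Cref{sec:PNP}: for a single circuit, $\sgn(\CC_1(\XX)-\CC_1(\YY))>0$ iff $\CC_1(\XX)>\CC_1(\YY)$, so the question becomes whether $\CC_1$ has a unique output-maximizing input. Its \DeltaP{2}-hardness is the substantive content (established in \Cref{sec:PNP}) and is obtained by a Papadimitriou-style reduction from the \DeltaP{2}-complete problem \UTSP{}: from a complete integer-weighted graph $G$ on $m$ vertices with tour-length upper bound $M$, one builds a circuit $\CC$ that on an input encoding a Hamiltonian tour $\tau$ in a fixed canonical form (so each tour has a single encoding) outputs the strictly positive score $M-\mathrm{len}(\tau)$, and outputs $0$ on every input that is not a valid encoding; then $G$ has a unique optimal tour iff $\CC$ has a unique maximizer. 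Finally, chaining this with the instance $\kCKTCondorcet{1}\reduce\kCKTCondorcet{k}$ of part~\ref{thm:Condorcet_parameterized:kk'} gives \DeltaP{2}-hardness of $\kCKTCondorcet{k}$ for every fixed $k\ge1$.

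The only step with genuine difficulty is the \DeltaP{2}-hardness of the base case $k=1$; within it the fiddly points are choosing the tour encoding so that two distinct optimal tours do not collapse to a single maximizer (fix the start vertex and break the orientation symmetry) and arranging the scores to be strictly positive so that a real tour always strictly beats every non-encoding --- routine once set up, and a direct adaptation of Papadimitriou's argument that \UTSP{} is \DeltaP{2}-complete. Parts~\ref{thm:Condorcet_parameterized:kk'} and~\ref{thm:Condorcet_parameterized:PCW} are immediate.
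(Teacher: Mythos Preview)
Your proposal is correct and matches the paper's proof essentially line for line: pad with constant-zero circuits for part~\ref{thm:Condorcet_parameterized:kk'}, use the identity reduction into \CKTCondorcet{} for part~\ref{thm:Condorcet_parameterized:PCW}, and for part~\ref{thm:Condorcet_parameterized:PNP} identify the $k=1$ case with \UOPT{}, invoke its \DeltaP{2}-hardness via the reduction from \UTSP{}, and lift to general $k$ using part~\ref{thm:Condorcet_parameterized:kk'}. Your treatment of the \UTSP{}-to-\UOPT{} encoding is in fact more careful than the paper's brief sketch in \Cref{lem:UOPT}.
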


Unlike \PNP{}, whose complete problems typically have a weak total order over the solution space, \PCW{} provides us with a tool to examine problems with intransitive solution spaces\footnote{While intransitivity hints as to why problems are difficult to place in $\PNP{}$, it does not immediately imply they are not in \PNP{}. For instance, the solutions of $\kCKTCondorcet{2}$ do not necessarily induce a total order, but we show it is \PNP{}-complete (\Cref{thm:pnp_results}). 
To see why, consider two voters $v_1$ and $v_2$ and four candidates $a$, $b$, $c$, and $d$. 
Voter $v_1$ assigns the values $(4,4,3,5)$ and voter $v_2$ assigns $(2,1,5,3)$ to $(a,b,c,d)$, respectively.
It may be verified that pairwise majority comparisons yield $a\succ b$, $b\sim c$, $c\sim d$, and $d\succ a$, showing intransitivity.\label{foot:two_cond_intransitive}}, as is the case for the Condorcet winner problem \cite{Cond85a}.

An appealing property of \PCW{} is that it has natural complete problems (not determined by arbitrary Boolean circuits).
The first of those is the existence of \emph{strongly popular} partitions in \emph{additively separable hedonic game} (ASHG).
In an ASHG, agents need to be partitioned into coalitions, and each agent's utility is the sum of her valuations for the other agents in her coalition. 
A partition is strongly popular if it wins a majority vote against any other partition \cite{Gard75a}. 
Determining the existence of such a partition was shown to be \coNP{}-hard in \cite{BrBu20a}.
The question of its precise complexity was raised in \cite{BrBu20a,BG24popularity}. 
It seems that \PCW{} is the correct class for this problem, as
\cite{gilboa2026StrongPopularity} shows that it is \PCW{}-complete.

\begin{theorem}
\label{thm:ashg_pop_pcw}
(\cite{gilboa2026StrongPopularity}).
\ASHG{} is \PCW{}-complete.
\end{theorem} 

It is not too difficult to show that \PCW{} contains the strong popularity existence question for any type of hedonic games where utilities are induced from cardinal valuations to individual agents, since each agent can be easily simulated by a Boolean circuit.
Thus, \PCW{} provides an upper bound for strong popularity in many hedonic-games variants, such as fractional and modified fractional hedonic games, that previously had no upper bound better than the trivial \Sig{2} bound.
We suspect that those problems are in fact \PCW{}-complete.

We proceed with another problem whose solutions are intransitive, which we term \GDice{}, and its Boolean-circuit generalization \CKTDice{}. 
The motivation for these problems originates in the well-studied mathematical concept of \emph{intransitive dice} (see \Cref{sec:RL}). Suppose we have a set of dice, each with $m$ faces, and on each face is a number. Each die has a probability of $\frac{1}{m}$ to roll each of its faces. We say die $x$ \emph{beats} die $y$ if the probability\footnote{In the literature often a probability greater than $\frac{1}{2}$ is required.} that $x$ rolls a higher number than $y$ is greater than the probability that $y$ rolls a higher number than $x$. 
Similarly to popularity, the dice game is also intransitive; that is, we may have a set of dice that beat each other in a cyclic manner.

Consider thus the problem of deciding whether there exists a die that beats all others in a given set of dice. 
When the set is small, this can be easily checked by iterating over all pairs of dice. 
However, when the set of dice is large this becomes an interesting computational problem. We introduce two formulations for this problem. In the more general formulation, \CKTDice{}, we let $n$-bit strings represent dice labels, and their faces are generated by the outputs of a collection of Boolean circuits whose input is the die label. The second formulation, \GDice{}, is more natural in that it is defined on a directed graph with weighted edges. Each partition $\pi$ of the vertices induces a die, where each vertex $v$ corresponds to a face; the value of that face is the weighted sum of outgoing edges from $v$ to other vertices in the set that $v$ belongs to by $\pi$.
We will show that, for both formulations, determining the existence of a winning die is \PCW{}-complete.

\begin{wbox}
    \CKTDice\\ 
    \textbf{Input:} Sequence of Boolean circuits $\fCC=\langle\CC_1,\dots,\CC_m\rangle$, each with $n$ inputs and $n'$ outputs.\\
    \textbf{Question:}
    $\exists \XX\in\bits{n} \;\; \forall \YY\in\bits{n} \text{ with } \YY\neq \XX\;\; \sum_{i=1}^m\sum_{j=1}^m\big(\sgn(\CC_i(\XX)-\CC_j(\YY))\big)>0$?
    
    We interpret $\CC_i(x)$ as the value on the $i$'th face of die $x$, as signed integers written in binary.
\end{wbox}

\begin{wbox}
    \GDice\\ 
    \textbf{Input:} Directed graph $G=(V,w)$, where $|V|=n$ and $w$ is an edge-weight function $w:V\times V\to \RR$ on all edges, including self loops.\\
    \textbf{Question:}
    $\exists \text{partition }\pi^* \;\; \forall \text{partition }\pi' \text{ with } \pi'\neq \pi^*\;\; \sum_{i=1}^n\sum_{j=1}^n\big(\sgn(s_i(\pi^*)-s_j(\pi'))\big)>0$?

    Where $\pi(i)$ denotes the set to which vertex $i$ belongs in the partition $\pi$, and $s_i(\pi):=\sum_{l\in\pi(i)}w_i(l)$.
\end{wbox}

\begin{theorem}
\label{thm:all_dice_PCW_complete}
\GDice{} and \CKTDice{} are \PCW{}-complete.
\end{theorem}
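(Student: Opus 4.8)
The plan is to prove, for each of the two formulations, both membership in $\PCW{}$ and $\PCW{}$-hardness. Since $\PCW{}$ is closed under polynomial-time many-one reductions, it suffices to connect \GDice{} and \CKTDice{} with \CKTCondorcet{} and with the $\PCW{}$-complete problem \ASHG{} of \Cref{thm:ashg_pop_pcw}. Three of the four reductions I need are routine. The first is a bridge $\GDice{}\reduce\CKTDice{}$: encode a partition $\pi$ of a vertex set $V$ by the bit string recording, for every $v\in V$, the minimum element of $\pi(v)$; let circuit $\CC_i$ check that its input is such a valid encoding and, if so, output $s_i(\pi)=\sum_{l\in\pi(i)}w(i,l)$ as a signed integer, sending every non-encoding to one fixed ``dummy'' die all of whose faces lie strictly below every $s_i$-value attainable in the instance. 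Then no dummy die beats any genuine die (nor any other dummy die, since all dummies are the same die), the genuine dice reproduce the \GDice{} instance exactly, and --- since two partitions with the same multiset of face values tie in both problems --- a winning partition exists iff a winning die does.

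For hardness I use a common ``bucketing'' trick. To get $\CKTCondorcet{}\reduce\CKTDice{}$, replace voter $\CC_i$ by the die circuit $\CC_i'(\XX):=i\cdot B+\CC_i(\XX)$ for a large enough $B$: for $i\neq j$ the term $\sgn(\CC_i'(\XX)-\CC_j'(\YY))$ equals the constant $\sgn(i-j)$, so the off-diagonal terms cancel in pairs and $\sum_{i,j}\sgn(\CC_i'(\XX)-\CC_j'(\YY))=\sum_i\sgn(\CC_i(\XX)-\CC_i(\YY))$, which is the Condorcet comparison verbatim; hence \CKTDice{} is $\PCW{}$-hard. Because edge-weights cannot emulate arbitrary circuits, \CKTDice{} does not obviously reduce to \GDice{}, so I prove $\PCW{}$-hardness of \GDice{} separately via $\ASHG{}\reduce\GDice{}$ --- again by bucketing, this time through self-loops. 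Given additively separable valuations $v_a(b)$, keep the same agent/vertex set, set $w(i,l):=v_i(l)$ for $l\neq i$, and $w(i,i):=i\cdot M$ for $M$ larger than twice $|V|\max_{a,b}|v_a(b)|$. Then $s_i(\pi)=i\cdot M+u_i(\pi)$ (with $u_i$ the hedonic utility), the off-diagonal dice terms cancel as before, and $\sum_{i,j}\sgn(s_i(\pi^*)-s_j(\pi'))=\sum_i\sgn(u_i(\pi^*)-u_i(\pi'))$, which is positive exactly when $\pi^*$ wins the strict majority vote against $\pi'$; so a strongly popular partition exists iff a winning partition does.

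What remains, and what I expect to be the main obstacle, is $\CKTDice{}\reduce\CKTCondorcet{}$; with the bridge above this places both problems in $\PCW{}$. The difficulty is structural: the dice comparison $\sum_{i,j}\sgn(\CC_i(\XX)-\CC_j(\YY))$ pits every face of $\XX$ against every face of $\YY$, whereas a Condorcet voter applies one circuit symmetrically to both candidates, and one can check that no voter set whose values depend only on a die's own faces reproduces this comparison --- already the natural candidate, voters that compare $k$-th order statistics, fails (for instance $\XX=\{1,5,6,7\}$ dice-beats $\YY=\{2,3,4,8\}$, yet their sorted faces split two--two). The route I would take is to abandon preserving the comparison and preserve only winner existence, using the identity $\sum_{i,j}\sgn(\CC_i(\XX)-\CC_j(\YY))=R_{\XX}-R_{\YY}$, where $R_{\ZZ}$ is the sum of the mid-ranks of $\ZZ$'s faces in the pooled multiset $\{\CC_i(\XX)\}_i\cup\{\CC_j(\YY)\}_j$ (equivalently, the difference of the two Mann--Whitney rank sums). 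This re-reads ``$\XX$ dice-beats $\YY$'' as a majority assertion about pooled ranks, suggesting a \CKTCondorcet{} instance whose voters are threshold/rank comparisons derived from the given circuits. The technical heart is then to produce a polynomially sized such voter set under which a (necessarily unique) Condorcet winner exists precisely when \CKTDice{} admits a winning die, despite each voter seeing only one die at a time; the bridge and the two bucketing reductions, by contrast, are routine.
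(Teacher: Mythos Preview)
Your hardness reductions are correct and essentially match the paper's: the paper proves $\ASHG{}\reduce\GDice{}$ by exactly your self-loop bucketing, and your direct $\CKTCondorcet{}\reduce\CKTDice{}$ via face-bucketing is a clean alternative to the paper's route through \GDice{}. The bridge $\GDice{}\reduce\CKTDice{}$ is likewise the same.

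The gap is in the containment direction, as you anticipate. Your rank-sum identity $\sum_{i,j}\sgn(\CC_i(\XX)-\CC_j(\YY))=R_{\XX}-R_{\YY}$ is correct, but the ranks $R_{\XX},R_{\YY}$ are computed in the \emph{pooled} multiset and therefore depend on both dice simultaneously; a Condorcet voter who sees only one die cannot compute them, and threshold voters do not recover the comparison either (the number of faces of $\XX$ below a threshold $t$ tells you nothing about how $\XX$'s faces interleave with $\YY$'s). You have correctly located the obstacle but not resolved it.

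The paper takes a completely different route. It first reduces \CKTDice{} to a \emph{strict} version in which lexicographic order breaks dice-ties (this is needed so that a dice-tie cannot become a spurious Condorcet winner after the approximation below), and further massages the instance so that no two dice tie at all. Then it observes that a \emph{randomized} reduction works: create $M\gg m$ Condorcet voters, where voter $i$ is a uniformly random one of the $m$ face circuits; with high probability every pair $(a,b)\in[m]^2$ occurs as $(\text{voter }i\text{'s circuit on }\XX,\ \text{voter }i\text{'s circuit on }\YY)$ about $M/m^2$ times, so the Condorcet margin is approximately $\tfrac{M}{m^2}\sum_{a,b}\sgn(\CC_a(\XX)-\CC_b(\YY))$. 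The actual content is the derandomization: the paper invokes a poly-constructible $\epsilon$-pairwise code over $[m]^{M}$ of size $2^n$ (a result communicated by A.~Ta-Shma and N.~Ta-Shma), and sets voter $i$'s circuit on die $\XX$ to be $\CC_{S(\XX)_i}$, where $S(\XX)$ is the $\XX$-th codeword. The pairwise-balance property guarantees that for \emph{every} pair $(\XX,\YY)$ the pair-frequencies deviate from $1/m^2$ by at most $\epsilon$; taking $\epsilon<1/(2m^4)$ makes the approximation exact in sign whenever the dice margin is nonzero. Without this randomize-then-derandomize idea (or an equivalent construction), the containment argument is not complete.
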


\section{The Class Polynomial Majority Argument (\PMA{})}
\label{sec:PMA}
In this section we introduce another subclass of \USig{2}, termed Polynomial Majority Argument (\PMA{}). \PMA{} captures unambiguous problems in \Sig{2} whose unambiguity follows from a majority argument, intuitively described as follows. 
We are given a directed bipartite graph $G=(X,Y,E)$ with edges from $X$ to $Y$, where the existence of edges can be efficiently queried by a Boolean circuit. Assume that $|E|<2|Y|$, a property that we will show can be syntactically enforced.
We then ask whether there exists a vertex $x\in X$ such that for every $y\in Y$, $(x,y)\in E$. 
Unambiguity follows since two vertices cannot both claim ownership for a majority of the edges.
Formally:

\begin{wbox}
\label{def:EMaj}
    \EMaj{}\\ 
    \textbf{Input:} Two Boolean circuits $\CC\colon \bits{n+m}\to\bits{m+1}$ and $\VV\colon \bits{m+1}\to\bits{n+m}$, and integer $k\ge 1$.\\
    \textbf{Question:}
    $\exists \XX\in\bits{n} \;\; \forall \YY\in\bits{m}\;\; \VV(\CC(\XX\concat \YY))=\XX \YY \;\land\; \CC(\XX\concat \YY) \ge k$?
\end{wbox}

\begin{definition}
\label{def:PMA}
The class Polynomial Majority Argument (\PMA{}) is defined by 
\[\PMA{}=\{L\subseteq\Sigma^*\colon L\reduce \EMaj{}\}.\]
\end{definition}

We interpret $\{l\in\bits{m+1}\colon l \ge k\}$ as the set of admissible edge-labels.
The circuit $\VV$ (``verifier'') ensures that edges are correctly mapped in the reverse direction---from the label $\CC(\XX\concat\YY)$ back to the vertex pair $(\XX,\YY)$. 
Since there are only $2^{m+1}-k$ admissible labels, this enforces an upper bound on the number edges.
Moreover, as evident from its syntactic form, \EMaj{} is in \Sig{2}.

\begin{theorem}
\label{thm:PMA_Usig2}
$\PMA{}\subseteq\USig{2}$.
\end{theorem}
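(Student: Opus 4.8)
The plan is to use the fact, recorded in \Cref{sec:prelims}, that $\USig{2}$ is closed under polynomial-time many-one reductions; since $\PMA{}$ is by definition the $\reduce$-downward closure of $\{\EMaj{}\}$, it suffices to show $\EMaj{}\in\USig{2}$. This amounts to two things. First, $\EMaj{}\in\Sig{2}$: this is immediate from its syntactic form, a single $\exists\forall$ alternation over polynomially bounded strings with a polynomial-time checkable matrix (as already observed after the problem statement). Second, $\EMaj{}$ is unambiguous, i.e., for every input $(\CC,\VV,k)$ there is at most one $\XX\in\bits{n}$ with $\forall \YY\in\bits{m}\;\big(\VV(\CC(\XX\concat\YY))=\XX\YY \land \CC(\XX\concat\YY)\ge k\big)$. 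Establishing this second point is the whole content of the lemma.

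For unambiguity I would argue via the ``majority'' counting sketched informally in the section. Set $A:=\{\,l\in\bits{m+1}\colon l\ge k\,\}$, the admissible labels, and note $|A|=2^{m+1}-k\le 2^{m+1}-1<2\cdot 2^{m}=2|Y|$, using $k\ge 1$ and $|Y|=2^{m}$. Now fix a purported solution $\XX$ and consider the map $\YY\mapsto \CC(\XX\concat\YY)$ on $\bits{m}$; since $\XX$ is a solution, its image lies in $A$. This map is injective: if $\CC(\XX\concat\YY_1)=\CC(\XX\concat\YY_2)=l$, then applying $\VV$ gives $\XX\YY_1=\VV(l)=\XX\YY_2$, hence $\YY_1=\YY_2$. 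So a single solution $\XX$ ``owns'' exactly $2^{m}$ distinct labels of $A$. The same composition-with-$\VV$ argument shows that two distinct solutions own disjoint label sets: if $\CC(\XX\concat\YY)=\CC(\XX'\concat\YY')=l$ then $\XX\YY=\VV(l)=\XX'\YY'$, forcing $\XX=\XX'$. Therefore two distinct solutions would jointly require $2\cdot 2^{m}=2^{m+1}$ distinct elements of $A$, contradicting $|A|<2^{m+1}$. Hence at most one solution exists, so $\EMaj{}\in\USig{2}$, and thus $\PMA{}\subseteq\USig{2}$.

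I expect the counting argument to be the only real step; everything else is bookkeeping. The point needing care is the double role of the verifier circuit $\VV$ — once to make $\YY\mapsto\CC(\XX\concat\YY)$ injective within a fixed solution, and once to make the label sets of two different solutions disjoint — combined with the observation that the threshold $k\ge 1$ is exactly what forces $|A|$ to be strictly below $2|Y|$. I would also verify that the stated arities line up, namely $\CC\colon\bits{n+m}\to\bits{m+1}$ and $\VV\colon\bits{m+1}\to\bits{n+m}$, so that $\VV(\CC(\XX\concat\YY))$ is a length-$(n+m)$ string that is meaningfully compared to $\XX\YY$, and that the comparison $\CC(\XX\concat\YY)\ge k$ is the numeric comparison of an $(m+1)$-bit string with the integer $k$.
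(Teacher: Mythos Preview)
Your proposal is correct and follows essentially the same pigeonhole-via-$\VV$ argument as the paper: injectivity of $\YY\mapsto\CC(\XX\concat\YY)$ from the verifier condition, disjointness of label sets for distinct solutions, and the count $|A|=2^{m+1}-k<2\cdot 2^m$. The only cosmetic difference is that the paper first reduces to the special case $\kEMajBal{1}$ (balanced, $k=1$) and proves unambiguity there, whereas you work directly with the general $\EMaj{}$; your route is slightly more direct and avoids invoking the intermediate completeness result.
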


We next establish computational bounds for \PMA{}.
Similarly to \PTW{}, we show that \PMA{} is a subset of \SymP{}, implying that the latter is an upper bound for all unambiguous \Sig{2} problems considered in this work.
We additionally prove that \PMA{} is contained in \coAM{}, which has no known comparison to \SymP{}.
As a lower bound, we prove that \PMA{} contains \coNP{}, providing evidence that this class indeed contains hard problems.

\begin{theorem}
\label{thm:PMA_bounds}
$\coNP{}\subseteq\PMA{}\subseteq\SymP{}\cap \coAM{}$.
\end{theorem}

We also consider several natural variants of \EMaj{}.
We define a parameterized variant denoted $\kEMajBal{k}$, in which the threshold of disqualified labels $k$ is a fixed parameter, and the bipartite graph is balanced ($|X|=|Y|$). 
Surprisingly, for every fixed $k\ge 1$ $\kEMajBal{k}$ remains \PMA{}-complete. 
This result naturally extends to variants imposing only one of these restrictions (either balance or fixed $k$).
Furthermore, if the input specifies an explicit set of disqualified labels instead of a threshold, the problem remains \PMA{}-complete.

\begin{wbox}
\label{def:kEMajBal}
    $\kEMajBal{k}$\\ 
    \textbf{Parameter:} A fixed integer $k\ge 1$.\\
    \textbf{Input:} Two Boolean circuits $\CC\colon\bits{2n}\to\bits{n+1}$ and $\VV\colon \bits{n+1}\to\bits{2n}$.\\
    \textbf{Question:}
    $\exists \XX\in\bits{n} \;\; \forall \YY\in\bits{n}\;\; \VV(\CC(\XX\concat \YY))=\XX \YY \;\land\; \CC(\XX\concat \YY) \ge k$?
\end{wbox}

\begin{wbox}
\label{def:EmajSet}
    \EMajSet{}\\ 
    \textbf{Input:} Two Boolean circuits $\CC\colon\bits{2n}\to\bits{n+1}$ and $\VV\colon \bits{n+1}\to\bits{2n}$, and a set $\emptyset\ne S\subseteq \NN_0$ given explicitly as a list of binary numbers.\\
    \textbf{Question:}
    $\exists \XX\in\bits{n} \;\; \forall \YY\in\bits{n}\;\; \VV(\CC(\XX\concat \YY))=\XX \YY \;\land\; \CC(\XX\concat \YY) \notin S$?
\end{wbox}

\begin{theorem}
\label{thm:Emaj_k_pma} 
Let $k\ge 1$. $\kEMajBal{k}$ is \PMA{}-complete.
\end{theorem}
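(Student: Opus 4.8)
The plan is to prove \Cref{thm:Emaj_k_pma} by showing both directions: membership $\kEMajBal{k}\in\PMA{}$ and \PMA{}-hardness. Membership is immediate from the definitions, since an instance of $\kEMajBal{k}$ is syntactically a special case of \EMaj{} (with $m=n$ and the threshold fixed to the constant $k$), so the identity map witnesses $\kEMajBal{k}\reduce\EMaj{}$. The substance of the theorem is the hardness direction: we must reduce an arbitrary \EMaj{} instance $(\CC,\VV,k')$, with $\CC\colon\bits{n+m}\to\bits{m+1}$ and arbitrary threshold $k'\ge 1$, to a $\kEMajBal{k}$ instance for the \emph{fixed} value $k$. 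So there are really two tasks bundled together: (i) balance the bipartite graph so that $|X|=|Y|$, i.e.\ the $x$-side and $y$-side string lengths agree, and (ii) collapse an arbitrary threshold $k'$ down to the fixed constant $k$, all while preserving the edge-count bound ``total edges $<2|Y|$'' and preserving the yes/no answer and the unique solution.

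For task (ii), the key idea is that the threshold $k'$ is only a device to bound the number of admissible labels, hence the number of edges, at roughly $2^{m+1}-k'$. We can simulate a large threshold by a small one if we \emph{pad the label space}: introduce extra label bits so that the number of forbidden labels scales up proportionally. Concretely, if the original label alphabet $\bits{m+1}$ admits at most $2^{m+1}-k'$ edges, we move to labels of length $m+1+\lceil\log_2 k'\rceil$ (or similar), and we design a new circuit $\CC'$ whose output, interpreted as a number, is $\ge k$ exactly on the intended admissible labels, where the count of such labels is the same as before. The verifier $\VV'$ is adjusted correspondingly: given a padded label it strips the padding, applies $\VV$, and checks consistency; labels not of the correct padded form are simply mapped to something that fails the $\VV'(\CC'(\cdot))=\XX\YY$ test, so they can never be used. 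One must be careful that $k$ itself (say $k=1$, the hardest case) still forbids enough labels: with $k=1$ we forbid only the single label $0$, so we must arrange that exactly one ``slack'' label of the padded length is below threshold and all genuine edges map to labels $\ge 1$, while the label-count bookkeeping still yields fewer than $2|Y|$ edges. This is a matter of choosing the padding width so that $(\text{number of admissible padded labels}) < 2\cdot 2^{n'}$ where $n'$ is the new common string length.

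For task (i), balancing, the standard trick is to pad the shorter side with dummy bits: if $n<m$ we extend each $x$ to an $m$-bit string $x0^{m-n}$ and require (via the circuits) that the padding bits are $0$, so that a surviving $x$ is forced to have the canonical form and uniqueness is preserved; symmetrically if $m<n$ we pad $y$, but here one must check that forcing $\forall y$ over the enlarged $Y$-space with a padding constraint does not spuriously kill solutions --- the fix is to route any non-canonical $y$ (padding bits nonzero) to an always-admissible, always-consistent edge so the $\forall$ is vacuously satisfied on those $y$'s, contributing a controlled number of extra edges that we absorb into the count bound by choosing the padding generously. Combining (i) and (ii) we obtain $\CC'\colon\bits{2n'}\to\bits{n'+1}$ and $\VV'\colon\bits{n'+1}\to\bits{2n'}$ of polynomial size, computable in polynomial time from the input, such that the $\kEMajBal{k}$ instance $(\CC',\VV')$ is a yes-instance iff $(\CC,\VV,k')$ is, with the solution $\XX'$ in bijection with the original solution $\XX$.

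The main obstacle I expect is the simultaneous bookkeeping of the edge-count invariant: every gadget move --- padding $X$, padding $Y$ with its dummy-edge routing, and rescaling the threshold --- adds edges or changes $|Y|$, and we need the final inequality $|E'|<2|Y'|$ to hold for the fixed $k$ (in particular for $k=1$, where the bound is essentially tight, only one label may be excluded). Getting all three transformations to compose while keeping this inequality --- and keeping $\VV'$'s consistency check strong enough that no ``junk'' label can ever be exploited to create a second solution --- is the delicate part; the rest is routine circuit plumbing. Once $k=1$ is handled, the claim for every fixed $k\ge1$ follows since a larger $k$ only forbids more labels, which is easily accommodated (indeed part of the statement's robustness remark), so it suffices to carry the argument through for the tightest case and observe monotonicity in $k$.
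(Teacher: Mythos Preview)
Your plan is broadly on track---membership is indeed immediate, and the two subtasks (balancing and threshold adjustment) are the right decomposition---but the last paragraph contains a genuine gap. You claim that once $k=1$ is handled, every fixed $k\ge 1$ follows by ``monotonicity'' since a larger $k$ ``only forbids more labels, which is easily accommodated''. This has the direction backwards. Forbidding more labels in the \emph{target} makes the reduction harder, not easier: a $\kEMajBal{1}$ instance may use labels $1,\ldots,k-1$, which are admissible at threshold $1$ but forbidden at threshold $k$, and these edges must be relocated. Because in $\kEMajBal{k}$ the label length is tied to the string length (labels live in $\bits{n+1}$), you cannot simply shift labels upward without overflow, and you cannot enlarge the label space without simultaneously enlarging $|Y|$---which then forces every candidate solution to acquire new edges to the new $y$-values. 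The paper's reduction $\kEMajBal{1}\reduce\kEMajBal{k}$ handles exactly this: it pads $n$ by $\lceil\log k\rceil$ bits so that each original edge spawns $k$ copies indexed by the $y$-padding $i$, with copy $i$ normally receiving label $i\concat l$; but when $1\le l<k$ and $i=0$ that label equals $l<k$, so the edge is instead assigned the spare label $l'\concat 0^{n+1}$ (an otherwise-unused high copy of the zero label, which is $\ge 2^{n+1}\ge k$). This label-swap is the crux that your monotonicity remark misses. The easy direction is the opposite one---collapsing a large threshold to a small one by mapping all sub-threshold labels to $0$---and that is essentially what your task~(ii) sketch actually describes.

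A secondary issue: your balancing step for $m<n$ proposes routing each non-canonical $y'$ to an ``always-admissible, always-consistent'' edge so that the $\forall y'$ becomes vacuous there. Read literally (consistent for every $x'$), this would require $\CC'(\cdot,y')$ to be injective in $x'$, consuming $2^{n'}$ distinct labels per dummy $y'$; with only $2^{n'+1}$ labels in total this is impossible, and ``padding generously'' cannot help since $|Y|$ and the label space scale together. The construction that works---and that the paper uses, iterated one bit at a time---couples the label prefix to the $y$-padding: set $\CC'(x,\,p\concat y)=p\concat\CC(x,y)$ and let $\VV'(p\concat l)$ apply $\VV$ to $l$ and re-insert $p$ into the $y$-component. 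Then the padding is recoverable from the label itself and no separate dummy-edge mechanism is needed.
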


\begin{theorem}
\label{thm:EmajSet_pma} 
\EMajSet{} is \PMA{}-complete.
\end{theorem}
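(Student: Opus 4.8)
The plan is to prove $\PMA{}$-hardness and membership in $\PMA{}$ separately; together with \Cref{thm:Emaj_k_pma} this yields completeness. Hardness is essentially immediate: the clause ``$\CC(\XX\concat\YY)\ge k$'' of $\EMaj{}$ (and of $\kEMajBal{k}$) is precisely the clause ``$\CC(\XX\concat\YY)\notin S$'' of \EMajSet{} when $S$ is the prefix $\{0,\dots,k-1\}$. For $k=1$ this is the one-element list $S=\{0\}$, which can be written down explicitly in constant size. So I would reduce $\kEMajBal{1}$ to \EMajSet{} by keeping the circuits $\CC,\VV$ unchanged and setting $S:=\{0\}$. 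Since $\kEMajBal{1}$ is $\PMA{}$-complete by \Cref{thm:Emaj_k_pma} and $\PMA{}$ is closed under $\reduce{}$, this makes \EMajSet{} $\PMA{}$-hard.

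For membership I would show $\EMajSet{}\reduce{}\EMaj{}$ by relabeling the label space so that the (explicitly given) disqualified set becomes a prefix, after which we are exactly looking at an \EMaj{} instance. Given $(\CC,\VV,S)$ with $\CC\colon\bits{2n}\to\bits{n+1}$, first discard from $S$ every value $\ge 2^{n+1}$ (vacuous, since $\CC$ never outputs it); write the remainder as $S=\{s_1<\dots<s_t\}\subseteq\{0,\dots,2^{n+1}-1\}$, which we may assume is nonempty ($t\ge 1$). Set $k:=t$ and let $\rho$ be the bijection on $\{0,\dots,2^{n+1}-1\}$ with $\rho(s_i)=i-1$ for each $i$, and $\rho(\ell)=t+\lvert\{\ell'\notin S:\ell'<\ell\}\rvert$ for $\ell\notin S$; by construction $\rho$ maps $S$ onto the prefix $\{0,\dots,t-1\}$ and the admissible labels onto $\{t,\dots,2^{n+1}-1\}$. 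Output the \EMaj{} instance $(\CC',\VV',k)$ with $\CC':=\rho\circ\CC$ and $\VV':=\VV\circ\rho^{-1}$. Correctness is then a short check: for all $\XX,\YY$ we have $\VV'(\CC'(\XX\concat\YY))=\VV(\CC(\XX\concat\YY))$, since $\rho^{-1}\circ\rho$ is the identity on the range of $\CC$, and $\CC'(\XX\concat\YY)\ge k$ exactly when $\CC(\XX\concat\YY)\notin S$; hence the sources of the produced \EMaj{} instance are exactly the solutions of the original \EMajSet{} instance, so the yes/no answer is preserved.

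The one step that needs care is checking that $\rho$ and $\rho^{-1}$ are computable by polynomial-size circuits with the list $S$ hard-wired (so that $\CC',\VV'$ stay polynomial size): $\rho(\ell)$ needs at most $t$ comparisons against $s_1,\dots,s_t$, and $\rho^{-1}(\ell')$ is a lookup among the $s_i$ when $\ell'<t$ and otherwise the $(\ell'-t+1)$-th smallest element of $\{0,\dots,2^{n+1}-1\}\setminus S$, locatable in time polynomial in $n$ and $|S|$ (e.g.\ by binary search using $\lvert\{\ell''\le a:\ell''\notin S\}\rvert=a+1-\lvert\{s\in S:s\le a\}\rvert$). The only genuinely delicate point is the degenerate case $t=0$ — when no listed value lies in the label space $\{0,\dots,2^{n+1}-1\}$, in which case the \EMajSet{} instance need not even be unambiguous; I would dispose of this by reading $S$ as a set of labels, i.e.\ $S\subseteq\{0,\dots,2^{n+1}-1\}$, so that $S\neq\emptyset$ already forces $t\ge 1$. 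Beyond that bookkeeping I do not expect a real obstacle, since both directions are routine given \Cref{thm:Emaj_k_pma}.
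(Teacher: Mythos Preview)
Your proposal is correct and follows essentially the same approach as the paper: hardness is by the identical trivial reduction from $\kEMajBal{1}$ with $S=\{0\}$, and membership is by relabeling the label space so that the forbidden set $S$ becomes the prefix $\{0,\dots,|S|-1\}$, yielding an \EMaj{}/\EMajBal{} instance. The only cosmetic difference is the choice of bijection---the paper swaps the ``out-of-place'' parts $S\setminus[k]_0$ and $[k]_0\setminus S$ via index functions while leaving $S\cap[k]_0$ and the rest fixed, whereas you rank-compress $S$ and its complement separately---but both implement the same idea and the correctness arguments are interchangeable.
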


We conjecture that neither \PMA{} nor \PTW{} is contained in the other, as their unambiguity relies on fundamentally distinct combinatorial principles.
\section{Unambiguous Problems Complete for \PNP{}}
\label{sec:PNP}

In this section, we consider several unambiguous problems that are complete for \PNP{}, the class of problems that are Turing reducible to \SAT{}. 
Since $\PNP{}\subseteq \PCW{}\subseteq\PTW{}$, these problems seem easier than other \PTW{} problems discussed above.
Specifically, we prove \PNP{}-completeness of $\kCKTCondorcet{2}$ (defined in \Cref{sec:PCW}), and the following three problems. 

\begin{wbox}
    \SDomStrat\\
    \textbf{Input:} Boolean circuit $\CC\colon\bits{2n}\to\bits{n}$.\\
    \textbf{Question:}
    $\exists \XX\in\bits{n} \;\; \forall \XX',\YY\in\bits{n} \text{ with } \XX\neq \XX' \;\; \CC(\XX,\YY)>\CC(\XX',\YY)$?
\end{wbox}

\begin{wbox}
    \CKTConsensus\\
    \textbf{Input:} Sequence of Boolean circuits $\fCC=\langle\CC_1,\dots,\CC_m\rangle$, each with $n$ inputs and $n$ outputs.\\
    \textbf{Question:}
    $\exists \XX\in\bits{n} \;\; \forall \XX'\in\bits{n} \text{ with } \XX'\neq \XX\;\; \forall i\in \{1,\dots m\}\;\; \CC_i(\XX)>\CC_i(\XX')$?
\end{wbox}

\begin{wbox}
    \CktWinThresh\\
    \textbf{Input:} A Boolean circuit $\CC\colon \bits{2n}\to\{0,1\}$.\\
    \textbf{Question:}
    $\exists \XX\in\bits{n} \;\; \forall \XX',\YY\in\bits{n} \text{ with } \XX'\neq \XX\;\; \CC(\XX,\YY)=1\land \big(\XX'>\XX\implies \CC(\XX',\YY)=0\big)$?
\end{wbox}

In \SDomStrat{}, we interpret $\CC$ as formulating a game where $\XX$ describes the strategy of Player~1, $\YY$ describes the strategies of the other players, and $\CC$'s output is the utility of Player~1. The question then is whether there exists a strongly dominant strategy for Player~1. We refer to \Cref{sec:Beyond_sig2} for results on finding a weak-dominant strategy in a game.

\CKTConsensus{} asks for a string achieving uniquely optimal value in all given circuits. This can be seen as a circuit-based generalization of \UTSP{}, where several sets of weights (each encoded by a different circuit) are given for the same graph and we seek a TSP tour (encoded by the input $\XX$) that is uniquely optimal with respect to all sets of weights.

\CktWinThresh{} asks whether there is an $\XX$ that is an all-winner (it satisfies $\psi$ regardless of $\YY$) but it is also a threshold, in that any string greater than $\XX$ is an all-loser (it does not satisfy $\psi$ regardless of $\YY$).
Notice that it differs from the other three problems in that the circuit's output is in $\{0,1\}$, whereas the other problems' circuits output strings. Thus, hardness for \CktWinThresh{} stems only from the size of the input space of $\psi$, while in the other problems it also relates to the size of the output space of the circuits, yielding different-flavored reductions.

$\kCKTCondorcet{2}$ is the Condorcet-winner existence problem with two voters. 
It is an interesting example of a problem that, despite having intransitive solutions (see \Cref{foot:two_cond_intransitive}) resembling the \PCW{}-complete problems discussed in \Cref{sec:PCW}, is included in \PNP{}.

The results of this section are summarized in the following theorem.

\begin{theorem}
\label{thm:pnp_results}
The problems \SDomStrat{}, \CKTConsensus{}, \CktWinThresh{}, and $\kCKTCondorcet{2}$ are \PNP{}-complete.
\end{theorem}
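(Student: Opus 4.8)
The plan is to establish, for each of the four problems, (i) membership in $\DeltaP{2}=\PNP{}$ via a uniform ``pin down the unique candidate, then verify'' scheme, and (ii) $\DeltaP{2}$-hardness, routed through \UOPT{} ($=\kCKTCondorcet{1}$, ``does a given circuit $g\colon\bits{n}\to\bits{n}$ have a unique maximizer?''), which I would first show is itself $\DeltaP{2}$-complete by reducing from Papadimitriou's \UTSP{}. Both directions exploit the fact that a solution, when it exists, is forced to be a specific string that can be reconstructed with an \NP{} oracle.

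\textbf{Membership.} For \SDomStrat{}, any solution $\XX$ must in particular be the unique maximizer of $\CC(\cdot,0^n)$; so I would compute $v^*=\max_{\XX}\CC(\XX,0^n)$ bit by bit with \NP{} queries, ask one further \NP{} query whether two distinct strings attain $v^*$ (if so, reject), otherwise reconstruct the unique maximizer $\XX^*$, and finish with a single $\coNP$ query testing $\forall\XX'\neq\XX^*\,\forall\YY\;\CC(\XX^*,\YY)>\CC(\XX',\YY)$. \CKTConsensus{} is the same argument with $\CC_1$ in place of $\CC(\cdot,0^n)$, and the final $\coNP$ check ranging over all $i$. For $\kCKTCondorcet{2}$, a solution is exactly the (necessarily unique) string simultaneously attaining $\max\CC_1$ and $\max\CC_2$: compute both maxima with the oracle, ask (\NP{}) whether some $\XX$ attains both, recover it, and ask (\NP{}) whether a second such string exists. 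For \CktWinThresh{}, observe that a solution $\XX$, if it exists, must equal $w^*:=\max\{\,w:\exists\YY\;\CC(w,\YY)=1\,\}$ (it is itself not an ``all-loser'', so $\XX\le w^*$, while every $w>\XX$ must be an all-loser, so $w^*\le\XX$); since the predicate ``$\exists w\ge t\ \exists\YY\ \CC(w,\YY)=1$'' is in \NP{}, $w^*$ is computable with an \NP{} oracle, and the instance is a yes-instance iff $\forall\YY\;\CC(w^*,\YY)=1$, a $\coNP$ query (the case $W=\emptyset$ being an immediate reject). The analogous argument for \UOPT{} (compute $\max g$, then one \NP{} query for non-uniqueness) also puts it in $\DeltaP{2}$.

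\textbf{Hardness.} For \UTSP{}$\reduce$\UOPT{}: encode each undirected Hamiltonian cycle of the input graph by a canonical string (e.g.\ the lexicographically smaller of its two orientations started at vertex $1$) and build $g$ outputting $M-\mathrm{length}(\sigma)$ (with $M$ one more than the total edge weight) on such canonical encodings and $0$ otherwise; then $g$ has a unique maximizer iff $G$ has a unique shortest tour. From a circuit $g$, the remaining reductions are short: for \SDomStrat{} set $\CC(\XX,\YY):=g(\XX)$, so the condition collapses (independence of $\YY$) to ``$g$ has a unique maximizer''; \CKTConsensus{} is the case $m=1$; and $\kCKTCondorcet{2}$ follows by taking $\CC_1=\CC_2=g$, since $2\sgn(g(\XX)-g(\YY))>0\iff g(\XX)>g(\YY)$. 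The reduction \UOPT{}$\reduce$\CktWinThresh{} needs real work: I would use $\XX=(v,\XX_0)$ and $\YY=(v',\YY_0)$ over $\bits{n}\times\bits{n}$ and set $\CC((v,\XX_0),(v',\YY_0))=1$ iff $g(\XX_0)\ge v$, $g(\YY_0)\le g(\XX_0)$, and $\big(\YY_0\neq\XX_0\Rightarrow g(\YY_0)<g(\XX_0)\big)$. A direct calculation gives $\{(v,\XX_0):\exists\YY\;\CC=1\}=\{(v,\XX_0):v\le g(\XX_0)\}$, whose maximum (with $v$ the high-order bits) is $(v^0,\XX_0^{\max})$ where $v^0=\max g$ and $\XX_0^{\max}$ is the largest maximizer; and this maximum is an ``all-winner'' iff $\XX_0^{\max}$ is the \emph{unique} maximizer of $g$. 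Combined with the \CktWinThresh{} characterization from the membership proof, this completes the reduction, and the same construction certifies that the solution is unique.

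\textbf{Main obstacle.} The delicate part is the \CktWinThresh{} reduction: it is the only problem here whose circuit outputs a single bit, so all of its hardness must be packed into the exponentially large input domain, and one has to engineer the ``$\max$-of-reachable-set behaves like a threshold'' structure to coincide \emph{exactly} with uniqueness of a maximizer of $g$. The other reductions are essentially immediate once \UOPT{} is in hand; the only remaining care is bookkeeping (canonical cycle encodings in \UTSP{}$\reduce$\UOPT{}, and the degenerate cases $W=\emptyset$, ties, and empty quantifier ranges in the membership algorithms).
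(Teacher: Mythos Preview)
Your proposal is correct and follows essentially the same approach as the paper: both route hardness through \UOPT{} (with the same trivial reductions for \SDomStrat{}, \CKTConsensus{}, and $\kCKTCondorcet{2}$, and a ``value-in-high-bits'' encoding for \CktWinThresh{}), and both handle membership by pinning down the forced candidate with the \NP{} oracle and then verifying. Your specific choices---$\CC_1=\CC_2=g$ rather than the paper's $\CC_2\equiv 0$, computing $w^*=\max\{w:\exists\YY\,\CC(w,\YY)=1\}$ rather than the paper's dual ``smallest string above which all are all-losers'', and using $g(\XX_0)\ge v$ rather than the paper's $g(\XX)=\YY$ in the \CktWinThresh{} gadget---are minor variants that work equally well.
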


\section{Ambiguous Problem Variants}
\label{sec:Sig2}
In the previous sections, we have seen many examples of unambiguous problems in \Sig{2} that are relatively easy compared with the full potential of \Sig{2}.
In this section, we consider slight variations of previously discussed problems, which are ambiguous. We exhibit the idea that unambiguity seems to play an important role in diminishing the computational complexity of problems, by showing that their ambiguous variations yield \Sig{2}-complete problems.
One such example can be found in \cite{BG24popularity}, where it is shown that Weak-Popularity in additive hedonic games (a version of \ASHG{} defined with a weak inequality) is \Sig{2}-complete, in contrast to our result of \Cref{thm:ashg_pop_pcw}.

We provide two additional such results, showing \Sig{2}-completeness of the problems \CKTUVAL{} and \CKTPareto{}, defined below. 
Indeed, it may be observed that \CKTUVAL{} differs from \UOPT{} only in that $>$ becomes $\neq$ (instead of asking about the existence of a string obtaining a value \emph{greater} than any other string, we ask about obtaining a value \emph{different} than any other string); and \CKTPareto{} differs from \CKTConsensus{} (with 2 circuits) only in that the condition changes from $\land$ to $\lor$ (we seek a string which, against any other string, wins in either one of the circuits, rather than in both circuits). A solution for \CKTPareto{} is Pareto optimal in the sense that any string that challenges it would reduce the value of at least one circuit.

\begin{wbox}
    \CKTUVAL\\ 
    \textbf{Input:} Boolean circuit $\CC\colon\bits{n}\to\bits{n}$.\\
    \textbf{Question:} $\exists\XX\in\bits{n}\;\forall\YY\in\bits{n}\text{ with }\YY\neq\XX\;\; \CC(\XX)\neq\CC(\YY)$?
\end{wbox}

\begin{wbox}
    \CKTPareto\\ 
    \textbf{Input:} Two Boolean circuits $\CC_1\colon\bits{n}\to\bits{n}$ and $\CC_2\colon\bits{n}\to\bits{n}$.\\
    \textbf{Question:} $\exists\XX\in\bits{n}\;\forall\YY\in\bits{n}\text{ with }\YY\neq\XX\;\; \CC_1(\XX)>\CC_1(\YY)\lor\CC_2(\XX)>\CC_2(\YY)$?
\end{wbox}

\begin{theorem}
\label{thm:CKTUVAL_Sig2}
    \CKTUVAL{} is \Sig{2}-complete.
\end{theorem}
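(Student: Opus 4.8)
The plan is to establish both inclusions: membership in $\Sig{2}$ is read off the syntax, and $\Sig{2}$-hardness will come from a polynomial-time many-one reduction from $\QSAT$ ($\exists\forall$-Sat), the canonical $\Sig{2}$-complete problem. Membership is immediate: the defining condition of \CKTUVAL{} is literally $\exists\,\XX\in\bits{n}\ \forall\,\YY\in\bits{n}\ \big(\YY=\XX\ \lor\ \CC(\XX)\neq\CC(\YY)\big)$, a $\Sig{2}$ sentence whose matrix is decidable in polynomial time via two evaluations of $\CC$.

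For hardness I would take an instance $\phi(\mathbf{u},\mathbf{v})$ of $\QSAT$ with $|\mathbf{u}|=a$ and $|\mathbf{v}|=b$, and build a circuit $\CC\colon\bits{n}\to\bits{n}$ with $n=a+b+2$, parsing an input $\XX$ as a tuple $(f,\mathbf{u},\mathbf{v},e)$ consisting of a ``flag'' bit $f$, blocks $\mathbf{u},\mathbf{v}$, and an ``extra'' bit $e$. The construction revolves around designating, for each assignment $\mathbf{u}$, a \emph{representative} input $r_{\mathbf{u}}:=(0,\mathbf{u},0^{b},0)$, and arranging that (i) every non-representative input has a non-unique image under $\CC$, while (ii) $r_{\mathbf{u}}$ is the unique preimage of its own image if and only if $\mathbf{u}$ is \emph{good}, i.e. $\forall\,\mathbf{v}\ \phi(\mathbf{u},\mathbf{v})$ holds. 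Given (i) and (ii), \CKTUVAL{} has a solution iff some $\mathbf{u}$ is good iff $\phi$ is a yes-instance of $\QSAT$; and $\CC$ is plainly constructible in polynomial time, being fixed routing logic wrapped around a single copy of $\phi$.

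Concretely, $\CC$ acts as follows. Fix a ``junk'' value $J:=(1,0^{a+b+1})$. Every input with $f=1$, and every input with $f=0,\,e=0,\,\mathbf{v}\neq 0^{b}$, is sent to $J$; since the inputs with $f=1$ already number $2^{a+b+1}\geq 2$, the value $J$ always has at least two preimages. The representative $r_{\mathbf{u}}$ is sent to itself. A ``witness-checking'' input $(0,\mathbf{u},\mathbf{v},1)$ is sent to $r_{\mathbf{u}}$ when $\phi(\mathbf{u},\mathbf{v})=0$ and to $J$ when $\phi(\mathbf{u},\mathbf{v})=1$. A short case analysis on the image's flag bit then shows that the only inputs ever mapping to $r_{\mathbf{u}}$ are $r_{\mathbf{u}}$ itself and those witness-checking inputs $(0,\mathbf{u},\mathbf{v},1)$ with $\phi(\mathbf{u},\mathbf{v})=0$ (note $r_{\mathbf{u}}\neq(0,\mathbf{u},\mathbf{v},1)$ since their extra bits differ); hence $r_{\mathbf{u}}$ has a unique image exactly when no falsifying $\mathbf{v}$ exists, i.e. exactly when $\mathbf{u}$ is good, giving (ii). Simultaneously every non-representative input is non-unique — the junk-bound ones because $J$ has many preimages, and the witness-checking ones because they hit either $J$ or $r_{\mathbf{u}}$, which $r_{\mathbf{u}}$ itself also hits — giving (i). The two directions of correctness then follow immediately from (i) and (ii).

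The main obstacle, and the only genuinely non-mechanical idea, is that a single evaluation of $\CC$ cannot detect whether $\mathbf{u}$ is globally good (that is a $\forall\mathbf{v}$ question), so one cannot make the representative's output itself encode goodness. The resolution is to keep the representative's output constant (the identity) and instead let the witness-checking inputs \emph{sabotage} its uniqueness whenever a falsifying $\mathbf{v}$ is present. Once that design choice is made, the remaining work — verifying that no unintended collisions arise, that all junk-bound inputs truly have at least two preimages, and that no input other than a representative can ever be a unique-image input — is routine case checking.
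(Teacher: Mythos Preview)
Your proof is correct. Both you and the paper reduce from $\QSAT$ using the same high-level idea: for each existential assignment $\mathbf{u}$, designate a single representative input whose image is unique precisely when $\mathbf{u}$ is good, while forcing every other input to collide with something. The implementations, however, differ. The paper keeps the input space as pure $(\XX,\YY)$ pairs (no flag or extra bits), separates different $\XX$'s by mapping $(\XX,\YY)$ into $\{2\XX,2\XX+1\}$, and achieves non-uniqueness of non-representative inputs by \emph{pairing} the $\YY$-values $(0,1),(2,3),\ldots$ so that paired inputs always share an output; the representative is $(\XX,\bar{\YY})$ with output $2\XX+1$, and a bad $\YY$ drags its pair's output onto $2\XX+1$ as well. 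This requires some care at the boundary values $\bar{\YY}$ and $\bar{\YY}-1$. Your encoding instead introduces two auxiliary bits (flag $f$ and extra bit $e$) and a single junk value $J$: most non-representatives are dumped into $J$, and the witness-checking inputs either hit $J$ or sabotage $r_{\mathbf{u}}$ directly. Your version has fewer cases and no boundary fiddling; the paper's version avoids auxiliary bits and keeps the output range $\{2\XX,2\XX+1\}$ tighter, but at the cost of a more delicate argument. Both are clean, and yours is arguably the more transparent of the two.
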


\begin{theorem}
\label{thm:CKTPareto_Sig2}
    \CKTPareto{} is \Sig{2}-complete.
\end{theorem}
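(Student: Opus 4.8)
The plan is to prove the two inclusions separately, with the easy direction being membership and the ``hard'' direction being inherited almost for free from a problem already classified in the excerpt.

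\textbf{Membership in $\Sig{2}$.} This is immediate from the syntactic form of \CKTPareto{}: its defining condition is literally $\exists \XX\in\bits{n}\;\forall \YY\in\bits{n}\;\psi(\XX,\YY)$, where $\psi(\XX,\YY)$ evaluates $\CC_1$ and $\CC_2$ on $\XX$ and $\YY$ and compares the resulting $n$-bit integers --- a polynomial-time predicate --- and where $|\XX|=|\YY|=n$. So I would just remark that \CKTPareto{} $\in\Sig{2}$.

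\textbf{Hardness.} Rather than reducing from \QSAT{} directly, I would reduce from \CKTUVAL{}, which is $\Sig{2}$-hard by \Cref{thm:CKTUVAL_Sig2}. The driving observation is that ``$\CC(\XX)\neq\CC(\YY)$'' is ``$\CC(\XX)>\CC(\YY)$ or $\CC(\XX)<\CC(\YY)$'', and the second disjunct can be turned into a ``strictly larger'' comparison by order-reversing the outputs. Concretely, given a \CKTUVAL{} instance $\CC\colon\bits{n}\to\bits{n}$, I would output the \CKTPareto{} instance $(\CC_1,\CC_2)$ with $\CC_1:=\CC$ and $\CC_2:=\overline{\CC}$, where $\overline{\CC}(\XX)$ is the bitwise complement of $\CC(\XX)$ --- equivalently, interpreting outputs as integers, $\overline{\CC}(\XX)=(2^n-1)-\CC(\XX)$. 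Since $\CC_2$ is just $\CC$ followed by a layer of NOT gates on its output wires, and the output lengths of the two problems already match ($n$ bits each), this map is computable in polynomial time with no padding. Correctness is a one-line calculation: for all $\XX,\YY\in\bits{n}$,
\[
\CC_1(\XX)>\CC_1(\YY)\;\lor\;\CC_2(\XX)>\CC_2(\YY)\iff \CC(\XX)>\CC(\YY)\;\lor\;\CC(\XX)<\CC(\YY)\iff \CC(\XX)\neq\CC(\YY),
\]
so the two instances have exactly the same set of solutions $\XX$; in particular one is a yes-instance iff the other is. Together with the $\Sig{2}$ upper bound, this gives $\Sig{2}$-completeness.

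\textbf{Where the difficulty lies.} There is no real combinatorial obstacle in this argument: the whole content of the theorem is the $\Sig{2}$-hardness of \CKTUVAL{}, and \CKTPareto{} inherits it via the order-reversing trick. The only points needing (minor) care are bookkeeping: verifying that complementing $n$-bit integers is an order-reversing bijection on $\{0,\dots,2^n-1\}$ so that strict inequalities translate exactly, and that the reduction does not try to preserve uniqueness of solutions (it need not, since \CKTPareto{} is genuinely ambiguous, which is precisely the point of \Cref{sec:Sig2}). One could alternatively reduce directly from \QSAT{}, encoding an instance $\phi(u,v)$ by letting $\XX$ range over pairs $(u,v)$ and using the universal quantifier over $\YY$'s second coordinate to simulate $\forall v$; but that route requires a preprocessing step on $\phi$ to stop ``bad'' values of $u$ from yielding spurious Pareto-optimal strings, and is strictly more work than the reduction above, so I would not take it.
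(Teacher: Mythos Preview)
Your proposal is correct and takes essentially the same approach as the paper: membership by the syntactic form, and hardness by reducing from \CKTUVAL{} with $\CC_1=\CC$ and $\CC_2$ an order-reversal of $\CC$. The only cosmetic difference is that the paper writes $\CC_2(\XX)=-\CC(\XX)$ whereas you use the bitwise complement $(2^n-1)-\CC(\XX)$; your version is in fact slightly more careful, since it keeps $\CC_2$'s outputs in $\bits{n}$ as the problem definition requires.
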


\section{Beyond \Sig{2}: Weak Dominant Strategy and \UQSAT{}}
\label{sec:Beyond_sig2}
In this section, we turn our attention to the problem of determining the existence of a weakly-dominant strategy in a game. There, we seek a strategy $s_1^*$ for Player~1 such that for any other strategy $s_1'$, we have that $s_1^*$ is weakly better than $s_1'$ against any strategy profile of the other players, and is strictly better than $s_1'$ against \emph{some} strategy profile of the other players. We formalize this problem as follows.

\begin{wbox}
    \textsc{Weak-Dominant-Strategy} (\WDomStrat)\\
    \textbf{Input:} Boolean circuit $\CC\colon\bits{2n}\to\bits{n}$.\\
    \textbf{Question:}
    $\exists \XX\in\bits{n} \;\; \forall \XX',\YY\in\bits{n} \text{ with } \XX\neq \XX' \;\;\exists \YY'\in\bits{n} \;\; \CC(\XX\concat \YY)\geq\CC(\XX'\concat \YY) \land \CC(\XX\concat \YY')>\CC(\XX'\concat \YY')$?
\end{wbox}

As in \SDomStrat{}, we interpret $\CC$ as describing a game in which $\XX$ specifies the strategy of Player~1, $\YY$ specifies the joint strategies of the remaining players, and the circuit's output represents Player~1's utility.
An inspection of the problem's definition shows it is unambiguous. 
However, unlike all previously discussed problems, it seems improbable that it is included in \Sig{2}. 
Indeed, as suggested by the quantifier structure in its definition, its complexity seems to lie somewhere between the second and third levels of the polynomial hierarchy. 

Supporting this intuition, we first show that \WDomStrat{} is many-one equivalent to \UQSAT{} (``Is there a unique satisfying assignment in a two-quantified Boolean formula?'').
We then show that \UQSAT{} is \Piclass{2}-hard and contained in \Dif{2}, a class introduced in \cite{aleksandrowicz2017computational} that generalizes \DifP{} \cite{papadimitriou1982complexity}.

This characterization is analogous to known results for \USat{} (``Is there a unique satisfying assignment in a one-quantified Boolean formula?''), which is \coNP{}-hard and contained in \DifP{} \cite{blass1982unique, papadimitriou1982complexity}.
We now proceed to define a generalization of \DifP{}.

\begin{definition}
(\cite{aleksandrowicz2017computational}).
For any $k\in\NN$, let $\Dif{k}=\{L_1\cap L_2\colon L_1\in\Sig{k}\land L_2\in\Piclass{k}\}$.
\end{definition}

We note that \DifP{}=\Dif{1}.
Furthermore, for all $k\in\NN$, we have $\Sig{k}\cup\Piclass{k}\subseteq\Dif{k}$, since for any language $L$ we have $L=L\cap \Sigma^*$, and $\Sigma^*\in\Sig{k}\cap\Piclass{k}$. Additionally, we have that $\Dif{k}\subseteq\complexityclass{P^{\Sig{k}}}$, since two queries to a \Sig{k} oracle can be used to solve any problem in \Dif{k}. 

For an intuitive explanation why $\WDomStrat{}\in\Dif{2}$, one may observe---though this may not be immediately apparent---that \WDomStrat{} can alternatively be stated as follows: ``Is there a strategy for Player~1 that weakly dominates any other (without requiring strict domination for some opponent profile), but there do not exist two such strategies?'' This reformulation corresponds to the intersection of a \Sig{2} statement and a \Piclass{2} statement, thus placing the problem within \Dif{2}. 

\begin{wbox}
    \UQSAT\\ 
    \textbf{Input:} Two sets $\fXX=\{x_1, \dots,x_n\}$ and $\fYY=\{y_1, \dots,y_n\}$ of Boolean variables and a Boolean formula $\psi$ over $\fXX\cup \fYY$.\\
    \textbf{Question:}
    $\exists! \XX^*\in\bits{n}\;\forall\YY'\in\bits{n}\;\psi(\XX^*\concat \YY')=1$?
    Where $\exists!$ denotes that there is exactly one such $x^*$.
\end{wbox}

\begin{theorem}
\label{thm:wdom_uqsat}
$\UQSAT{}$ and $\WDomStrat{}$ are polynomial-time equivalent.
\end{theorem}

\begin{theorem}
\label{thm:uqsat_dif2}
$\UQSAT{}\in\Dif{2}$ and $\WDomStrat{}\in\Dif{2}$.
\end{theorem}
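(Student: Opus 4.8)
The plan is to exhibit each of the two languages as the intersection of a $\Sig{2}$ language with a $\Piclass{2}$ language, which by definition places it in $\Dif{2}$. First I would record the routine fact that $\Dif{2}$ is closed under polynomial-time many-one reductions: if $A\reduce B$ via $f$ and $B=L_1\cap L_2$ with $L_1\in\Sig{2}$ and $L_2\in\Piclass{2}$, then $A=f^{-1}(L_1)\cap f^{-1}(L_2)$, and each of $\Sig{2},\Piclass{2}$ is closed under $\reduce$. Since \Cref{thm:wdom_uqsat} gives $\WDomStrat\reduce\UQSAT$, it then suffices to prove $\UQSAT\in\Dif{2}$.

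For $\UQSAT$ I would simply split the ``exactly one'' quantifier. Writing $\psi$ for the input formula over $\fXX\cup\fYY$, let
\[L_1=\{\psi:\ \exists\,\XX\ \forall\,\YY\ \ \psi(\XX\concat\YY)=1\},\]
which is plainly in $\Sig{2}$, and let $L_2$ be the $\Piclass{2}$ language ``$\psi$ admits at most one $\XX$ with $\forall\YY\,\psi(\XX\concat\YY)=1$'', i.e. the complement of the $\Sig{2}$ language $\{\psi:\ \exists\,\XX_1\neq\XX_2\ \forall\,\YY,\YY'\ \ \psi(\XX_1\concat\YY)=\psi(\XX_2\concat\YY')=1\}$ (the latter has an $\exists\forall$ prefix over a polynomial-time matrix, hence lies in $\Sig{2}$). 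Since there is exactly one $\XX$ with $\forall\YY\,\psi(\XX\concat\YY)=1$ iff there is at least one and not two distinct ones, we get $\UQSAT=L_1\cap L_2\in\Dif{2}$. Combined with the closure fact and \Cref{thm:wdom_uqsat}, this also yields $\WDomStrat\in\Dif{2}$.

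For completeness I would additionally give the direct decomposition of $\WDomStrat$ hinted at in the text. Let
\[A=\{\CC:\ \exists\,\XX\ \forall\,\XX',\YY\ \ \CC(\XX\concat\YY)\geq\CC(\XX'\concat\YY)\}\]
be the $\Sig{2}$ language ``Player~1 has a strategy weakly dominating all others'', and let $B$ be the $\Piclass{2}$ language ``there do not exist two distinct strategies each weakly dominating all others'' (again a complement of an $\exists\forall$ statement). The claim is $\WDomStrat=A\cap B$, which follows from two small observations: (i) if $\XX_1\neq\XX_2$ both weakly dominate every strategy, then each weakly dominates the other, forcing $\CC(\XX_1\concat\YY)=\CC(\XX_2\concat\YY)$ for every $\YY$; and (ii) if $\XX$ weakly dominates $\XX'$ but no profile $\YY'$ satisfies $\CC(\XX\concat\YY')>\CC(\XX'\concat\YY')$, then $\CC(\XX\concat\YY)=\CC(\XX'\concat\YY)$ for every $\YY$, so $\XX'$ too weakly dominates all. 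Hence a strategy that weakly dominates all competitors and strictly beats each of them against some profile exists iff a weak dominator exists and is unique, i.e. iff the instance lies in $A\cap B$.

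The mathematical content here is light; the only point demanding care is the reformulation equivalences---in particular observations (i) and (ii) for $\WDomStrat$---and checking that negating the ``two distinct solutions'' statements yields genuine $\Piclass{2}$ predicates, with no hidden alternation creeping into the matrix. I do not anticipate any real obstacle beyond this bookkeeping, and invoking $\WDomStrat\reduce\UQSAT$ lets one sidestep observations (i)--(ii) altogether.
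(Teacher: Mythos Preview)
Your proposal is correct and takes essentially the same approach as the paper: decompose $\UQSAT$ as ``at least one solution'' ($\Sig{2}$) intersected with ``not two distinct solutions'' ($\Piclass{2}$), then derive $\WDomStrat\in\Dif{2}$ from the reduction of \Cref{thm:wdom_uqsat}. You add the explicit closure-under-$\reduce$ argument and the direct $A\cap B$ decomposition for $\WDomStrat$ (which the paper sketches in the body text but omits from the formal proof), so your write-up is strictly more detailed than the paper's.
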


\begin{theorem}
\label{thm:uqsat_pi2}
\UQSAT{} and \WDomStrat{} are \Piclass{2}-hard.
\end{theorem}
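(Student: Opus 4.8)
The plan is to prove that $\UQSAT{}$ is $\Piclass{2}$-hard; the matching statement for $\WDomStrat{}$ then follows at once from \Cref{thm:wdom_uqsat}. I would reduce from $\coQSAT{}$ (i.e.\ $\forall\exists$-Sat), the standard $\Piclass{2}$-complete problem. Given an instance $\varphi(u,v)$ with $|u|=n_u$ and $|v|=n_v$, the goal is to build in polynomial time a formula $\psi$ over variable sets $\fXX,\fYY$ with $|\fXX|=|\fYY|=N:=1+n_u+n_v$ such that $\psi$ has exactly one $\fXX$-assignment $\XX^*$ with $\forall\YY'\,\psi(\XX^*\concat\YY')=1$ if and only if $\forall u\,\exists v\,\varphi(u,v)$.

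The key conceptual point is that the $\forall\exists$ content of $\coQSAT{}$ must be routed through the \emph{uniqueness} clause of $\UQSAT{}$ (the $\forall\,\fXX_1\fXX_2\,\exists\,\YY_1\YY_2$ part that lands this problem in $\Dif{2}$), not through its outer $\exists\,\fXX\,\forall\,\YY$ part, which would only give $\Sig{2}$-flavoured hardness. To arrange this I would plant a canonical assignment that is always a ``universal winner'' and make every other potential winner correspond to a counterexample of the $\coQSAT{}$ instance. Concretely, write $\XX=(f,\vec a,\vec b)$ and $\YY=(g,\vec c,\vec d)$ with single bits $f,g$, with $|\vec a|=|\vec c|=n_u$ and $|\vec b|=|\vec d|=n_v$; set $X_A:=(1,0^{n_u},0^{n_v})$ and $X_u:=(0,u,0^{n_v})$ for $u\in\bits{n_u}$. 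Define $\psi(\XX\concat\YY)$ to equal $1$ if $\XX=X_A$; to equal $\neg\varphi(\vec a,\vec d)$ if $f=0$ and $\vec b=0^{n_v}$ (so $\XX=X_{\vec a}$), using the $\YY$-block $\vec d$ as the value of $v$; and to equal $0$ otherwise. This is a polynomial-size Boolean formula, constructible in polynomial time, since the case split is just a disjunction of equality tests with one copy of $\varphi$ substituted in.

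Verification then amounts to computing the set $W$ of assignments $\XX$ with $\forall\YY'\,\psi(\XX\concat\YY')=1$. By construction $X_A\in W$ unconditionally; any $\XX\notin\{X_A\}\cup\{X_u:u\in\bits{n_u}\}$ falsifies $\psi$ for \emph{every} $\YY$, so it is not in $W$; and $X_u\in W$ iff $\forall\vec d\,\neg\varphi(u,\vec d)$, i.e.\ iff $u$ witnesses $\neg\coQSAT{}$. Hence $W=\{X_A\}$ exactly when no such $u$ exists, i.e.\ exactly when $\forall u\,\exists v\,\varphi(u,v)$; otherwise $X_A$ and some $X_u$ both lie in $W$ and differ in the flag bit, so $|W|\ge 2$. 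Therefore the constructed triple is a yes-instance of $\UQSAT{}$ iff $\varphi\in\coQSAT{}$, giving $\coQSAT{}\reduce\UQSAT{}$, hence $\Piclass{2}$-hardness of $\UQSAT{}$ and (via \Cref{thm:wdom_uqsat}) of $\WDomStrat{}$.

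The step I expect to demand the most care is precisely getting the quantifier alternation to fall on the uniqueness side: the natural-looking move of encoding $v$ into $\fXX$ reverses the order of quantifiers and yields $\Sig{2}$-hardness instead, so one has to notice that the planted winner $X_A$ makes the outer $\exists\,\fXX$ essentially vacuous, and that the genuine $\forall\exists$ statement is captured by the question ``is $X_A$ the \emph{only} winner?''. The remaining work — padding $\fXX$ and $\fYY$ to equal length, checking that all ``garbage'' assignments are excluded from $W$, and confirming the formula is polynomially constructible — is routine.
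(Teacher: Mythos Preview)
Your proposal is correct and takes essentially the same approach as the paper's proof: both reduce from $\forall\exists$-Sat by planting a single canonical ``always-winner'' via a flag bit (the paper uses the all-ones string, you use $(1,0^{n_u},0^{n_v})$) and arranging that any additional winner corresponds exactly to a counterexample $u$ of the $\Piclass{2}$ statement. The only cosmetic difference is that you pad $\fXX$ with an extra block $\vec b$ (which is unnecessary since you could absorb $v$ entirely into $\fYY$ and keep $\fXX$ at length $1+n_u$), whereas the paper adds a single bit to each side; this does not affect correctness.
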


\section{Discussion}
\label{sec:discussion}
Our work raises many open questions, some of which we highlight here. 
From a computational-complexity perspective, it would be intriguing to find unambiguous problems incomparable with \PTW{} and \PMA{} due to different combinatorial criteria for uniqueness. Such problems may shed light on additional natural subclasses of \USig{2}.

Furthermore, it would be valuable to establish tighter complexity-theoretic bounds for the proposed classes, e.g., by relating \PMA{} or \PCW{} to \MA{}, \BPP{}, or \LOPClass{}.
Notably, \MA{} and \BPP{} are unlikely to contain \PTW{}, \PCW{} or \PMA{}, since such inclusions would imply that \AM{} (Arthur-Merlin) contains \coNP{} (given that $\BPP{}\subseteq\MA{}\subseteq\AM{}$, see \cite{babai1985trading,russell1998symmetric}),
which in turn would collapse the polynomial hierarchy to its second level \cite{boppana1987does}.
The same argument shows that \PMA{} is unlikely to contain \PNP{} or even \NP{} (in sharp contrast to \PTW{} and \PCW{}).
Furthermore, it would be very interesting to find oracles separating the classes, besides the one separating \PNP{} and \PTW{} mentioned in \Cref{sec:introduction}.

Finally, we believe our framework may aid in classifying the complexity of unambiguous problems still open today. For instance, there are variations of hedonic games, such as fractional or modified fractional hedonic games, where the strong popularity question remains open \cite{BG24popularity,BrBu20a,Olse12a}. We conjecture that \PCW{} is the correct class for the mentioned instances.
\section{Acknowledgements}
Earlier versions of this work included a preliminary version of the results in \cite{gilboa2026StrongPopularity}, when the two works formed a single manuscript.
We thank Amnon Ta-Shma and Noam Ta-Shma for communicating to us their result of \Cref{thm:epsilon_pairwise_code_exists}.
We thank Rahul Santhanam for drawing our attention to the result on \LOPClass{} in \cite{korten2024strong}.
We thank Lance Fortnow for suggesting that \coAM{} could bound \PMA{}, and mentioning the work of \cite{FortnowYamakami96}.
Matan Gilboa was supported by the Engineering and Physical Sciences Research Council (EPSRC, grant EP/W524311/1).
Paul Goldberg was supported by the EPSRC (grants EP/X040461/1 and EP/X038548/1), and by the ARIA project “\textit{Aggregating Safety Preferences for AI Systems: A Social Choice Approach.}”
Noam Nisan was partially supported by a grant from the Israeli Science Foundation (ISF number 505/23).

\newpage
\section*{Appendix}
\addcontentsline{toc}{section}{Appendix}
In this appendix, we provide all missing proofs from the paper. \Cref{sec:PTW_Appendix} contains proofs corresponding to \Cref{sec:PTW}, \Cref{sec:PCW_Appendix} to \Cref{sec:PCW}, \Cref{sec:PMA_Appendix} to \Cref{sec:PMA}, \Cref{sec:PNP_Appendix} to \Cref{sec:PNP}, \Cref{sec:Sig2_Appendix} to \Cref{sec:Sig2}, and \Cref{sec:Beyond_sig2_Appendix} to \Cref{sec:Beyond_sig2}.

In case of nested proofs, we use $\triangle$ to mark the end of the inner one and $\square$ for the outer one.
We denote by $in_G(x):=|\{v\in V\colon (v,x)\in E\}|$ the in-degree of vertex $x\in V$ in $G$; analogously, $out_G(x):=|\{v\in V\colon (x,v)\in E\}|$.
Given $T\subseteq V$, we denote by $G[T]$ the subgraph induced by $T$ on $G$.
For $n\in\NN$, we use the notation $[n]:=\{1,\dots,n\}$ and $[n]_0:=\{0,\dots,n-1\}$.
We take $\log (n)$ to mean the base $2$ logarithm of $n$. 

Recall that given two bit strings $\XX$ and $\YY$ we write $\XX\YY$ or $\XX\concat\YY$ for their concatenation. 
We may combine both notations in the same expression, e.g., $\XX\YY\concat\XX'\YY'$, when it is helpful to emphasize that $\XX$ and $\YY$ (respectively $\XX'$ and $\YY'$) form natural pairs, even though the entire expression represents a single concatenation.
This notation also includes explicit bit strings or single bits, for instance $0\concat\XX$ or $10^n\XX$.
We often interpret bit strings as binary numbers, and use operations like $\XX<\YY$.
\appendix
\section{Proofs of Section \ref{sec:PTW} (\PTW{})}
\label{sec:PTW_Appendix}
In this section, we provide all proofs missing from \Cref{sec:PTW}. 
We begin with \Cref{thm:TournSource_PTW}.
\begin{proof}
(\textbf{\Cref{thm:TournSource_PTW}}).
We begin by arguing that $\TournSource$ can be reduced to $\WTournSource$. 
Given an instance $\CC$ of $\TournSource$, denote by $\CC'$ the circuit that outputs $1$ if
\[\big(\CC(\XX\concat \YY)=1 \land \CC(\YY\concat \XX)=0\big)\lor \big(\CC(\XX\concat \YY)=\CC(\YY\concat \XX) \land \XX>\YY\big)\]
and $0$ otherwise.
Thus, the question of $\TournSource$ becomes: 
\[\exists \XX\in\bits{n} \;\; \forall \YY\in\bits{n} \text{ with } \YY\neq \XX\;\; \CC'(\XX\concat \YY)=1\]
A careful review of the definition shows that $\CC'$ is anti-symmetric, in that $\CC'(\XX\concat \YY)=1-\CC'(\YY\concat \XX)$. Thus, this question is equivalent to
\[\exists \XX\in\bits{n} \;\; \forall \YY\in\bits{n} \text{ with } \YY\neq \XX\;\; \CC'(\XX\concat \YY)=1 \land \CC'(\YY\concat \XX)=0\]

which is exactly of the form of \WTournSource{}, and therefore, $\CC$ is a yes-instance of $\TournSource$ if and only if $\CC'$ is a yes-instance of \WTournSource{}.

We now wish to show $\WTournSource \reduce \TournSource$. We begin with an intuitive sketch of the reduction. Consider the graph $G$ induced by $\CC$, where $n$-bit strings are vertices, and a directed edge $(\XX,\YY)$ exists if and only if $\CC(\XX\concat \YY)=1\lor\CC(\YY\concat \XX)=0$. This is a weak tournament, as some pairs of vertices may have both edge between them (if $\CC(\XX\concat \YY)=\CC(\YY\concat \XX)$). We wish to resolve this issue by removing edges where needed, thus forming a tournament $G'$. We $G'$ to admit a solution if and only if $G$ does. Thus, we cannot simply remove edges arbitrarily, as we might accidentally create a winner. Specifically, we want to make sure that any vertex that tied with another vertex in $G$ remains a loser in $G'$. To do so, for each pair of vertices $\{\XX,\YY\}$ in $G$ we add a vertex $\mathtt{v}_{\XX,\YY}$. If $\XX$ and $\YY$ tied in $G$, we will make sure to create a cycle between $\XX$, $\YY$, and $\mathtt{v}_{\XX,\YY}$, thus eliminating ties while ensuring none of them is a winner $G'$. If in $G$ there was only one edge between $\XX$ and $\YY$, say $(\YY,\XX)$, then we add the edge $(\XX,\mathtt{v}_{\XX,\YY})$ to ensure $\mathtt{v}_{\XX,\YY}$ is not preventing $\XX$ from winning, if $\XX$ happens to be a solution in $G$.
Lastly, all added vertices by default lose to any of the original vertices of $G$ that are not associated with them.

We proceed to the formal construction. First, we introduce some definitions. 
Let $\XX<\YY$. We say the string $\XX 0^n$ is an \emph{original vertex} which \emph{corresponds} to $\XX$, and the string $\XX \YY$ is a \emph{structure vertex} which \emph{corresponds} to the unordered pair $\{\XX,\YY\}$; we also say $\XX \YY$ \emph{corresponds} to $\XX$ and to $\YY$ (individually). 
Original and structure vertices are \emph{valid} strings, while the rest are \emph{invalid}. 
That is, invalid strings are those of the form $\YY \XX$ (where $\XX<\YY$), if $\XX\neq 0^n$. 
In particular, we emphasize that if $\XX=0^n$, then the string $\YY 0^n$ corresponds to $\YY$, the string $0^{2n}$ corresponds to $\XX$, and the string $0^n \YY$ corresponds to the pair $\{\XX,\YY\}$.

Now, given a circuit $\CC$ with $2n$ inputs for $\WTournSource$, we construct a circuit $\CC'$ with $4n$ inputs. Given $\XX,\YY,\WW,\ZZ\in\bits{n}$, we define $\CC'(\XX\YY\concat \WW\ZZ)$ as follows.

First, we describe its behavior on the trivial cases, where non-corresponding or non-valid vertices are given as input.

\begin{itemize}
    \item If $\XX\YY=\WW\ZZ$, output $0$ (recall that we do not care about the behavior of the circuit on identical strings, as the formulation of the question only concerns non-identical strings).

    \item If both of the strings $\XX \YY$ and $\WW \ZZ$ are invalid, output $1$ if and only if $\XX \YY > \WW \ZZ$.
    
    \item If $\XX \YY$ is valid and $\WW \ZZ$ is invalid, output $1$, and if the converse holds output $0$.
    
    \item If both of the strings $\XX \YY$ and $\WW \ZZ$ are structure vertices, output $1$ if and only if $\XX \YY > \WW \ZZ$.
    
    \item If $\XX \YY$ is an original vertex and $\WW \ZZ$ is a structure vertex not corresponding to $\{\XX,\YY\}$, output $1$.
    
    \item If $\WW \ZZ$ is an original vertex and $\XX \YY$ is a structure vertex not corresponding to $\{\WW,\ZZ\}$, output $0$.
       
\end{itemize}

We now move on to the more interesting cases.
\begin{itemize}
    \item If both $\XX \YY$ and $\WW \ZZ$ are original vertices (i.e., $\YY=\ZZ=0^n$), calculate $c_1:=\CC(\XX\concat \WW)$ and $c_2:=\CC(\WW\concat \XX)$. If $c_1\neq c_2$, output $c_1$. If $c_1=c_2$, output $1$ if and only if $\XX>\WW$. Namely, we break the tie somewhat arbitrarily using lexicographic ordering (we avoid unintentionally creating a winner in the following cases).
    
    \item Suppose $\XX \YY$ is an original vertex and $\WW \ZZ$ is a structure vertex corresponding to $\XX$.
    Then by definition $\ZZ>\WW$, and either $\XX=\ZZ$ or $\XX=\WW$.
    First, assume $\XX=\ZZ$. Calculate $c_1:=\CC(\XX\concat \WW)$ and $c_2:=\CC(\WW\concat \XX)$. If $c_1>c_2$, output $1$. If $c_1\leq c_2$, output $0$ (the case $c_1=c_2$ is included here because we have $\XX>\WW$). Thus, we have that $\XX \YY$ (i.e. $\XX 0^n$) wins against $\WW 0^n$, so we maintain that neither of the strings is a winner.
    Now, assume $\XX=\WW$. Then, calculate $c_1:=\CC(\XX\concat \ZZ)$ and $c_2:=\CC(\ZZ\concat \XX)$. Symmetrically to the previous case, if $c_1\geq c_2$, output $1$. If $c_1<c_2$, output $0$ (in this case the tie breaker is in favor of $\XX 0^n$, since $\XX 0^n$ already loses to $\ZZ 0^n$).

    \item The case where $\WW \ZZ$ is an original vertex and $\XX \YY$ is a structure vertex corresponding to $\WW$ is defined anti-symmetrically (one may define the output by $\CC'(\XX \YY\concat \WW \ZZ)=1-\CC'(\WW \ZZ\concat \XX \YY)$, since $\CC'(\WW \ZZ\concat \XX \YY)$ is well-defined above).
\end{itemize}

We argue that there exists a $\WTournSource$ solution in $\CC$ if and only if there exists a $\TournSource$ solution in $\CC'$.
Suppose there exists a $\WTournSource$ solution $\XX^*$ in $\CC$, and consider $\XX^* 0^n$. It is immediate that, in $\CC^*$, $\XX^* 0^n$ beats any invalid vertex or non-corresponding structure vertex. Additionally, since $\XX^*$ is a winner in $\CC$, $\XX^*0^n$ must beat any original vertex in $\CC'$. 
It then follows that for any corresponding structure vertex $\mathtt{v}_{\XX^*,\YY}$ for some $\YY$, vertex $\XX^*$ beats $\YY$, and thus by design $\mathtt{v}_{\XX^*,\YY}$ loses to $\XX^*$.
Similarly, for any corresponding structure vertex $\mathtt{v}_{\YY,\XX^*}$ for some $\YY$, vertex $\XX^*$ beats $\YY$, and thus by design $\mathtt{v}_{\YY,\XX^*}$ loses to $\XX^*$.

In the other direction, if $\XX^*=\XX_1 \XX_2$ is a solution to $\TournSource$ in $\CC'$, where $\XX_1,\XX_2\in\bits{n}$, then first we must have that $\XX_2=0^n$, as otherwise $\XX^*$ is other invalid or a structure vertex, in which case it loses to any (non-corresponding) original vertex. Secondly, we must have that $\XX_1$ is a solution to $\WTournSource{}$ in $\CC$, since if it loses to some $\XX'$ in $\CC$ then $\XX^*$ loses to $\XX' 0^n$ in $\CC'$, a contradiction; and if $\XX_1$ ties with some $\XX'$ in $\CC$, then by design $\XX^*$ must lose either to $\XX' 0^n$ or to $\mathtt{v}_{\XX_1,\XX'}$ (depending on whether $\XX^*>\XX'$ or not). Again, we have a contradiction to $\XX^*$ being a solution to $\CC'$.
\end{proof}

To prove \Cref{thm:PTW_S2P}, we begin by establishing the following lemma, which (formulated slightly differently) is attributed by \cite{reid2004domination} (Theorem 4) to Erd{\H{o}}s.
It asserts that if there is no winner in a tournament then we can find a short certificate for it, namely a set $S$ of vertices of logarithmic size such that all vertices lose to some $s\in S$.

\begin{lemma}
\label{lem:exists_strong_set}
(Erd{\H{o}}s) Let $G=(V,E)$ be a tournament. If there is no source in $G$, then there exists a set of vertices $S\subseteq V$ with $|S|\leq \lceil\log(|V|)\rceil+1$ such that for all $v\in V$ (including $v\in S$) we have $(s,v)\in E$ for some $s\in S$.
\end{lemma}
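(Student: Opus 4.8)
The plan is to prove this by a probabilistic argument, which is the classical Erdős approach to such domination problems. Suppose $G=(V,E)$ is a tournament with no source, so that every vertex $v\in V$ is beaten by at least one other vertex; equivalently, $\mathrm{in}_G(v)\geq 1$ for all $v$. More is true: since there is no source, for every $v$ there is at least one $u$ with $(u,v)\in E$. I want to build a small set $S$ that "dominates" all of $V$ in the sense that every $v\in V$ has an in-neighbour in $S$.

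The key step is a counting/averaging observation: in any tournament on $N=|V|$ vertices, the average in-degree is $(N-1)/2$, so there exists a vertex of in-degree at least $\lceil (N-1)/2\rceil$ — but I actually want the dual, a vertex of large \emph{out}-degree, so that picking it into $S$ "covers" many vertices (those it beats). Concretely: pick $s_1$ to be a vertex of maximum out-degree; it beats at least $\lfloor (N-1)/2\rfloor\ge (N-1)/2$ vertices. Remove $s_1$ together with all vertices it beats; what remains is the sub-tournament induced on the set $U_1$ of vertices that beat $s_1$ (plus possibly $s_1$ itself, but we can drop it), and $|U_1|\le (N-1)/2< N/2$. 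Recurse on $G[U_1]$: pick $s_2$ of maximum out-degree in $G[U_1]$, covering at least half of $U_1$, and so on. After $t$ steps the uncovered set has size less than $N/2^t$, so after $t=\lceil\log_2 N\rceil$ steps it is empty. This gives $|S|\le \lceil \log_2 N\rceil \le \log N + 1$, and by construction every $v\in V$ either equals some $s_i$ and is beaten by a later $s_j$ (need to check the endgame here), or lies in the set covered at some stage, hence is beaten by the $s_i$ chosen at that stage.

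The one point that needs care — and I expect it to be the main (minor) obstacle — is ensuring that the elements $s_i\in S$ are themselves dominated by some member of $S$, i.e. the "$v\in S$" clause in the statement. A vertex $s_i$ is beaten by every vertex of $U_i$ (those are exactly the vertices that beat $s_i$ and were \emph{not} yet covered when we chose $s_i$); and $U_i$ is nonempty precisely because $s_i$ is not a source, so $U_i\ne\varnothing$, and every element of $U_i$ gets covered at some later stage $j>i$, in particular $U_i$ contains some $s_j$ or a vertex beaten by some $s_j$ — but that is not quite what I need. What I actually need is that $s_i$ has an in-neighbour \emph{in} $S$. To guarantee this cleanly, I modify the recursion: when I pick $s_1$ of maximum out-degree, note $s_1$ is beaten by someone; continue the recursion on $U_1$ (the beaters of $s_1$) until $U_1$ is exhausted — the final singleton stage picks a vertex $s_k$ which beats $s_1$ (directly or via the chain), and by the nesting $U_1\supseteq U_2\supseteq\cdots$ the last vertex chosen beats all earlier chosen vertices. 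So reading $S=\{s_1,\dots,s_k\}$ in order, each $s_i$ with $i<k$ is beaten by some $s_j$, $j>i$; and $s_k$ — the last vertex, chosen in $G[U_{k-1}]$ with $|U_{k-1}|=1$ — is beaten by... here I need $U_{k-1}$ to not actually be a single source. This is resolved because a tournament on one vertex has that vertex as a source, contradicting nothing — so I instead stop the recursion one step earlier and argue directly: once the uncovered set has size $1$, say $\{w\}$, then $w$ is beaten by some vertex of the whole tournament (no source), and that vertex was covered at some earlier stage $i$, hence is either $s_i$ (done, $s_i$ beats... no).

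Let me restate the clean fix I would actually write: process $V$ greedily, at each step choosing $s$ of maximum out-degree \emph{in the current residual tournament} $R$ (initially $R=G$), then setting $R\leftarrow R\cap N^-(s)$ (the vertices of $R$ beating $s$), and stopping when $R=\varnothing$. Because $G$ has no source, at \emph{every} step $R\ne\varnothing$ has the property that the chosen $s$ is beaten by some vertex — but that vertex need not be in $R$. To get the self-domination I simply observe: let $S=\{s_1,\dots,s_k\}$ in order of selection. For $i<k$, $s_{i+1}\in N^-(s_i)$ by the update rule, so $s_{i+1}$ beats $s_i$; only $s_k$ remains. For $s_k$: when $s_k$ was chosen, the residual was some nonempty $R_k$, and $R_k\setminus\{s_k\}= R_k\cap N^-(s_k)=\varnothing$, meaning $s_k$ beats everything in $R_k$; but $s_k$ is not a source in $G$, so $s_k$ is beaten by some $v\notin R_k$, i.e. $v$ was covered at an earlier stage $i<k$ — and "covered at stage $i$" means $v\in N^+(s_i)$, i.e. $s_i$ beats $v$, which does not give $S$-domination of $s_k$ directly. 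The honest resolution: strengthen the invariant so that $S$ always induces a sub-tournament with no source \emph{inside} $S$, by choosing $s_{i+1}$ from $R_i$ rather than arbitrarily; since $R_i=\{v: v\text{ beats } s_i\text{ and } v\text{ beats } s_{i-1}\text{ and }\dots\}$ (nested), $s_{i+1}$ beats all of $s_1,\dots,s_i$. Thus in $S$, vertex $s_j$ is beaten by every $s_i$ with $i>j$. The only possibly-undominated-in-$S$ vertex is $s_k$. Here I close the argument by the halving bound: if after choosing $s_1,\dots,s_{k-1}$ the residual $R_{k-1}$ has size $\ge 1$, it has size $<N/2^{k-1}$, and I run the recursion to genuine exhaustion — the very last chosen vertex $s_k$ has $R_k=\varnothing$; but then for $s_k$ itself I note $s_k$ \emph{is} dominated by $s_{k-1}$? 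No — $s_k\in R_{k-1}=N^-(s_{k-1})$ means $s_k$ beats $s_{k-1}$, wrong direction.

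Given the above tangle, the cleanest writeup — and what I would ultimately commit to — reverses the roles: choose $s$ of maximum \emph{in-}degree is wrong too; instead, at each stage choose $s$ to be a vertex of maximum out-degree among the \emph{currently uncovered} vertices $C$ (initially $C=V$), add $s$ to $S$, and remove from $C$ both $s$ and all vertices $s$ beats. Since a tournament on $|C|$ vertices has a vertex of out-degree $\ge(|C|-1)/2$, each stage removes $\ge 1+(|C|-1)/2 = (|C|+1)/2$ vertices, so $|C|$ drops below $N/2^t$ after $t$ stages and reaches $0$ after $\le\lceil\log_2 N\rceil$ stages, giving $|S|\le \log N+1$. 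Every removed-at-stage-$i$ vertex $v\ne s_i$ is beaten by $s_i\in S$. It remains to handle the $s_i$ themselves: $s_i$ is not a source (no source in $G$), so some $u$ beats $s_i$; when $s_i$ was chosen, $u$ was still uncovered (if $u$ had been covered at stage $j<i$ then $s_j$ beats $u$, fine, but $u$ uncovered or not, the point is) — actually since $s_i$ had maximum out-degree among then-uncovered vertices and $u$ was uncovered (or else we'd have... ), hmm. I will in the final text simply invoke that the last vertex added, $s_k$, lies in a residual where it is undominated and argue its domination comes from $G$ having no source combined with re-examining which stage its dominator fell in — a short case analysis — and present the whole thing as: run the greedy cover; the $\le \log N+1$ bound is the content; self-domination of $S$ is a one-line consequence of no-source-ness since any $s\in S$ is beaten by \emph{some} vertex and we may, in the last stage, prefer to add that vertex. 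I expect the main obstacle to be exactly this bookkeeping that $S$ dominates itself, which is genuinely a subtlety rather than deep, and I would spend the bulk of the proof's care there, the halving bound being routine.
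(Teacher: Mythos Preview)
Your approach is the paper's: greedily add $s_i$ of maximum out-degree in the residual $T$ (the set of vertices beating every current member of $S$), which halves $|T|$ each step by averaging; the nesting then gives that $s_{i+1}$ beats $s_i$ for every $i<k$, leaving only the final $s_k$ to worry about. The clean fix you are circling is exactly what the paper does: when $|T|=1$ (necessarily $T=\{s_k\}$), add one extra vertex $x\in V\setminus S$ with $(x,s_k)\in E$---such $x$ exists because $s_k$ is not a source and every in-neighbour of $s_k$ lies outside $S$ (each $s_j$ with $j<k$ is beaten by $s_k$ via the nesting), and this $x$ is itself already covered since $x\notin T$ forces some earlier $s_j$ to beat $x$.
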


\begin{proof}
First, we prove the following proposition:
\begin{prop}
\label{prop:exists_strong_vertex}
Let $G=(V,E)$ be a tournament with $|V|\geq 2$.
Then there exists a vertex $v^*$ with $out_G(v^*)\geq \big\lceil\frac{|V|-1}{2}\big\rceil$, namely:
\[|\{v\in V\colon (v^*,v)\in E\}|\geq \big\lceil\frac{|V|-1}{2}\big\rceil\]
\end{prop}

\begin{proof}
Since $G$ is a tournament, there are exactly $\binom{|V|}{2} = \tfrac{|V|(|V|-1)}{2}$ directed edges in the graph. 
Thus, the average out-degree of the vertices is $\tfrac{|V|(|V|-1)}{2} \,/\, |V| = \tfrac{|V|-1}{2}$, 
implying there must exist a vertex with out-degree at least $\tfrac{|V|-1}{2}$. 
Since the out-degree of a vertex is an integer, we may round this up to obtain the required result.
\subqed
\end{proof}

Now, let us consider the following algorithm.
\begin{algorithm}
    \caption{Construction of logarithmic winning set}
    \label{alg:log_winning_set}
    \begin{algorithmic}[1]
        \State \textbf{Input:} Tournament $G=(V,E)$.
        \State \textbf{Output:} Set $S\subseteq V$ that beats any vertex.
        \State $S \gets \emptyset$
        \State $T \gets V$\label{alg_step:iteration_start}
        \State Pick $x\in argmax\{out_{G[T]}(x)\colon x\in T\}$\label{alg_step:pick_best_x}
        \State $S\gets S\cup\{x\}$
        \State $T\gets \{v\in V\colon \forall \; s\in S \text{ with } s\neq v \text{ we have } (v,s)\in E\}$ // vertices who do not lose to any $s\in S$.
        \State If $|T|\geq 2$, go back to \Cref{alg_step:pick_best_x}. If $|T|=1$ with $T=\{t\}$ for some vertex $t\in V$, pick $x\in V\bs S$ such that $(x,t)\in E$, set $S\gets S\cup\{x\}$, and return $S$.\label{alg_step:end}
    \end{algorithmic}
\end{algorithm}

We wish to show the output set $S$ of the algorithm satisfies the conditions of the lemma.
In each iteration $T$ is defined as the set of uncovered vertices in $V$, namely those that do not lose to any $s\in S$. Since the algorithm only stops when $T=\emptyset$, we have that upon termination, $S$ indeed covers the entire set $V$, as all vertices in $V$ lose to some $s\in S$.

It remains to show that the algorithm terminates and that $|S|$ is sufficiently small.
Denote by $T_i$ the state of the set $T$ from \Cref{alg:log_winning_set} in iteration $i$.
By \Cref{prop:exists_strong_vertex} (applied on $G[T]$), we have that $|T_{i+1}|\le |T_i|-\lceil\frac{|T_i|-1}{2}\rceil\le \big\lceil \frac{|T_i|}{2}\big\rceil$ for all iterations $i$ except the last (when $|T|=1$); a simple case analysis on the parity of $|T_i|$ shows why this holds.
Furthermore, \Cref{alg_step:end} is well defined, since there is no source in $G$, and therefore there must exist a vertex $x$ that covers the final remaining vertex $t$. The first observation implies that the number of iterations before we reach $|T|=1$ is bounded by $\lceil\log(|V|)\rceil$, while the second observation establishes that the algorithm indeed terminates.
Since in each iteration $|S|$ increases by one, we conclude that at the end of the run $|S|\leq \lceil\log(|V|)\rceil+1$.
\end{proof}

We can now prove \Cref{thm:PTW_S2P}.
\begin{proof}
(\textbf{\Cref{thm:PTW_S2P}}).
To prove this, we show that $\TournSource \in \SymP$.
Let $\CC$ be an instance of $\TournSource$. It is useful to rephrase the problem of this instance as follows. For all $\XX,\YY\in\bits{n}$ define the circuit $\CC'$ by:
\[
\CC'(\XX\concat \YY)=
\begin{cases}
    1 & \text{if } \YY\neq\XX \land \Big(\big(\CC(\XX\concat \YY)=1 \land \CC(\YY\concat \XX)=0\big)\lor \big(\CC(\XX\concat \YY)=\CC(\YY\concat \XX) \land \XX>\YY\big)\Big)\\
    0 & otherwise
\end{cases}
\]
Thus, the question becomes:
\[\exists \XX\; \forall \YY\neq\XX\;\;\CC'(\XX\concat \YY)=1\]
It follows from the definitions that the answer to this question is yes if and only if $\CC$ is a yes-instance of $\TournSource$.
This representation is convenient as $\CC'$ induces a tournament on $\bits{n}$, since for all $\XX,\YY\in\bits{n}$ with $\XX\neq\YY$ we have $(\CC'(\XX\concat \YY)=1\land \CC'(\YY\concat \XX)=0)\lor(\CC'(\XX\concat \YY)=0\land \CC'(\YY\concat \XX)=1)$.

Given a set of strings $S\subseteq \bits{n}$ with $|S|\leq n+1$, and a string $v\in\bits{n}$, let $P(\CC',S,v)$ be the predicate that outputs $1$ if and only if $\CC'(s\concat v)=0$ for all $s\in S$ (that is, the vertex $v$ loses to no vertex in $S$). Since $|S|\leq n+1$, $P$ can be computed in polynomial time.

Now, if there exists a source $v^*$ in the graph induced by $\CC'$, then for all $S\subseteq\bits{n}$ we have $P(\CC',S,v^*)=1$, since $v^*$ loses to no vertex (if $v^*$ itself is in $S$ we still have $P(\CC',S,v^*)=1$, since $\CC'(v^*\concat v^*)=0$).

On the other hand, if there does not exist such a source, then by \Cref{lem:exists_strong_set} there exists a set $S\subseteq\bits{n}$ with $|S|\leq n+1$ such that for all $v\in\bits{n}$ we have that $v$ loses to some $s\in S$, namely $P(\CC',S,v)=0$.

Thus, by definition we have that $\TournSource\in\SymP$.
\end{proof}

\begin{proof}
(\textbf{\Cref{thm:L2P_PTW}}).
Given a language $L\in\LOPClass{}$, let $P(\cdot,\cdot,\cdot)$ be a polynomial-time predicate and $p$ a polynomial, such that for all $w\in\Sigma^*$, $P(w,\cdot,\cdot)$ defines a total order on $\bits{p(n)}$ whose minimal element $a$ has $a_1=L(w)$.
Let $w\in \Sigma^*$ with $|w|=n$.
We construct a Boolean circuit $\CC_{w}:\bits{2p(n)}\to\{0,1\}$, where for all $\XX,\YY\in\bits{p(n)}$ we let $\CC_{w}(\XX,\YY)=1$ if and only if $P(w,\XX,\YY)=1\wedge \XX_1=1$ (where $\XX_1$ denotes the first bit of $\XX$).
If $w\in L$, then the minimal element $\XX^*$ in the linear order $P(w,\cdot,\cdot)$ is clearly a solution for \WTournSource{} on $\CC_w$. Otherwise, there cannot be a solution: The minimal element $\XX^*$ does not satisfy the condition on its first bit, and any other string must lose to someone on account of not being minimal in the linear order.
\end{proof}

\begin{proof}
(\textbf{\Cref{thm:coNP_source_PTW}}).
It is immediate that $\WTournSource$ is polynomial-time reducible to $\MultiTournSource$, by ignoring the last $n$ input bits. 

We now show that $\MultiTournSource\reduce\WTournSource$. Given an instance $\CC$ of $\MultiTournSource$ with $3n$ input bits, we construct a circuit $\CC'$ with $4n$ input bits. On inputs $\XX_1,\XX_2,\YY,\ZZ\in\bits{n}$ we define:
\[
\CC'(\XX_1\concat \XX_2\concat \YY\concat \ZZ)=
\begin{cases}
    1 & \text{if } \XX_2=0^n \land \big(\XX_1\neq\YY\implies(\CC(\XX_1\concat \YY\concat \ZZ)=1 \land \CC(\YY\concat \XX_1\concat \ZZ)=0)\big)\\
    0 & otherwise
\end{cases}
\]
We prove that there exists a solution in the original instance if and only if there exists one in the reduced instance.
Assume that there exists $\XX^*\in\bits{n}$ such that $\forall \YY,\ZZ\in\bits{n}$ with $\YY\neq \XX^*$ we have that $\CC(\XX^*\concat \YY\concat \ZZ)=1\land \CC(\YY\concat \XX^*\concat \ZZ)=0$. 
Consider the string $\XX^*0^n$. Let $\YY,\ZZ\in\bits{n}$ such that $\YY\ZZ\neq\XX^* 0^n$.
If $\YY=\XX^*$ then we must have $\ZZ\neq 0^n$. 
Thus, by definition of $\CC'$ we have that $\CC'(\XX^*0^n\concat \YY\ZZ)=1$ (since $\XX^*=\YY$), while $\CC'(\YY\ZZ\concat \XX^*0^n)=0$ (since $\ZZ\neq 0^n$).
If $\YY\ne \XX^*$, then the choice of $\XX^*$ ensures that $\CC'(\XX^*0^n\concat \YY\ZZ)=1\land \CC'(\YY\ZZ\concat \XX^*0^n)=0$.
Hence, $\XX^*0^n$ is a solution for $\WTournSource{}$ on $\CC'$. 

In the opposite direction, assume there exist $\XX_1,\XX_2\in\bits{n}$ such that $\forall \YY,\ZZ\in\bits{n}$ with $\YY\ZZ\neq\XX_1\XX_2$ we have $\CC'(\XX_1\XX_2\concat \YY\ZZ)=1\land \CC'(\YY\ZZ\concat \XX_1\XX_2)=0$. 
Let $\YY,\ZZ\in\bits{n}$ with $\YY\neq\XX_1$. In particular, this implies $\YY\ZZ\neq\XX_1\XX_2$, and therefore $\CC'(\XX_1\XX_2\concat \YY\ZZ)=1$. 
Thus, by definition of $\CC'$, and since $\XX_1\neq\YY$, we have that $\CC(\XX_1\concat \YY\concat \ZZ)=1\land \CC(\YY\concat \XX_1\concat \ZZ)=0$. 
Hence, $\XX_1$ is a solution for $\MultiTournSource$ on $\CC$.
\end{proof}
\section{Proofs of Section \ref{sec:PCW} (\PCW{})}
\label{sec:PCW_Appendix}
In this section, we provide all proofs missing from \Cref{sec:PCW}.

Throughout, given an instance $\langle \CC_1,\dots,\CC_k\rangle$ of \CKTCondorcet{} or $\kCKTCondorcet{k}$, we say a bit string $\XX$ is \emph{more popular} than $\YY$ if $\XX$ beats $\YY$ in a majority vote among the circuits, namely if $\sum_{i=1}^k\big(\sgn(\CC_i(\XX)-\CC_i(\YY))\big)>0$.

\begin{proof}
(\textbf{\Cref{thm:PCW_PTW}}).
\CKTCondorcet{} can be reduced to \WTournSource{} by constructing a circuit which, given two strings, outputs $1$ if the first is more popular than the second in the original instance, and $0$ otherwise.
A string $x\in\bits{n}$ is thus a Condorcet winner in the original instance if and only if it is a source in the reduced one.
\end{proof}

\begin{proof}
(\textbf{\Cref{thm:Condorcet_parameterized}}).
\Cref{thm:Condorcet_parameterized:kk'} holds since we can pad an instance of $\kCKTCondorcet{k}$ with $k'-k$ degenerate circuits that always output 0, thus not affecting the result of a majority vote of any pair of strings.
\Cref{thm:Condorcet_parameterized:PCW} holds trivially since an instance of $\kCKTCondorcet{k}$ is by definition an instance of \CKTCondorcet{}.
\Cref{thm:Condorcet_parameterized:PNP} holds for $k=1$ as this yields exactly the problem \UOPT{}, which by \Cref{lem:UOPT} is \PNP{}-complete; for any $k\ge 2$, the result then follows from \Cref{thm:Condorcet_parameterized:kk'}.
\end{proof}

\begin{proof}
(\textbf{\Cref{thm:PNP_PCW}}).
The proof follows from \Cref{thm:Condorcet_parameterized:PCW,thm:Condorcet_parameterized:PNP} of \Cref{thm:Condorcet_parameterized}.
\end{proof}

We now turn our attention to proving \Cref{thm:all_dice_PCW_complete}.
We begin with establishing \PCW{}-hardness of the considered problems, followed by containment in \PCW{}.

\begin{lemma}
\label{lem:gdice_cktdice}
$\GDice{}\reduce\CKTDice{}$.
\end{lemma}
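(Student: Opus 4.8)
The plan is to reduce \GDice{} to \CKTDice{} by encoding the graph-based formulation directly as a circuit-based one, using partitions of the vertex set as die labels. Given an instance $G=(V,w)$ of \GDice{} with $|V|=n$, a partition $\pi$ of $V$ determines for each vertex $i$ a face value $s_i(\pi)=\sum_{l\in\pi(i)}w_i(l)$. So the natural correspondence is: a die is a partition, and its $n$ faces are indexed by the vertices of $V$, with the value of face $i$ being $s_i(\pi)$. The question in \GDice{} already has exactly the form $\sum_{i}\sum_{j}\sgn(s_i(\pi^*)-s_j(\pi'))>0$, which matches the \CKTDice{} objective $\sum_i\sum_j\sgn(\CC_i(\XX)-\CC_j(\YY))>0$ once we set $m=n$ and let $\CC_i$ compute $s_i(\pi)$ from an encoding of $\pi$.

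The concrete steps I would carry out are as follows. First, fix an encoding of partitions of $[n]$ as bit strings of some polynomial length $N$: for instance, a partition can be encoded by the vector $(b_1,\dots,b_n)$ where $b_i\in[n]$ is the index of the smallest vertex in $\pi(i)$ (a canonical ``representative'' labelling), so $N=n\lceil\log n\rceil$. Call a string $\XX\in\bits{N}$ \emph{valid} if it encodes a legitimate partition in this canonical form (i.e., the representative map is idempotent and consistent). Validity is checkable by a small circuit. Second, construct circuits $\CC_1,\dots,\CC_n\colon\bits{N}\to\bits{n'}$ where, on a valid input $\XX$ encoding $\pi$, $\CC_i(\XX)$ outputs the signed-binary value $s_i(\pi)$ (computable in polynomial time from the $w$-values read off the input $G$, which are part of the reduction's hard-coded data), with $n'$ chosen large enough to hold any such sum. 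Third — the point that needs care — handle \emph{invalid} strings: we must ensure they can never be winning dice and never spoil a genuine winner. The standard trick is to make every invalid die a ``loser'' uniformly: on an invalid $\XX$, have every $\CC_i(\XX)$ output the minimum possible value (e.g. a very negative number, smaller than any real face value), so that an invalid die loses its pairwise comparison against every valid die, while among invalid dice all faces tie, so no invalid die beats all others; and a valid winning die still beats every invalid die. One must then check both directions: if $\pi^*$ is a strongly popular partition in $G$, its encoding is a Condorcet-type winner among valid dice and beats all invalid ones, hence a solution of \CKTDice{}; conversely any \CKTDice{} solution must be a valid string (invalid ones fail against valid ones, and there is at least one valid string, e.g. the singleton partition), and its decoded partition wins the majority vote against all other partitions.

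The main obstacle I expect is the bookkeeping around invalid encodings together with the fact that the \CKTDice{} objective double-counts over \emph{all} $m^2$ ordered face-pairs, including comparisons of a die's faces with \emph{its own} other faces when $\YY=\XX$ is excluded but cross-die self-like comparisons remain; one has to verify that setting invalid faces to a common sufficiently-small sentinel value makes the sign-sum strictly negative for an invalid challenger against a valid incumbent, and exactly zero in symmetric degenerate situations, so that no invalid string accidentally becomes a source. A secondary subtlety is choosing the sentinel value and the output width $n'$: the sentinel must be strictly below $-\sum_{i,l}|w_{i}(l)|$ (using the hard-coded weights), and $n'$ must accommodate both the sentinel and the largest genuine $s_i(\pi)$ in signed binary; both are clearly polynomial in the input size, so the reduction runs in polynomial time. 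Once these are pinned down, correctness in both directions is a direct unwinding of the definitions, and the reduction is complete.
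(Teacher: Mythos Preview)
Your proposal is correct and follows essentially the same approach as the paper: one circuit per vertex, each computing $s_i(\pi)$ from an encoding of the partition, so that the \CKTDice{} objective coincides term-by-term with the \GDice{} objective. The paper's proof is a two-sentence sketch that does not explicitly address invalid encodings; your treatment of this via a sentinel value is a reasonable way to fill in that detail, and your analysis of why invalid strings can neither win nor block a genuine winner is sound.
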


\begin{proof}
Given an edge-weighted graph we can construct a circuit for each vertex which gets an encoding of a partition as input and outputs the sum of outgoing edge-weights of its vertex in the given partition. The calculations are then identical in both games, and thus there is a winner in the former if and only if there is one in the latter.
\end{proof}

We now wish to show that \GDice{} is \PCW{}-hard.
To do so, we reduce from \ASHG{}, which is \PCW{}-complete by \Cref{thm:ashg_pop_pcw}.
We begin with a formal description of the problem \ASHG{}.
Let $N$ be a set of agents. 
A \emph{coalition} is a non-empty subset of $N$.
For an agent $i\in N$, we denote by $\pi(i)$ the coalition $i$ belongs to in $\pi$. 
An ASHG is specified by a pair $\langle N,\mathbf{v} \rangle$ where $N$ is a set of agents, and $\mathbf{v} = (\mathbf{v}_i\colon N\to \RR)_{i\in N}$ is a collection of \textit{valuation functions}. 
The quantity $\mathbf{v}_i(j)$ denotes the value agent $i$ assigns to agent $j$. 
We define the utility of agent $i$ in coalition $S$ by $u_i(S)=\sum_{j\in S\bs\{i\}}\mathbf{v}_i(j)$, namely the sum of values she assigns to the other members of her coalition. 
We extend the utilities for partitions by setting $u_i(\pi)=u_i(\pi(i))$, for any agent $i$ and partition $\pi$.
We say partition $\pi$ is \emph{more popular} than $\pi'$ if $\sum_{i=1}^N \sgn (u_i(\pi)-u_i(\pi'))>0$; That is, the number of agents who prefer $\pi$ is greater than the number of agents who prefer $\pi'$ (indifferent agents are not counted).
We say $\pi$ is \emph{strongly popular} if $\pi$ is more popular than any partition $\pi'\ne\pi$.

\begin{wbox}
\label{def:ASHG}
\ASHG{}\\
\textbf{Input:} Additively separable hedonic game $\langle N,\mathbf{v} \rangle$.\\
\textbf{Question:} Does $\langle N,\mathbf{v} \rangle$ admit a strongly popular partition?
\end{wbox}

\begin{lemma}
\label{lem:gdice_pcw}
\GDice{} is \PCW{}-hard.
\end{lemma}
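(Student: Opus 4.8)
The plan is to reduce the canonical \PCW{}-complete problem \CKTCondorcet{} to \GDice{}, thereby establishing \PCW{}-hardness. Given an instance $\langle \CC_1,\dots,\CC_m\rangle$ of \CKTCondorcet{} with $n$ inputs and $n$ outputs each, I need to build an edge-weighted directed graph $G$ whose partitions correspond to the $2^n$ candidate strings, and whose induced ``dice'' comparison reproduces the majority-vote comparison of \CKTCondorcet{}. The key design decision is how to encode a string $\XX \in \bits{n}$ as a partition, and how to make the face-values $s_i(\pi)$ (the weighted out-sums within coalitions) encode the voter evaluations $\CC_i(\XX)$.

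First I would set up a block structure: allocate one ``gadget'' of vertices per voter $i$, say a pair of vertices (or a small cluster) $a_i, b_i$, plus an ``address'' gadget of $n$ vertices $c_1,\dots,c_n$ whose partition encodes which string $\XX$ we are talking about. The intent is that a legal/meaningful partition $\pi$ partitions the address vertices in one of $2^n$ canonical ways (encoding $\XX$), and for each voter block the weights are chosen so that the face-value contributed by block $i$ equals exactly $\CC_i(\XX)$ when read off from the partition. Concretely, I would make the edge weights out of $c_j$ into the voter-$i$ block depend on whether $c_j$ is grouped with a designated ``on'' vertex, so that summing over $j$ reconstructs a weighted sum whose value is $\CC_i(\XX)$; since $\CC_i$ is an arbitrary circuit this cannot be done by a fixed linear readout, so instead I would let each voter block itself be structured enough to ``simulate'' the circuit — but that is impossible for a fixed graph. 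The correct move, matching how \GDice{} is stated (weights are real numbers, vertices map to faces), is the reverse: build the graph so that \emph{every} partition is legal, and ensure that for the comparison $\sum_i \sum_j \sgn(s_i(\pi^*) - s_j(\pi'))$ to be positive, $\pi^*$ must be a partition whose per-block face multiset dominates, voter-by-voter, that of $\pi'$; then invoke the structure already developed for \CKTDice{}. In fact the cleaner route is to first reduce \CKTCondorcet{} (or \CKTDice{}) to \GDice{} and then use \Cref{lem:gdice_cktdice} for the other direction — but since \Cref{lem:gdice_cktdice} gives $\GDice \reduce \CKTDice$, here I specifically need hardness \emph{for} \GDice{}, so I must reduce some known-hard problem \emph{to} \GDice{} directly.

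So the concrete plan: reduce \CKTCondorcet{} to \GDice{}. Given circuits $\CC_1,\dots,\CC_m$, I build $G$ with a ``selector'' region of $n$ vertex-pairs $\{u_j, \bar u_j\}$, where a partition's restriction to these $2n$ vertices encodes a string $\XX \in \bits{n}$ (bit $j$ is $1$ iff $u_j$ is in a coalition with a fixed anchor vertex). For each voter $i$ I create exactly one ``representative'' vertex $r_i$; I route weights so that $s_{r_i}(\pi)$ — the weighted out-sum of $r_i$ within its coalition — equals $\CC_i(\XX)$. To achieve $s_{r_i}(\pi) = \CC_i(\XX)$ for an arbitrary circuit I will need the coalition containing $r_i$ to contain enough auxiliary ``computation'' vertices whose grouping is forced by $\XX$; this is the crux and the main obstacle. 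The trick I would try: make $r_i$'s coalition forced (by huge attractive weights) to contain a fixed set of ``wire vertices'' $W_i$, one per gate of $\CC_i$, and then observe that $r_i$'s out-weight to $W_i$ can be made to depend only on the gate \emph{values}, which in turn are forced functions of $\XX$ only if the partition of the wire vertices is itself forced by $\XX$ — achievable by a cascade of large weights propagating the circuit evaluation through the coalition structure. All other vertices are made into a huge ``junk'' region with tiny/zero internal weights, so their faces are uniformly dominated and irrelevant to the comparison; and I add padding so the total face count and the $\sgn$-sum threshold $>0$ works out exactly as the $>0$ majority condition of \CKTCondorcet{}. Finally I verify: a Condorcet winner $\XX^*$ yields the partition $\pi^*$ encoding it, and $\sum_{i,j}\sgn(s_i(\pi^*) - s_j(\pi')) > 0$ holds iff $\XX^*$ beats $\XX'$ by majority among the voters, for every $\pi' \ne \pi^*$; conversely any \GDice{} solution must be of this canonical form (junk/ill-formed partitions are dominated) and hence gives a Condorcet winner. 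The hard part will be engineering the weights so that (a) $r_i$'s coalition is \emph{forced} to be exactly the intended set of computation vertices for the string currently encoded, and (b) no ``illegal'' partition can accidentally be a winner — this requires a careful hierarchy of weight magnitudes (exponentially separated) so that coalition structure is rigidly determined level by level, which is standard but fiddly; I expect this to be where essentially all the technical work lies.
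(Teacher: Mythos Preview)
Your proposal attempts a direct reduction from \CKTCondorcet{} to \GDice{}, which would require encoding arbitrary Boolean circuit evaluation into the coalition structure of a weighted graph. You correctly identify that this is the crux, and you explicitly defer it (``I expect this to be where essentially all the technical work lies''). That is a genuine gap: you have not actually built the gadgetry, and the mechanism you sketch---``forcing'' coalition structure via a weight hierarchy so that wire vertices compute gate values---does not obviously survive the fact that in \GDice{} nothing is ``forced''; every partition is a legitimate competitor, so you would have to prove that the candidate winning partition beats every ill-formed partition in the dice comparison, not just the well-formed ones. That argument is exactly as hard as the full circuit-encoding reduction the paper carries out elsewhere (for \ASHG{}, in \Cref{sec:ASHG}), and your proposal does not supply it.

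The paper avoids all of this by reducing from \ASHG{} rather than from \CKTCondorcet{}. Since \ASHG{} is already shown to be \PCW{}-complete (\Cref{thm:ashg_pop_pcw}), and since \ASHG{} instances are already edge-weighted graphs whose partitions are being compared, the reduction to \GDice{} is almost immediate: keep the same vertex set and the same inter-vertex weights, and add a self-loop of weight $i\cdot(U+L+1)$ at vertex $i$ (where $U,L$ bound the total positive and negative valuations). This shifts each vertex's face value $s_i(\pi)$ into an interval disjoint from every other vertex's interval, so for $i\ne j$ the sign $\sgn(s_i(\pi_1)-s_j(\pi_2))$ is fixed independently of the partitions and these cross-terms cancel pairwise. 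What remains is exactly $\sum_i \sgn(u_i(\pi_1)-u_i(\pi_2))$, the popularity margin. Thus a partition is a \GDice{} winner iff it is strongly popular in the original ASHG. The entire argument is a few lines; the heavy circuit-encoding work you were about to redo is already packaged inside \Cref{thm:ashg_pop_pcw}.
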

\begin{proof}
We reduce from \ASHG{}. 
The idea of the reduction is to modify the weights such that the ranges of values that different vertices may obtain are pairwise disjoint. 
Thus, comparisons between different vertices becomes irrelevant since one vertex dominates the other regardless of the input dice, and so a strongly popular partition in the original instance is equivalent to a winning die in the reduced instance. 
Formally, given an instance $\langle N,v\rangle$ of \ASHG{}, let 
\begin{equation}
U = \sum_{\substack{i,j \in [n] \\ v_i(j) > 0}} v_i(j),\text { and }\;L = \sum_{\substack{i,j \in [n] \\ v_i(j) < 0}} |v_i(j)|
\end{equation}
be the sums of all positive and absolute values of all negative values that agents assign to each other in the given hedonic game, respectively. Define the directed graph $G=(V,w)$ with $V=N$ and for all $i,j\in N$, let:
\[
w_i(j)=
\begin{cases}
    v_i(j) & \text{if } i\neq j\\
    i\cdot (U+L+1) & \text{if } i=j
\end{cases}
\]
Notice that weights correspond to valuations of the original instance, except for self-loops which are modified proportionally to the vertex number (whereas in ASHG agents assign value $0$ to themselves). Therefore, with $s$ as in the definition of \GDice{}, for any partition $\pi$ we have that $s_i(\pi)\in[i\cdot (U+L+1)-L,i\cdot (U+L+1)+U]$, which can also be written as 
\[s_i(\pi)\in[i\cdot (U+L+1)-L,(i+1)\cdot (U+L+1)-L-1].\]
Thus, given two partitions $\pi_1$ and $\pi_2$ we have $s_i(\pi_1)\leq (i+1)\cdot (U+L+1)-L-1<(i+1)\cdot (U+L+1)-L\leq s_{i+1}(\pi_2)$.
Inductively, for vertices $i<j$ we have that 
\begin{equation}
\label{eq_pop_dice:ranges_by_index}
s_i(\pi_1)<s_j(\pi_2).
\end{equation} 
Furthermore, notice that 
\begin{equation}
\label{eq_pop_dice:order_preserved}
s_i(\pi_1)>s_i(\pi_2) \iff u_i(\pi_1)>u_i(\pi_2)
\end{equation} 
(namely, the preference order of each individual agent over the partitions is preserved), since $s_i(\pi_1)=u_i(\pi_1)+i\cdot (U+L+1)$ and $s_i(\pi_2)=u_i(\pi_2)+i\cdot (U+L+1)$. Thus, for any two partitions $\pi_1$ and $\pi_2$ we have:
\[
\sum_{i=1}^n\sum_{j=1}^n\big(\sgn(s_i(\pi_1)-s_j(\pi_2))\big)=\]
\[
\sum_{i=1}^n
\Bigg(\sum_{j=1}^{i-1}\big(\sgn(s_i(\pi_1)-s_j(\pi_2))\big) + 
\sum_{j=i+1}^n\big(\sgn(s_i(\pi_1)-s_j(\pi_2)) \big)
+
\big(\sgn(s_i(\pi_1)-s_i(\pi_2))\big) \Bigg)=
\]
\[
\sum_{i=1}^n
\Bigg((i-1) - (n-i) +
\big(\sgn(s_i(\pi_1)-s_i(\pi_2))\big) \Bigg)=
\sum_{i=1}^n(2i-n-1) + \sum_{i=1}^n\big(\sgn(u_i(\pi_1)-u_i(\pi_2))\big)=
\]
\[\sum_{i=1}^n\big(\sgn(u_i(\pi_1)-u_i(\pi_2))\big)\]

where the second equality stems from \Cref{eq_pop_dice:ranges_by_index}, the third from \Cref{eq_pop_dice:order_preserved}, and the fourth is due to:
\[
\sum_{i=1}^n(2i-n-1)=2\sum_{i=1}^n{i}-\sum_{i=1}^n{n}-\sum_{i=1}^n{1}=n(n+1)-n^2-n=0.
\]
Therefore, we have
$\sum_{i=1}^n\sum_{j=1}^n\big(\sgn(s_i(\pi_1)-s_j(\pi_2) \big)>0$
if and only if 
$\sum_{i=1}^n\big(\sgn(u_i(\pi_1)-u_i(\pi_2))\big)>0$,
and so $\pi_1$ is a solution for \GDice{} on $G$ if and only if it is a solution for \ASHG{} on $\langle N,v\rangle$.
\end{proof}

\begin{lemma}
\label{lem:cktdice_pcw}
\CKTDice{} is \PCW{}-hard.
\end{lemma}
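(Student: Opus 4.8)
The plan is to establish \PCW{}-hardness of \CKTDice{}. The quickest route is simply to compose: \Cref{lem:gdice_pcw} gives that \GDice{} is \PCW{}-hard, \Cref{lem:gdice_cktdice} gives $\GDice{}\reduce\CKTDice{}$, and transitivity of $\reduce$ (equivalently, closure of \PCW{} under many-one reductions) immediately yields that \CKTDice{} is \PCW{}-hard. I would nonetheless also record a direct reduction from \CKTCondorcet{} to \CKTDice{} (which suffices, since \CKTCondorcet{} is \PCW{}-complete by definition), because it exposes exactly why the all-pairs double sum defining \CKTDice{} collapses onto the diagonal sum defining \CKTCondorcet{}.

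For the direct reduction, given an instance $\langle\CC_1,\dots,\CC_m\rangle$ of \CKTCondorcet{}, each circuit with $n$ inputs and $n$ outputs, I would build a \CKTDice{} instance $\langle\CC'_1,\dots,\CC'_m\rangle$ with the \emph{same} number $m$ of circuits but output length $n' = n + \lceil\log(m+1)\rceil + 1$. The key trick is \emph{index bucketing}: fix the affine shift $s$ that maps the (signed or unsigned) $n$-bit outputs of the original circuits monotonically onto $[0,2^n-1]$, and define
\[
\CC'_i(\XX) \;=\; i\cdot 2^n + s(\CC_i(\XX)),
\]
so that the high-order bits of face $i$ encode the index $i$ and the low-order $n$ bits encode the original value (the leading sign bit stays $0$). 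Each $\CC'_i$ is $\CC_i$ post-composed with a fixed, trivially circuit-computable map, so the transformation is polynomial time.

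For correctness I would split $\sum_{i,j}\sgn(\CC'_i(\XX)-\CC'_j(\YY))$ into the diagonal terms $i=j$ and the off-diagonal terms $i\neq j$. Because the value ranges $[\,i\cdot 2^n,\,(i+1)\cdot 2^n-1\,]$ for distinct $i$ are disjoint, for $i\neq j$ one gets $\sgn(\CC'_i(\XX)-\CC'_j(\YY))=+1$ when $i>j$ and $-1$ when $i<j$, independently of $\XX$ and $\YY$; summing over all $\binom{m}{2}$ pairs with $i>j$ and all $\binom{m}{2}$ with $i<j$ gives an off-diagonal contribution of $\binom{m}{2}-\binom{m}{2}=0$. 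For $i=j$, monotonicity of $s$ gives $\sgn(\CC'_i(\XX)-\CC'_i(\YY))=\sgn(\CC_i(\XX)-\CC_i(\YY))$. Hence $\sum_{i,j}\sgn(\CC'_i(\XX)-\CC'_j(\YY))=\sum_i\sgn(\CC_i(\XX)-\CC_i(\YY))$ for every $\XX\neq\YY$, so $\XX$ is a winning die in the reduced instance iff $\XX$ is more popular than every other string, i.e.\ iff $\XX$ is a Condorcet string of $\langle\CC_1,\dots,\CC_m\rangle$ (the side condition $\YY\neq\XX$ carries over verbatim). Thus \CKTDice{} is \PCW{}-hard.

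The step I expect to require the most care is not any deep idea but the bookkeeping around the signed-integer interpretation stipulated by \CKTDice{}: I must check that the shift $s$ is strictly monotone (so diagonal signs are preserved) and that $n'$ is chosen large enough that the $2^n$-sized buckets remain disjoint and never overflow into the sign bit, so that the off-diagonal cancellation is exact. Everything else—polynomiality of the construction and the arithmetic identity $\binom m2 - \binom m2 = 0$—is routine.
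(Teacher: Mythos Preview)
Your first approach---composing \Cref{lem:gdice_pcw} and \Cref{lem:gdice_cktdice}---is exactly the paper's proof, verbatim. Your additional direct reduction from \CKTCondorcet{} to \CKTDice{} via index bucketing is correct and is essentially the same disjoint-ranges trick the paper uses in the proof of \Cref{lem:gdice_pcw} (there to reduce \ASHG{} to \GDice{}); it is a clean alternative that bypasses the intermediate \GDice{} step, at the cost of the minor signed-integer bookkeeping you already flagged.
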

\begin{proof}
This follows immediately from \Cref{lem:gdice_cktdice,lem:gdice_pcw}.
\end{proof}

So far, we have seen that both formulations of the dice problem are \PCW{}-hard. 
We now wish to reduce from \CKTDice{} to \CKTCondorcet{} to show it is in \PCW{}. 
We begin with an intuitive explanation. 
Consider the following randomized reduction (which we will de-randomize). 
Given an instance $\fCC=\langle\CC_1,\dots,\CC_m\rangle$ of \CKTDice{}, we create a larger collection of $M>m$ circuits $\fCC'=\langle\CC'_1,...,\CC'_M\rangle$, such that for $i\in[M]$ we set $\CC'_i(\XX)=\CC_j(\XX)$ where $j$ is chosen from $[m]$ uniformly at random.
Let $\XX$ and $\YY$ be two dice, and consider two faces $l,k\in[m]$. 
For how many values of $i\in[M]$ do we have that $\CC'_i(\XX)=\CC_l(\XX)$, and $\CC'_i(\YY)=\CC_k(\YY)$? 
If $M$ is large enough, with high probability we have that the number of such values will be almost identical for all $(l,k)$ pairs, and therefore we have that if die $\XX$ beats die $\YY$ in $\fCC$, then with high probability $\XX$ is more popular than $\YY$ in $\fCC'$. 

However, we encounter a problem with this idea: 
If two dice tie with each other, the randomized process will most likely result in one of them being more popular than the other in the reduced Condorcet instance, and thus unintentional Condorcet winners may be created. 
To cope with this issue, we introduce a \emph{strict} version of \CKTDice{}, namely \StrictCKTDice{}, where we break ties using lexicographic order of the dice labels. 
We show that the original version can be reduced to the strict one, and then proceed with formalizing a de-randomized version of the construction described above.

The de-randomization process relies on a result from coding theory by Ta-Shma and Ta-Shma (personal communication), who show that there exists a code with a pseudo-randomness property that ensures the desired functionality of the reduction, as described above. 
That is, instead of setting $\CC'_i(\XX)=\CC_j(\XX)$ with $j$ chosen randomly, we choose $j$ deterministically according to this code.
The notation in \Cref{thm:all_dice_PCW_complete} differs from the rest of the section in order to correspond with coding theory literature, as appears in \Cref{def:epsilon_pairwise_code,def:poly_constructible_set} and \Cref{thm:epsilon_pairwise_code_exists}.

\begin{wbox}
    \StrictCKTDice{}\\ 
    \textbf{Input:} Sequence of Boolean circuits $\fCC=\langle\CC_1,\dots,\CC_m\rangle$, with $\CC_i\colon\bits{n}\to\bits{n'}$ for all $i\in[m]$.\\
    \textbf{Question:} Does the following hold:
    \[\exists \XX\in\bits{n} \;\; \forall \YY\in\bits{n} \text{ with } \YY\neq \XX\]
    \[\Bigg(\sum_{i=1}^m\sum_{j=1}^m\big(\sgn(\CC_i(\XX)-\CC_j(\YY))\big)>0\Bigg) \lor \Bigg(\sum_{i=1}^m\sum_{j=1}^m\big(\sgn(\CC_i(\XX)-\CC_j(\YY))\big)=0 \land \XX>\YY \Bigg)\]
\end{wbox}

\begin{lemma}
\label{lem:dice_reduce_strict_dice}
$\CKTDice{}\reduce\StrictCKTDice{}$.
\end{lemma}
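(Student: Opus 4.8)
The plan is to exhibit a polynomial-time many-one reduction from $\CKTDice{}$ to $\StrictCKTDice{}$ by perturbing the face values so that a strict tie-break by lexicographic order of die labels is simulated purely by arithmetic, without changing which die beats which when there is already a strict winner. Given an instance $\fCC=\langle\CC_1,\dots,\CC_m\rangle$ of $\CKTDice{}$ with $\CC_i\colon\bits{n}\to\bits{n'}$, I would first scale all face values by a large factor $M$ (e.g.\ $M=m^2+1$, large enough that $m^2$ $\sgn$-contributions cannot overcome a single scaled unit), defining intermediate circuits $\widehat{\CC}_i(\XX)=M\cdot\CC_i(\XX)$; this leaves the sign pattern of every pairwise comparison $\sgn(\CC_i(\XX)-\CC_j(\YY))$ unchanged, hence leaves the matrix of majority-of-$\sgn$ outcomes unchanged, but now distinct face values differ by at least $M$.

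Next I would add a small, label-dependent bump to each face: define $\CC'_i(\XX)=M\cdot\CC_i(\XX)+\mathit{val}(\XX)$, where $\mathit{val}(\XX)$ is the integer represented by the bit string $\XX$ (so $0\le\mathit{val}(\XX)\le 2^n-1$). Because we will choose $M>2^n$ — actually $M=\max\{m^2,2^n\}+1$ suffices — the bump is strictly smaller than the scaling gap, so for any $\XX,\YY$ and any faces $i,j$: if $\CC_i(\XX)>\CC_j(\YY)$ then $\CC'_i(\XX)>\CC'_j(\YY)$; if $\CC_i(\XX)<\CC_j(\YY)$ then $\CC'_i(\XX)<\CC'_j(\YY)$; and if $\CC_i(\XX)=\CC_j(\YY)$ then $\sgn(\CC'_i(\XX)-\CC'_j(\YY))=\sgn(\mathit{val}(\XX)-\mathit{val}(\YY))$, which equals $+1$ iff $\XX>_{lex}\YY$ and $-1$ iff $\XX<_{lex}\YY$ (it can be $0$ only when $\XX=\YY$, which the problem excludes). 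All of this is clearly computable by polynomial-size circuits that wrap the originals with a multiply-by-constant and add-constant gadget.

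The key claim is then: for distinct $\XX\ne\YY$, writing $S(\XX,\YY)=\sum_{i,j}\sgn(\CC_i(\XX)-\CC_j(\YY))$ for the original instance and $S'(\XX,\YY)$ for the new one, we have $S'(\XX,\YY)>0$ exactly when either $S(\XX,\YY)>0$, or $S(\XX,\YY)=0$ and $\XX>_{lex}\YY$. To see this, split the $m^2$ index pairs $(i,j)$ into those where $\CC_i(\XX)\ne\CC_j(\YY)$ (the $\sgn$ is preserved) and those where $\CC_i(\XX)=\CC_j(\YY)$ (the $\sgn$ becomes $\sgn(\mathit{val}(\XX)-\mathit{val}(\YY))$, a fixed value $\pm1$ depending only on the lex order of $\XX,\YY$). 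If $S(\XX,\YY)\ne 0$, then because each $\sgn$ term is in $\{-1,0,+1\}$ and there are at most $m^2$ terms that could have changed from $0$ to $\pm1$... here one must be slightly careful: in the original sum the "equal" pairs contributed $0$, so $S'(\XX,\YY)=S(\XX,\YY)+t\cdot\sgn(\mathit{val}(\XX)-\mathit{val}(\YY))$ where $t\ge 0$ is the number of equal pairs. If $S(\XX,\YY)>0$ then $S(\XX,\YY)\ge 1$ but the correction could be as negative as $-t\ge -m^2$, so the naive bound fails — this is the main obstacle. I would resolve it by noting that this reduction does not need to preserve the strict-winner case via magnitude but via \emph{sign consistency}: actually, rather than the bump approach I would make the tie-break dominate only the already-tied pairs. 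Concretely, the clean fix is: keep the face values as signed integers but append the label in \emph{low-order} bits after scaling, exactly as above, AND observe that $S'(\XX,\YY)$ and the target quantity have the same sign because $t>0$ forces... Since the genuinely delicate point is precisely reconciling the strict case with the tie-broken case, the honest plan is to prove the equivalence $S'(\XX,\YY)>0 \iff \big(S(\XX,\YY)>0 \lor (S(\XX,\YY)=0 \land \XX>_{lex}\YY)\big)$ by a direct case analysis on the three ranges of $S(\XX,\YY)$, using that when $S(\XX,\YY)=0$ every "unequal" pair's sign is preserved so the sign of $S'$ is governed entirely by the tie-broken term, and when $S(\XX,\YY)\ne0$ choosing the scaling $M$ against a bound on $t$ (note $t\le m^2$ but also $S(\XX,\YY)$ and $t$ have the same parity constraints relative to $m^2$) — and if that parity argument is insufficient, instead use the alternative that the bump is applied only to \emph{one} canonical circuit or is itself scaled to dominate $m^2$, i.e.\ set $\CC'_i(\XX)=K\cdot\CC_i(\XX)+\mathit{val}(\XX)$ with $K$ large and note the equal-pair correction $t\cdot\sgn(\cdot)$ cannot flip a nonzero $S$ because a nonzero original $S$ means strictly more than half of the $\ne$-pairs go one way, a margin of at least... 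Ultimately I expect the correct and cleanest route is the one used in the literature on such reductions: make the perturbation a tie-break that is provably smaller in total effect than any genuine margin, which here means arguing that if $S(\XX,\YY)\ne 0$ then in fact $|S(\XX,\YY)|$ exceeds the number $t$ of tied pairs is FALSE in general, so one must instead replace the original face values themselves by their scaled versions inside a \emph{single} new family where no two faces of different dice are ever equal — achieved by the bump — after which there are simply no tied pairs at all, $t=0$ for $\XX\ne\YY$ unless all of $\XX$'s faces equal all of $\YY$'s faces which still can happen; so finally: when $t>0$ it means many faces coincide, and then I claim $S(\XX,\YY)=0$ outright is not forced, so the genuinely robust construction is to first reduce to the case of \emph{distinct multisets of faces} and then the bump makes all cross-signs strict, giving $S'(\XX,\YY)=S(\XX,\YY)+0$ when multisets differ enough and $S'(\XX,\YY)=\sgn(\mathit{val}(\XX)-\mathit{val}(\YY))\cdot m^2$ when the multisets are identical. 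I would therefore structure the proof as: (i) handle identical-face-multiset dice separately (there $S(\XX,\YY)=0$ and we need the lex tie-break, which the bump delivers with value exactly $\pm m^2\ne 0$); (ii) for non-identical-multiset dice, show the scaling $K$ large enough makes the bump negligible so $\sgn S'=\sgn S$; and verify that a $\StrictCKTDice{}$ solution for $\fCC'$ is precisely a $\CKTDice{}$ solution for $\fCC$. The hard part, as flagged, is nailing down case (ii) — specifically bounding the total bump contribution against the smallest possible nonzero value of $S(\XX,\YY)$ — and I would do this by taking $K$ strictly greater than the number of summands $m^2$ times the maximum possible bump $2^n$, i.e.\ $K=m^2\cdot 2^n+1$, and then arguing that a change in any single $\sgn$ term from its scaled-version value is impossible unless the underlying comparison was an exact tie, which is a local per-pair statement and hence does not suffer the global cancellation problem at all.
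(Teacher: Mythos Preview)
Your approach has the reduction direction backward. A perturbation that breaks face-value ties by lexicographic order of die labels achieves (at best) the equivalence $S'(\XX,\YY)>0 \iff \big(S(\XX,\YY)>0 \lor (S(\XX,\YY)=0 \wedge \XX>_{lex}\YY)\big)$, which means a $\StrictCKTDice{}$ winner on $\fCC'$ is exactly a $\StrictCKTDice{}$ winner on $\fCC$, not a $\CKTDice{}$ winner on $\fCC$. Concretely: suppose in $\fCC$ the lexicographically largest die $\XX^*$ strictly beats every die except some $\YY$, with which it ties. Then $\XX^*$ is \emph{not} a $\CKTDice{}$ solution (it fails to strictly beat $\YY$), yet under your bump construction $\XX^*$ now strictly beats $\YY$ in $\fCC'$ and remains a winner against everyone else, so $\fCC'$ becomes a Yes-instance of $\StrictCKTDice{}$. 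This breaks the reduction. More generally, any label-based tie-breaking perturbation will manufacture a spurious winner in some instance where the original had only tied dice at the top; no amount of scaling or parity bookkeeping repairs this, because the obstacle is conceptual, not quantitative. (What you have sketched is essentially the paper's Lemma~\ref{lem:strict_dice_no_ties}, which goes the other way: it shows one may assume a $\StrictCKTDice{}$ instance has no ties.)

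The paper's proof proceeds by a gadget construction rather than perturbation. It enlarges the die set: alongside each original die $\XX$ (encoded as $0^{n+2}\XX$) it introduces, for every ordered pair $(\XX,\YY)$, three \emph{edge-vertices} whose faces on the first $4m$ circuits mimic the dice-loser of $\{\XX,\YY\}$, and whose faces on three extra circuits carry fixed intransitive-dice values $(2,4,9),(1,6,8),(3,5,7)$. These extra faces force the three edge-vertices for any pair into a strict $3$-cycle among themselves (so none can be a global winner), and crucially, when $\XX$ and $\YY$ tie in $\fCC$, the edge-vertices strictly beat both $\XX$ and $\YY$ in $\fCC'$, eliminating them as candidates. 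When $\XX$ strictly beats $\YY$, the $4\times$ multiplication of the original circuits gives $\XX$ enough margin ($\ge 16$) to overcome the at-most-$9$ deficit from the three extra circuits, so a genuine $\CKTDice{}$ winner survives against all edge-vertices as well. The missing idea in your attempt is precisely this: to kill spurious tie-based winners you must introduce new dice that defeat any die involved in a tie, not merely reshuffle the existing comparisons.
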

\begin{proof}
Suppose we are given an instance $\fCC=\langle\CC_1,\dots,\CC_m\rangle$ of \CKTDice{} with $\CC_i\colon\bits{n}\to\bits{n'}$ for all $i\in[m]$.
We construct an instance $\fCC'=\langle\CC'_1,\dots, \CC'_{4m+3}\rangle$ of \StrictCKTDice{}, where $\CC_i\colon\bits{2n+2}\to\bits{n'+1}$ for all $i\in[4m+3]$. To describe the circuits' behavior, we introduce some terminology.
We say a vertex (string) $\ZZ\in\bits{2n+2}$ is an \emph{original} vertex that \emph{corresponds} to vertex $\XX\in\bits{n}$ if $\ZZ=0^{n+2}\XX$. 
We say $\ZZ$ is an \emph{edge-vertex} that \emph{corresponds} to the vertices $\XX$ and $\YY$, and to the edge $(\XX,\YY)$, if $\ZZ=j k \XX \YY$, where $j,k\in\{0,1\}$, and $(j,k)\neq (0,0)$ (that is, $\ZZ$ starts with $01$, $10$, or $11$). Note that there are three edge-vertices for any edge.
We say $\ZZ$ is \emph{valid} if it is an original vertex or an edge-vertex, and \emph{invalid} otherwise.

We now describe the behavior of the circuits in $\fCC'$. Let $\ZZ\in\bits{2n+2}$.
If $\ZZ$ is invalid, we let $\CC'_i(\ZZ)=0$ for all $i\in[4m+3]$.
If $\ZZ$ is valid, we distinguish between the first $4m$ circuits and the last three. The first $4m$ circuits are intuitively meant to multiply the original circuits, to enhance any advantage a vertex had over another vertex in the original game. Thus, let $i\in[m]$. 
If $\ZZ$ is an original vertex corresponding to $\XX\in\bits{n}$, we set $\CC'_{4i-j}(\ZZ):=\CC_i(\XX)$ for all $j\in\{0,1,2,3\}$.
If $\ZZ$ is an edge-vertex corresponding to $(\XX,\YY)\in\bits{n}\times\bits{n}$, let $DL_{\XX,\YY}\in\{\XX,\YY\}$ be the Dice-loser between $\XX$ and $\YY$ (if there is a tie, we choose $\XX$), that is 
\[DL_{\XX,\YY}:=
\begin{cases}
    \XX & \text{ if } \sum_{i=1}^m\sum_{j=1}^m\big(\sgn(\CC_i(\XX)-\CC_j(\YY))\big)\leq 0\\
    \YY & \text{ otherwise}
\end{cases}
\]
Then, we set $\CC'_{4i-j}(\ZZ)=\CC_i(DL_{\XX,\YY})$ for all $j\in\{0,1,2,3\}$.

We proceed with describing the last three circuits, $\CC'_{4m+1}$, $\CC'_{4m+2}$, and $\CC_{4m+3}$. 
If $\ZZ$ is an original vertex, we set $\CC'_{4m+i}(\ZZ)=2^{n'}$ for each $i\in[3]$.
If $\ZZ$ is an edge-vertex corresponding to $(\XX,\YY)\in\bits{n\times\bits{n}}$, we distinguish further according to the prefix of $\ZZ$.
If $\ZZ$ begins with $01$, we set 
$\CC'_{4m+1}(\ZZ)=2^{n'}+2$, $\CC'_{4m+2}(\ZZ)=2^{n'}+4$, and $\CC'_{4m+3}(\ZZ)=2^{n'}+9$.
If $\ZZ$ begins with $10$, we set 
$\CC'_{4m+1}(\ZZ)=2^{n'}+1$, $\CC'_{4m+2}(\ZZ)=2^{n'}+6$, and $\CC'_{4m+3}(\ZZ)=2^{n'}+8$.
If $\ZZ$ begins with $11$, we set 
$\CC'_{4m+1}(\ZZ)=2^{n'}+3$, $\CC'_{4m+2}(\ZZ)=2^{n'}+5$, and $\CC'_{4m+3}(\ZZ)=2^{n'}+7$.

The intuition is that three edge-vertices $v_1$, $v_2$, and $v_3$ corresponding to the same edge form a ``winning cycle'' in the dice game, since the values they obtain from circuits $\CC'_{4m+1}$, $\CC'_{4m+2}$, and $\CC'_{4m+3}$ simulate an intransitive dice instance. 
This ensures $v_1$, $v_2$, and $v_3$ cannot be a solution for \StrictCKTDice{} on $\fCC'$; thus, we avoid creating an unintentional winner in the reduction.

Formally, we observe that, on valid vertices, circuits $\CC'_1,\dots,\CC'_{4m}$ output values upper bounded by $2^{n'}-1$, while circuits $\CC'_{4m+1}$, $\CC'_{4m+2}$, and $\CC'_{4m+3}$ output values lower bounded by $2^{n'}$. Thus, for all valid vertices $\ZZ,\ZZ'\in\bits{2n+2}$ we have 
\begin{equation}
\label{eq:dice_reduce_strict_dice:sum_separation}
\sum_{i=1}^{4m+3}\sum_{i'=1}^{4m+3}\big(\sgn(\CC'_i(\ZZ)-\CC'_{i'}(\ZZ'))\big)=
\end{equation}
\[
\sum_{i=1}^{4m}\sum_{{i'}=1}^{4m}\big(\sgn(\CC'_i(\ZZ)-\CC'_{i'}(\ZZ'))\big)+
\sum_{i=4m+1}^{4m+3}\sum_{i'=4m+1}^{4m+3}\big(\sgn(\CC'_i(\ZZ)-\CC'_{i'}(\ZZ'))\big)
\]
as all other terms are cancel-out (e.g., $\sgn(\CC'_{4m+1}(\ZZ)-\CC'_{m}(\ZZ'))+\sgn(\CC'_{m}(\ZZ)-\CC'_{4m+1}(\ZZ'))=1-1=0$).
Furthermore, if $\ZZ$ and $\ZZ'$ are original vertices corresponding to $\XX$ and $\XX'$ respectively, we have:
\begin{equation}
\label{eq:sum_16_original_original}
\sum_{i=1}^{4m}\sum_{i'=1}^{4m}\big(\sgn(\CC'_i(\ZZ)-\CC'_{i'}(\ZZ'))\big)=
\sum_{i=1}^{m}\sum_{j=0}^{3}\sum_{i'=1}^{m}\sum_{j'=0}^{3}\big(\sgn(\CC'_{4i-j}(\ZZ)-\CC'_{4i'-j'}(\ZZ'))\big)=
\end{equation}
\[\sum_{i=1}^{m}\sum_{j=0}^{3}\sum_{i'=1}^{m}\sum_{j'=0}^{3}\big(\sgn(\CC_i(\XX)-\CC_{i'}(\XX'))\big)=
16\cdot\sum_{i=1}^{m}\sum_{i'=1}^{m}\big(\sgn(\CC_i(\XX)-\CC_{i'}(\XX'))\big).
\]

If $\ZZ$ is an original vertex corresponding to $\XX$ and $\ZZ_{\XX',\XX''}$ is an edge-vertex, we have:
\begin{equation}
\label{eq:sum_16_edge_original}
\sum_{i=1}^{4m}\sum_{i'=1}^{4m}\big(\sgn(\CC'_i(\ZZ)-\CC'_{i'}(\ZZ_{\XX',\XX''}))\big)=
\sum_{i=1}^{m}\sum_{j=0}^{3}\sum_{i'=1}^{m}\sum_{j'=0}^{3}\big(\sgn(\CC'_{4i-j}(\ZZ)-\CC'_{4i'-j'}(\ZZ_{\XX',\XX''}))\big)=
\end{equation}
\[
16\cdot\sum_{i=1}^{m}\sum_{i'=1}^{m}\big(\sgn(\CC_i(\XX)-\CC_{i'}(DL_{\XX',\XX''}))\big)\]

We now establish several claims.
\begin{prop}
\label{prop:dice_reduce_strict_dice:valid_vs_invalid}
    Let $\ZZ,\ZZ'\in\bits{2n+2}$, where $\ZZ$ is a valid vertex and $\ZZ'$ is invalid. Then $\ZZ$ beats $\ZZ'$ on $\fCC'$.
\end{prop}
\begin{proof}
    We have that $\CC'_i(\ZZ')=0$ for all $i\in[4m+3]$, while $\CC'_i(\ZZ)\ge 0$ for all $i\in[4m+3]$ and $\CC'_{4m+1}(\ZZ)\geq 2^{n'}>0$. 
    Hence, the number of pairs $(i,i')\in [4m+3]$ such that $\CC'_i(\ZZ)>\CC'_{i'}(\ZZ')$ is at least $4m+3$, while the number of pairs $(i,i')\in [4m+3]$ such that $\CC'_i(\ZZ)<\CC'_{i'}(\ZZ')$ is $0$.
    Therefore $\ZZ$ beats $\ZZ'$.
\subqed
\end{proof}
\begin{prop}
\label{prop:dice_reduce_strict_dice:original_vs_original}
    Let $\XX,\XX'\in\bits{n}$, and denote by $\ZZ=0^{n+2}\XX$ and $\ZZ'=0^{n+2}\XX'$ the corresponding original vertices.
    If $\XX$ beats $\XX'$ on $\fCC$, then $\ZZ$ beats $\ZZ'$ on $\fCC'$.
\end{prop}
\begin{proof}
Since $\XX$ beats $\XX'$ on $\fCC$, we have $\sum_{i=1}^{m}\sum_{i'=1}^{m}\big(\sgn(\CC_i(\XX)-\CC_{i'}(\XX'))\big)\geq 1$.  Thus, by \Cref{eq:dice_reduce_strict_dice:sum_separation,eq:sum_16_original_original} we have that
$\sum_{i=1}^{4m+3}\sum_{i'=1}^{4m+3}\big(\sgn(\CC'_i(\ZZ)-\CC'_{i'}(\ZZ'))\big)\ge 16-9=7>0$, and therefore $\ZZ$ beats $\ZZ'$ on $\fCC'$. 
\subqed
\end{proof}

\begin{prop}
\label{prop:dice_reduce_strict_dice:edge_vs_original}
    Let $\XX,\XX'\in\bits{n}$, and denote their corresponding original vertices $\ZZ=0^{n+2}\XX$ and $\ZZ'=0^{n+2}\XX'$. Let $\ZZ_{\XX,\XX'}\in\bits{2n+2}$ be an edge-vertex corresponding to the edge $(\XX,\XX')$. Then:
    \begin{enumerate}
        \item If $\XX$ ties with $\XX'$ on $\fCC$, then $\ZZ_{\XX,\XX'}$ beats $\ZZ$ and $\ZZ'$ on $\fCC'$.
        \item If $\XX$ beats $\XX'$ on $\fCC$, then $\ZZ$ beats $\ZZ_{\XX,\XX'}$ on $\fCC'$.
    \end{enumerate}
\end{prop}

\begin{proof}
We have:
\begin{equation}
\label{eq:prop_edge_vs_original}
\forall i,j\in [3]\;\; \CC'_{4m+i}(\ZZ_{\XX,\XX'})>\CC'_{4m+j}(\ZZ)   
\end{equation}

For the first part, assume $\XX$ ties with $\XX'$ on $\fCC$. Then $\sum_{i=1}^{m}\sum_{i'=1}^{m}\big(\sgn(\CC_i(\XX)-\CC_{i'}(\XX'))\big)=0$. 
Hence, by \Cref{eq:sum_16_edge_original} we have
$\sum_{i=1}^{4m}\sum_{i'=1}^{4m}\big(\sgn(\CC'_i(\ZZ)-\CC'_{i'}(\ZZ_{\XX,\XX'}))\big)=0$.
Hence, by \Cref{eq:dice_reduce_strict_dice:sum_separation,eq:prop_edge_vs_original}, we have that $\sum_{i=1}^{4m+3}\sum_{i'=1}^{4m+3}\big(\sgn(\CC'_i(\ZZ)-\CC'_{i'}(\ZZ'))\big)=0-9<0$ and therefore $\ZZ_{\XX,\XX'}$ beats $\ZZ$ (and $\ZZ'$, similarly) on $\fCC'$.

For the second part, assume $\XX$ beats $\XX'$ on $\fCC$. 
Then $\sum_{i=1}^{m}\sum_{i'=1}^{m}\big(\sgn(\CC_i(\XX)-\CC_{i'}(\XX'))\big)\ge 1$ and $DL_{\XX,\XX'}=\XX'$. Thus, by \Cref{eq:sum_16_edge_original} we have that
$\sum_{i=1}^{4m}\sum_{j=1}^{4m}\big(\sgn(\CC'_i(\ZZ)-\CC'_j(\ZZ_{\XX,\XX'}))\big)\geq 16$, and by \Cref{eq:dice_reduce_strict_dice:sum_separation} we have:
\[\sum_{i=1}^{4m+3}\sum_{j=1}^{4m+3}\big(\sgn(\CC'_i(\ZZ)-\CC'_j(\ZZ_{\XX,\XX'}))\big)\geq 16-9=7>0\]
and therefore $\ZZ$ beats $\ZZ_{\XX,\XX'}$. 
\subqed
\end{proof}

\begin{prop}
\label{prop:dice_reduce_strict_dice:disjoint_edge_vs_original}
    Let $\XX\in\bits{n}$, and denote the corresponding original vertex $\ZZ=0^{n+2}\XX$. Let $\ZZ_{\XX',\XX''}\in\bits{2n+2}$ be an edge-vertex corresponding to the edge $(\XX',\XX'')$, where $\XX\notin\{\XX',\XX''\}$. If $\XX$ beats $\XX'$ and $\XX''$ on $\fCC$, then $\ZZ$ beats $\ZZ_{\XX',\XX''}$ on $\fCC'$.
\end{prop}

\begin{proof}
We have that $\XX$ beats $DL_{\XX',\XX''}$ and thus by \Cref{eq:sum_16_edge_original} we have that $\sum_{i=1}^{4m}\sum_{j=1}^{4m}\big(\sgn(\CC'_i(\ZZ)-\CC'_j(\ZZ_{\XX,\XX'}))\big)\ge 16$.
Hence, by \Cref{eq:dice_reduce_strict_dice:sum_separation} we have $\sum_{i=1}^{4m+3}\sum_{j=1}^{4m+3}\big(\sgn(\CC'_i(\ZZ)-\CC'_j(\ZZ_{\XX,\XX'}))\big)\geq 16-9=7>0$,
and therefore $\ZZ$ beats $\ZZ_{\XX',\XX''}$. 
\subqed
\end{proof}

\begin{prop}
\label{prop:dice_reduce_strict_dice:edge_vs_edge}
    Let $\ZZ,\ZZ',\ZZ''\in\bits{2n+2}$ be three edge-vertices corresponding to the same edge, where $\ZZ$ begins with $01$, $\ZZ'$ with $10$, and $\ZZ''$ with $11$. Then, on $\fCC'$, $\ZZ$ beats $\ZZ'$, $\ZZ'$ beats $\ZZ''$, and $\ZZ''$ beats $\ZZ$.
\end{prop}

\begin{proof}
$\ZZ$, $\ZZ'$ and $\ZZ''$ all obtain identical values in all circuits $\CC'_i$ for $i\in[4m]$. Hence, by \Cref{eq:dice_reduce_strict_dice:sum_separation} we have that 
\begin{equation}
\label{eq:first_ckts_equal}
\sum_{i=1}^{4m+3}\sum_{j=1}^{4m+3}\big(\sgn(\CC'_i(\ZZ)-\CC'_j(\ZZ'))\big)=
\sum_{i=4m+1}^{4m+3}\sum_{j=4m+1}^{4m+3}\big(\sgn(\CC'_i(\ZZ)-\CC'_j(\ZZ'))\big)
\end{equation}

By definition we have: 
\begin{itemize}
    \item $\CC'_{4m+1}(\ZZ)=2$, $\CC'_{4m+2}(\ZZ)=4$, and $\CC'_{4m+3}(\ZZ)=9$.
    \item $\CC'_{4m+1}(\ZZ)=1$, $\CC'_{4m+2}(\ZZ)=6$, and $\CC'_{4m+3}(\ZZ)=8$.
    \item $\CC'_{4m+1}(\ZZ)=3$, $\CC'_{4m+2}(\ZZ)=5$, and $\CC'_{4m+3}(\ZZ)=7$.
\end{itemize}
Hence, by \Cref{eq:first_ckts_equal}, it may be verified that on $\fCC'$, $\ZZ$ beats $\ZZ'$, $\ZZ'$ beats $\ZZ''$, and $\ZZ''$ beats $\ZZ$.
\subqed
\end{proof}

Now, suppose $\XX^*\in\bits{n}$ is a solution for \CKTDice{} on $\fCC$, and let $\ZZ^*=0^{n+2} \XX$ ($\ZZ^*$ corresponds to $\XX^*$). 
By \Cref{prop:dice_reduce_strict_dice:valid_vs_invalid} $\ZZ^*$ beats all invalid vertices, by \Cref{prop:dice_reduce_strict_dice:original_vs_original} it beats all other original vertices, by \Cref{prop:dice_reduce_strict_dice:edge_vs_original} it beats all edge-vertices corresponding to it, and by \Cref{prop:dice_reduce_strict_dice:disjoint_edge_vs_original} it beats all edge-vertices not corresponding to it. Hence, $\ZZ^*$ is a solution for \StrictCKTDice{} on $\fCC'$. 

Conversely, assume there exists $\ZZ^*\in\bits{2n+2}$ that beats all other $\ZZ\in\bits{2n+2}$ in $\fCC'$. 
By \Cref{prop:dice_reduce_strict_dice:edge_vs_edge,prop:dice_reduce_strict_dice:valid_vs_invalid} we have that $\ZZ^*$ must be an original vertex.
Thus, let $\XX^*\in\bits{n}$ such that $\ZZ^*=0^{n+2} \XX^*$. 
Let $\XX'\in \bits{n}$, and denote by $\ZZ'=0^{n+2}\XX'$ its corresponding original vertex.
If $\XX'$ beats $\XX^*$ on $\fCC$, then by \Cref{prop:dice_reduce_strict_dice:original_vs_original} we have that $\ZZ'$ beats $\ZZ^*$ on $\fCC'$, a contradiction. 
If $\XX'$ ties with $\XX^*$, then by \Cref{prop:dice_reduce_strict_dice:edge_vs_original} the edge-vertices corresponding to the edge $(\XX',\XX^*)$ beat $\ZZ^*$, a contradiction.
Hence, $\XX^*$ beats $\XX'$, and therefore $\XX^*$ is a solution for \StrictCKTDice{} on $\fCC$.
\end{proof}

\begin{lemma}
\label{lem:strict_dice_no_ties}
Given an instance $\fCC=\langle\CC_1,\dots,\CC_m\rangle$ of \StrictCKTDice{}, with $\CC_i\colon\bits{n}\to\bits{n'}$ for all $i\in[m]$, one can construct in polynomial time an instance $\fCC'=\langle\CC'_1,\dots, \CC'_{2m+1}\rangle$ of \StrictCKTDice{}, with $\CC'_i\colon\bits{n}\to\bits{n'+1}$ for all $i\in[2m+1]$, such that:
\begin{enumerate}
    \item $\CC'$ is a yes-instance if and only if $\CC'$ is a yes-instance, and
    \item for each $\XX',\YY'\in \bits{n}$, $\XX'$ and $\YY'$ do not tie on $\CC'$.
\end{enumerate}
\end{lemma}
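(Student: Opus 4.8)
The plan is to build $\fCC'$ from amplified copies of the original faces together with one auxiliary face that acts as a global tie-breaker, and then to verify the two claimed properties separately. Using the extra output bit for rescaling, I would set $\CC'_k(\XX)=2\CC_k(\XX)$ and $\CC'_{m+k}(\XX)=2\CC_k(\XX)$ for $k\in[m]$ (two identical scaled copies of each original circuit), and let $\CC'_{2m+1}(\XX)$ be a value lying in a band disjoint from and above all the rescaled original values, chosen to encode a tie-breaking signature of the label $\XX$ that is monotone in lexicographic order.

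For any two dice $\XX\neq\YY$, the pairwise dice-sum over the first $2m$ faces equals $4\sum_{k,l}\sgn(\CC_k(\XX)-\CC_l(\YY))=4S(\XX,\YY)$, where $S$ denotes the dice-sum of $\fCC$, so any strict dice-advantage in $\fCC$ is amplified to magnitude at least $4$. Because $\CC'_{2m+1}$ sits in a separate band, its comparisons against the original faces contribute $+1$ exactly as often as $-1$ and cancel, so the only net effect of the auxiliary face is $\varepsilon(\XX,\YY):=\sgn\big(\CC'_{2m+1}(\XX)-\CC'_{2m+1}(\YY)\big)\in\{-1,1\}$, carrying the sign of the lexicographic comparison of $\XX$ and $\YY$. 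Hence the dice-sum of $\fCC'$ is $S'(\XX,\YY)=4S(\XX,\YY)+\varepsilon(\XX,\YY)$, an odd integer, hence never $0$, and positive precisely when $S(\XX,\YY)>0$, or $S(\XX,\YY)=0$ and $\XX>_{lex}\YY$.

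Item~2 (no ties) is then immediate since $S'(\XX,\YY)\neq 0$ for all $\XX\neq\YY$; one may alternatively phrase this as a parity argument using that $2m+1$ is odd and that the construction keeps the number of equal-value face-pairs even. For item~1 (Yes iff Yes): by tie-freeness the ``beats'' relation of $\fCC'$ is simply $S'(\cdot,\cdot)>0$, which by the characterization of $S'$ above coincides \emph{exactly}, as a relation on $\bits{n}$, with the \StrictCKTDice{} ``beats'' relation of $\fCC$ (strictly larger dice-sum, ties broken lexicographically). Since that relation is already a tournament, the same strings are sources in $\fCC$ and in $\fCC'$, and a source is precisely a solution, so $\fCC$ is a Yes-instance iff $\fCC'$ is; the reduction is clearly polynomial-time.

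The main obstacle is realizing the auxiliary face $\CC'_{2m+1}$: it must be a polynomial-size circuit that outputs only $n'+1$ bits yet serves as an injective, lexicographically monotone encoding of the label for the tie-break — this is exactly what forces the single extra output bit, and it is the delicate point where care is needed (in particular one must confirm that a valid such encoding can be produced, and that, together with the disjoint-band placement, it never inadvertently promotes a non-source to a source and never creates an odd number of equal-value face-pairs). Once a valid tie-breaker is in hand, the remainder is routine bookkeeping with the $\sgn$-sums.
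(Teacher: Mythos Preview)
Your approach is essentially the paper's: duplicate the faces to amplify the dice-sum by a factor of~$4$, add one extra face living in a disjoint upper band to break ties by the numeric value of the label, observe that cross-terms between the band and the duplicated faces cancel, and conclude $S'(\XX,\YY)=4S(\XX,\YY)+\sgn(\XX-\YY)$.

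There is one concrete slip. The rescaling $\CC'_k(\XX)=2\CC_k(\XX)$ is both unnecessary and harmful. It is unnecessary because multiplying all face values by $2$ does not change a single $\sgn$ term; the factor~$4$ you want already comes purely from duplication (a $2m\times 2m$ sum over copies of $m$ circuits). It is harmful because it spends the one extra output bit: the rescaled values range up to $2^{n'+1}-2$, so in $\bits{n'+1}$ there is exactly one value strictly above them, and you cannot place an \emph{injective} tie-breaker there. This is precisely the ``delicate point'' you flag, and it is created by the scaling. The paper simply sets $\CC'_{2i}=\CC'_{2i-1}=\CC_i$ (no scaling), so the duplicated values stay below $2^{n'}$, and then takes $\CC'_{2m+1}(\XX)=2^{n'}+\XX$, which is injective, monotone in $\XX$, and fits in $n'+1$ bits once $n\le n'$ (arranged by padding outputs if needed). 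With that choice your entire argument goes through verbatim.
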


\begin{proof}
As in the proof of \Cref{lem:dice_reduce_strict_dice}, circuits $\CC'_1,\dots\CC'_{2m}$ are intuitively meant to multiply the original circuits, to enhance any advantage a vertex had over another vertex in the original game. The last circuit is meant to break ties lexicographically.
Thus, let $i\in[m]$ and $\XX\in\bits{n}$. 
We set $\CC'_{2i}(\XX)=\CC'_{2i-1}(\XX)=\CC_i(\XX)$, and $\CC'_{2m+1}(\XX)=2^{n'}+\XX$.

We observe that circuits $\CC'_1,\dots,\CC'_{2m}$ output values upper bounded by $2^{n'}-1$, while circuit $\CC'_{2m+1}$ outputs values lower bounded by $2^{n'}$. Thus, for all vertices $\XX,\XX'\in\bits{n}$ we have:
\begin{equation}
\label{eq:strict_dice_no_ties:sum_separation}
\sum_{i=1}^{2m+1}\sum_{i'=1}^{2m+1}\big(\sgn(\CC'_i(\XX)-\CC'_{i'}(\XX'))\big)=\sum_{i=1}^{2m}\sum_{{i'}=1}^{2m}\big(\sgn(\CC'_i(\XX)-\CC'_{i'}(\XX'))\big)+
\sgn(\CC'_{2m+1}(\XX)-\CC'_{2m+1}(\XX'))
\end{equation}
Furthermore, we have
\begin{equation}
\label{eq:strict_dice_no_ties:sum_4}
\sum_{i=1}^{2m}\sum_{{i'}=1}^{2m}\big(\sgn(\CC'_i(\XX)-\CC'_{i'}(\XX'))\big)=
\sum_{i=1}^{m}\sum_{j=0}^{1}\sum_{{i'}=1}^{m}\sum_{j'=0}^{1}\big(\sgn(\CC'_{2i-j}(\XX)-\CC'_{2i'-j'}(\XX'))\big)=
\end{equation}
\[
\sum_{i=1}^{m}\sum_{j=0}^{1}\sum_{{i'}=1}^{m}\sum_{j'=0}^{1}\big(\sgn(\CC_i(\XX)-\CC_{i'}(\XX'))\big)=
4\sum_{i=1}^{m}\sum_{{i'}=1}^{m}\big(\sgn(\CC_i(\XX)-\CC_{i'}(\XX'))\big)
\]

Additionally, by definition we have 
\begin{equation}
\label{eq:lex_order}
\sgn(\CC'_{2m+1}(\XX)-\CC'_{2m+1}(\XX'))=\sgn(\XX-\XX')
\end{equation}

We require the following proposition to complete the proof of the lemma.
\begin{prop}
\label{prop:mapping_no_ties}
Let $\XX,\XX'\in\bits{n}$ with $\XX>\XX'$. If one of them beats the other in $\fCC$, then it beats it in $\fCC'$ as well. If they tie, then $\XX$ beats $\XX'$ on $\fCC$.
\end{prop}
\begin{proof}
For the first part, assume without loss of generality that $\XX'$ beats $\XX$ on $\CC$. Then we have that $\sum_{i=1}^{m}\sum_{{i'}=1}^{m}\big(\sgn(\CC_i(\XX')-\CC_{i'}(\XX))\big)\ge 1$, and therefore by \Cref{eq:strict_dice_no_ties:sum_separation,eq:strict_dice_no_ties:sum_4} we have that $\sum_{i=1}^{4m+3}\sum_{i'=1}^{4m+3}\big(\sgn(\CC'_i(\XX')-\CC'_{i'}(\XX))\big)\geq 4-1=3>0$.
Hence, we have that $\XX'$ beats $\XX$ on $\fCC'$.

If $\XX$ and $\XX'$ tie on $\fCC$, then $\sum_{i=1}^{m}\sum_{{i'}=1}^{m}\big(\sgn(\CC_i(\XX')-\CC_{i'}(\XX))\big)=0$. 
Hence, by \Cref{eq:strict_dice_no_ties:sum_separation,eq:strict_dice_no_ties:sum_4,eq:lex_order} we have that $\XX^*$ strictly beats $\XX$ on $\fCC'$.
\subqed
\end{proof}

From \Cref{prop:mapping_no_ties} and by definition of \StrictCKTDice{}, it is immediate that $\XX^*$ is a solution of \StrictCKTDice{} on $\fCC$ if and only if it is a solution for \StrictCKTDice{} on $\fCC'$.
Additionally, by \Cref{prop:mapping_no_ties}, no two string can tie on $\fCC'$, since if one of them beats the other on $\fCC$ then it does so in $\fCC'$ as well, and if they tie on $\fCC$ then the numeric winner beats the other on $\fCC'$.
\end{proof}

\begin{definition}
\label{def:epsilon_pairwise_code}
Let $q,n\in\NN$. A set $S$ of words, each in $[q]^n$, is called an \emph{$\epsilon$-pairwise code} if for every $u,w \in S$ and every $a,b \in [q]$ we have that $|\Prob_{i \sim [n]} (u_i = a \land w_i = b) - \frac{1}{q^2}| \le \epsilon$.
\end{definition}

\begin{definition}
\label{def:poly_constructible_set}
A set $S$ of words, each in $[q]^n$ is called \emph{poly-constructible} if there is an algorithm that, when given an index $t \in [|S|]$ outputs the $t$'th word in $S$ in time that is polynomial in the length of the input ($\log(t)$) and the length of the output ($n \log(q)$).
\end{definition}

\begin{theorem}
\label{thm:epsilon_pairwise_code_exists}
(Amnon Ta-Shma and Noam Ta-Shma, personal communication).
For every $q,T\in\NN$ and $\epsilon>0$, there exists $n=poly(q, \epsilon^{-1}, \log (T))$ and a poly-constructible $\epsilon$-pairwise code $S$ over $[q]^n$  with $|S|=T$. 
\end{theorem}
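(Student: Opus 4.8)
The plan is to recognize that an $\epsilon$-pairwise code over $[q]^n$ with $T$ codewords is, up to relabelling, the same thing as an \emph{$\epsilon$-almost pairwise independent family} of $n$ functions from $[T]$ to $[q]$, and then to invoke the classical explicit constructions of such families. Concretely, suppose $\mathcal H=\{h_1,\dots,h_n\}$ is a family of functions $h_i\colon[T]\to[q]$ such that for all $t\neq t'$ and all $a,b\in[q]$ we have $|\Prob_{i\sim[n]}(h_i(t)=a\land h_i(t')=b)-\tfrac{1}{q^2}|\le\epsilon$. Assign to each $t\in[T]$ the word $u^{(t)}:=(h_1(t),\dots,h_n(t))\in[q]^n$ and set $S=\{u^{(t)}:t\in[T]\}$. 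For distinct $u^{(t)},u^{(t')}$ the quantity $\Prob_{i\sim[n]}(u^{(t)}_i=a\land u^{(t')}_i=b)$ is by definition $\Prob_{i\sim[n]}(h_i(t)=a\land h_i(t')=b)$, hence $\epsilon$-close to $1/q^2$, so $S$ is an $\epsilon$-pairwise code. Moreover, assuming (WLOG, by shrinking $\epsilon$) that $\epsilon\le 1/q^2$, almost pairwise independence forces $u^{(t)}\neq u^{(t')}$ for $t\neq t'$ — otherwise $\Prob_i(h_i(t)=h_i(t'))=1$, yet summing the bound over $a=b$ gives $\Prob_i(h_i(t)=h_i(t'))\le 1/q+q\epsilon<1$ — so $|S|=T$ exactly. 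Poly-constructibility is then immediate: the members of $\mathcal H$ are indexed by short seeds, so on input $t$ one lists the $n$ seeds and evaluates the corresponding hash functions at $t$, in time polynomial in $\log t$ and $n\log q$, matching \Cref{def:poly_constructible_set}.

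So it remains to exhibit an explicit $\epsilon$-almost pairwise independent family $\mathcal H$ with $n=poly(q,\epsilon^{-1},\log T)$. This is standard: starting from an explicit $\epsilon'$-biased set and applying the usual reduction from small bias to almost $k$-wise independence with $k=2$ (Naor and Naor, and follow-up work) yields such a family over domain of size $T$ and alphabet of size $q$ with seed length $O(\log\log T+\log q+\log(1/\epsilon))$, whence $n=2^{O(\log\log T+\log q+\log(1/\epsilon))}=poly(q,\epsilon^{-1},\log T)$. An essentially equivalent coding-theoretic route, probably closer to the intended argument: take an explicit $\mathbb{Z}_q$-linear code $V\subseteq\mathbb{Z}_q^{\,n}$ of dimension $D=\lceil\log_q T\rceil+1$ all of whose nonzero codewords are $\epsilon$-balanced in the character sense $|\EX_i[\omega^{c_i}]|\le\epsilon$ (with $\omega$ a primitive $q$-th root of unity) — such codes have explicit constructions with $n=poly(q,\epsilon^{-1},D\log q)=poly(q,\epsilon^{-1},\log T)$, e.g.\ by concatenating an outer Reed--Solomon code with a balanced inner code found by brute force — and let $S$ consist of one nonzero representative of each of $T$ distinct one-dimensional subspaces of $V$ (enumerable in poly time). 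Then for distinct $u,w\in S$ and $(\alpha,\beta)\neq(0,0)$ the word $\alpha u+\beta w$ is a nonzero codeword of $V$, hence $\epsilon$-balanced, and the one-line Fourier identity $\Prob_{i\sim[n]}(u_i=a\land w_i=b)=\tfrac{1}{q^2}\sum_{\alpha,\beta}\omega^{-\alpha a-\beta b}\,\EX_i[\omega^{\alpha u_i+\beta w_i}]$ gives the $\epsilon$-pairwise property after bounding the $q^2-1$ nonzero error terms.

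I expect the main obstacle, and the reason the bound is only required to be $poly(q)$ rather than $poly(\log q)$, to be the alphabet/primality bookkeeping. The cleanest explicit gadgets ($\epsilon$-biased sets, balanced linear codes) live over prime or prime-power alphabets, so one must first round $q$ up to a nearby prime power $q'<2q$ (Bertrand's postulate), build the object over $[q']$, and then project it down onto $[q]$ in a near-uniform way while re-accounting for the shift of the target from $1/q'^2$ to $1/q^2$ and the change in the allowed error; carrying this out carefully costs only $poly(q)$ factors in the length but is where most of the care goes. A secondary point worth stating explicitly is that ``$\epsilon$-pairwise code'' is to be read over distinct pairs $u\neq w$ — the inequality is unsatisfiable for $u=w,\ a=b$ once $\epsilon<1/(2q)$ — which is exactly the form used in the derandomization surrounding \Cref{lem:dice_reduce_strict_dice}.
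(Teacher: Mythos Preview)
The paper does not prove this theorem: it is stated without proof, attributed to Amnon Ta-Shma and Noam Ta-Shma by personal communication, and used as a black box in the derandomization step of \Cref{thm:all_dice_PCW_complete}. There is therefore no ``paper's own proof'' to compare against.

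Your sketch is sound. The identification of an $\epsilon$-pairwise code with an $\epsilon$-almost pairwise independent family of functions $[T]\to[q]$ is exactly right, and both the Naor--Naor route and the balanced-code-plus-Fourier route are standard and deliver the stated $n=\mathrm{poly}(q,\epsilon^{-1},\log T)$. Two small remarks. First, in the coding route the Fourier identity yields an error of $(q^2-1)\epsilon'$ rather than $\epsilon'$, so the inner balance parameter must be taken as $\epsilon'\approx\epsilon/q^2$; this is harmless since it is absorbed into the $\mathrm{poly}(q,\epsilon^{-1})$ length bound. Second, your observation that \Cref{def:epsilon_pairwise_code} must be read over distinct $u\neq w$ is correct and worth recording: as written the definition quantifies over all $u,w\in S$, which is unsatisfiable for $\epsilon<1/q^2$ (take $u=w$, $a\neq b$), but the only use of the code in the paper is for distinct $S(\XX),S(\YY)$ with $\XX\neq\YY$.
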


\begin{proof}
(\textbf{\Cref{thm:all_dice_PCW_complete}}).
By \Cref{lem:gdice_pcw,lem:cktdice_pcw} we have that \CKTDice{} and \GDice{} are \PCW{}-hard, and by \Cref{lem:dice_reduce_strict_dice,lem:gdice_cktdice} we have that they both reduce to \StrictCKTDice{}. Hence, it suffices to prove that \StrictCKTDice{} is in \PCW{}. We construct a polynomial time reduction from \StrictCKTDice{} to \CKTCondorcet{}.

Let $\fCC=\langle\CC_1,\dots,\CC_m\rangle$ be an instance of \StrictCKTDice{}, with $\CC_i\colon\bits{n}\to\bits{n'}$ for all $i\in[m]$.
By \Cref{lem:strict_dice_no_ties}, we may assume without loss of generality that no two dice tie on $\fCC$.
Let $\epsilon<\frac{1}{2m^4}$.
Then, by \Cref{thm:epsilon_pairwise_code_exists}, there exists $m'=poly(m,\epsilon^{-1}, \log(2^n))=poly(m,n)$ and a poly-constructible $\epsilon$-pairwise code $S$ over $[m]^{m'}$ with $|S|=2^n$.
For $\XX\in\bits{n}$, let $S(\XX)$ denote the $\XX$'th word in $S$ (where $\XX$ is interpreted as a binary number), and for each $i\in[m']$ let $S(\XX)_i$ denote the $i$'th element of $S(\XX)$ (recall that $S(\XX)\in [m]^{m'}$, and thus $S(\XX)_i\in[m]$).
We define the instance $\fCC'=\langle\CC'_1,\dots \CC'_{m'}\rangle$ by $\CC'_i(\XX)=\CC_{S(\XX)_i}(\XX)$ for any $i\in[m']$ and $\XX\in\bits{n}$.
We can efficiently construct $\CC'$ as $S$ is poly-constructible.

Let $\XX,\YY\in\bits{n}$.
For any $i\in[m']$, let 
\[D_{\XX,\YY,i}=\sgn(\CC'_i(\XX)-\CC'_i(\YY))=\sgn(\CC_{S(\XX)_i}(\XX)-\CC_{S(\YY)_i}(\YY))\]
For $a,b\in[m]$, define the event 
\[W_{\XX,\YY}^{a,b}:=\CC_{a}(\XX)>\CC_{b}(\YY)\]
and let $A=|\{(a,b)\in [m]\times[m]\colon W_{\XX,\YY}^{a,b}\text{ holds}\}|$ and $B=|\{(a,b)\in [m]\times[m]\colon W_{\YY,\XX}^{a,b}\text{ holds}\}|$.
Suppose $\XX$ beats $\YY$ in the Dice game on $\fCC$. Then we have that 
\begin{equation}
\label{eq:A-B_bound}
A - B \ge 1.
\end{equation}
Furthermore, since $A,B\subseteq [m]\times[m]$ we have 
\begin{equation}
\label{eq:A+B_bound}
A + B \le 2m^2.
\end{equation}
Let 
\[F^S_{\XX,\YY}=\sum_{i=1}^{m'}\sgn(\CC'_i(\XX)-\CC'_i(\YY))=\sum_{i=1}^{m'}D_{\XX,\YY,i}\]
Notice that 
$\XX$ is more popular than $\YY$ in $\fCC'$ if and only if $F^S_{\XX,\YY}>0$. Thus, we want to show that $F^S_{\XX,\YY}>0$. 
First, we calculate the expectation of $D_{\XX,\YY,i}$, where $i$ is drawn uniformly at random from $[m']$. We have:
\begin{equation}
\label{eq:E[D]}
\EX_{i\sim [m']}[D_{\XX,\YY,i}]=1\cdot \Prob_{i\sim [m']}(\CC'_i(\XX)>\CC'_i(\YY)) -1\cdot \Prob_{i\sim [m']}(\CC'_i(\YY)>\CC'_i(\XX))+
0\cdot \Prob_{i\sim [m']}(\CC'_i(\XX)=\CC'_i(\YY))
\end{equation}

We therefore calculate the above probabilities:
\[\Prob_{i\sim [m']}(\CC'_i(\XX)>\CC'_i(\YY))=
\Prob_{i\sim [m']}(\CC_{S(\XX)_i}(\XX)>\CC_{S(\YY)_i}(\YY))=
\sum_{\substack{(a,b) \in [m]\times [m] \\ \text{s.t. } W_{\XX,\YY}^{a,b}}}\Prob_{i\sim [m]}(S(\XX)_i=a\land S(\YY)_i=b)\geq\]
\[\sum_{\substack{(a,b) \in [m]\times [m] \\ \text{s.t. } W_{\XX,\YY}^{a,b}}}(\frac{1}{m^2}-\epsilon)=
A\cdot (\frac{1}{m^2}-\epsilon)
\]

where the inequality follows from the definition of an $\epsilon$-pairwise code. Similarly, we have:

\[\Prob_{i\sim [m']}(\CC'_i(\YY)>\CC'_i(\XX))\le 
B\cdot (\frac{1}{m^2}+\epsilon)
\]

Thus, by \Cref{eq:E[D]} we have:
\[\EX_{i\sim [m']}[D_{\XX,\YY,i}]\ge
A\cdot (\frac{1}{m^2}-\epsilon)-B\cdot (\frac{1}{m^2}+\epsilon)
=\frac{A-B}{m^2}-(A+B)\epsilon\ge_{(*)}
\frac{1}{m^2}-2m^2\epsilon>_{(**)}
\frac{1}{m^2}-\frac{2m^2}{2m^4}=0
\]
where inequality $(*)$ follows from \Cref{eq:A-B_bound,eq:A+B_bound}, and inequality $(**)$ follows from the definition of $\epsilon$.
Now, if $\fCC$ is a yes-instance of \StrictCKTDice{} then there exists a die $\XX^*\in\bits{n}$ that beats any other die $\YY\in\bits{n}$. Hence, we have $F^S_{\XX^*,\YY}>0$ for all $\YY\in\bits{n}$, and thus $\XX^*$ is more popular than any $\YY$ in \StrictCKTDice{} on $\fCC'$, namely $\fCC'$ is a yes-instance of \StrictCKTDice{}.
Conversely, assume $\fCC$ is a no-instance of \StrictCKTDice{}, and let $\XX'\in\bits{n}$ be a die. Then there must exist some $\YY'\in\bits{n}$ that beats $\XX'$ in \StrictCKTDice{} defined on $\fCC$.
Thus, we have $F^S_{\XX',\YY'}<0$, and therefore $\XX'$ is not a strongly popular string in the instance $\fCC'$ of \StrictCKTDice{}. Hence, $\fCC'$ is a no-instance of \StrictCKTDice{}.
\end{proof}
\section{Proofs of Section \ref{sec:PMA} (\PMA{})}
\label{sec:PMA_Appendix}
In this section, we provide all proofs missing from \Cref{sec:PMA}, beginning with \Cref{thm:Emaj_k_pma}.
We first show \PMA{}-completeness of an intermediate problem \EMajBal{}, where the induced bipartite graph is balanced but the threshold $k$ remains part of the input.
We then show that parameterizing the number of disqualified edges $k$ does not affect the complexity, beginning with the case $k=1$.
We use these simpler variants to show the remaining results of the section regarding computational bounds on \PMA{}.

\begin{wbox}
\label{def:EMajBal}
    \EMajBal{}\\ 
    \textbf{Input:} Two Boolean circuits $\CC\colon \bits{2n}\to\bits{n+1}$ and $\VV\colon \bits{n+1}\to\bits{2n}$, and integer $k\ge 1$.\\
    \textbf{Question:}
    $\exists \XX\in\bits{n} \;\; \forall \YY\in\bits{n}\;\; \VV(\CC(\XX\concat \YY))=\XX \YY \;\land\; \CC(\XX\concat \YY)\ge k$?
\end{wbox}

\begin{lemma}
\label{thm:emaj_balanced}
\EMajBal{} is \PMA{}-complete.
\end{lemma}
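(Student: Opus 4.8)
The plan is to show both directions of the equivalence, i.e.\ \EMajBal{}$\reduce$\EMaj{} and \EMaj{}$\reduce$\EMajBal{}, since \EMaj{} is \PMA{}-complete by definition. The first direction is essentially trivial: \EMajBal{} is the restriction of \EMaj{} obtained by forcing $m=n$, so every instance of \EMajBal{} is already an instance of \EMaj{} (the condition ``number of edges strictly below $2|Y|$'' is enforced by the same label-threshold mechanism in both problems), giving $\EMajBal{}\reduce\EMaj{}$ immediately. The work is therefore all in the other direction: given an \EMaj{} instance with circuits $\CC\colon\bits{n+m}\to\bits{m+1}$, $\VV\colon\bits{m+1}\to\bits{n+m}$ and threshold $k$, I must produce an \emph{balanced} instance — one where the $X$-side and $Y$-side both have index-length $N$ for a single common $N$ — that is a yes-instance iff the original is.

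The natural choice is $N=\max(n,m)$ (or $n+m$, to be safe), and to pad both sides up to length $N$. Concretely, I would interpret an $X$-vertex as $\XX\in\bits{n}$ padded with $N-n$ fixed bits, and a $Y$-vertex as $\YY\in\bits{m}$ padded with $N-m$ fixed bits; the new comparison circuit $\CC'$ on input $\XX'\concat\YY'\in\bits{2N}$ first checks that the padding bits of $\XX'$ and $\YY'$ are the designated constants, and if so strips them and runs $\CC$ on the genuine $(\XX,\YY)$ part, and otherwise outputs some ``inadmissible'' label (a label $<k$, so it is disqualified). The key point to verify is that this preserves unambiguity via the edge-counting argument: I need the number of admissible edges in the padded graph to stay strictly below $2\cdot 2^N$. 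Since a genuine edge only exists when the padding bits match the designated constants on \emph{both} sides, the admissible edges of $\CC'$ are in bijection with those of $\CC$, whose count is already $<2\cdot 2^m\le 2\cdot 2^N$, so this is fine; but I should state it carefully, perhaps by keeping the same threshold $k$ and arguing that $\VV'$ (which maps an admissible label back to its padded vertex pair by applying $\VV$ and re-inserting the fixed padding bits) still witnesses the bound. Then $\XX$ is a solution of the original instance iff its padded version $\XX'$ is a solution of the balanced instance: for the forward direction, every $\YY\in\bits{m}$ corresponds to a padded $\YY'$ and the check passes; conversely, if some padded $\YY'$ failed in the original, its padded image fails here, and any \emph{non}-padded $\YY'$ automatically satisfies the (non-)membership condition because $\CC'$ hands it a disqualified label — wait, that is the wrong direction, so I instead need $\CC'$ to hand mismatched-padding $Y$-vertices a label that \emph{does} keep $\XX'$ alive, i.e.\ the universal condition must still hold; the cleanest fix is to make mismatched $Y$-vertices ``win'' trivially by assigning them an admissible label and setting $\VV'$ to map that label correctly, OR to observe that we only need the $\forall\YY'$ condition for $\YY'$ ranging over \emph{all} of $\bits{N}$, so mismatched $\YY'$ must be handled so as not to spuriously kill a good $\XX'$ — this is the delicate bookkeeping step.

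The main obstacle I anticipate is exactly this padding bookkeeping: balancing the two sides while (a) keeping the total admissible-edge count below $2\cdot 2^N$ so the \PMA{} uniqueness argument still applies, and (b) ensuring the $\forall\YY'$ quantifier over the enlarged $Y$-side $\bits{N}$ does not introduce spurious ``losing'' conditions for the intended solution, nor spurious solutions. The resolution is to route every mismatched-padding pair $(\XX',\YY')$ to a single designated admissible label in a way that makes $\VV'$ still a valid inverse only on the ``real'' part — concretely, I would reserve one extra admissible label for all ``trivially satisfied'' pairs and have $\VV'$ decode it to, say, the all-zeros vertex pair, then check that this does not violate the counting bound (it adds at most one admissible edge, and we have slack since $2^N \ge 2^m$ strictly when $N>m$, or we can just raise $k$ by one if $N=m$). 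With that device in place, the correctness proof is a routine case analysis: a good $\XX$ lifts to a good $\XX'$, and conversely any solution $\XX'$ of the balanced instance must have correct padding bits (else $\VV'$-consistency fails for some $\YY'$), its genuine part $\XX$ must satisfy the original condition for every genuine $\YY$, hence is a solution of the original. The same padding trick also yields the claimed extensions (fixing $k$ only, or balance only), as remarked after the lemma.
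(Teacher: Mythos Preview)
Your plan for the easy direction and for the case $n<m$ (padding the $X$-side with dummy vertices that always receive an inadmissible label) is fine and matches the paper. The gap is in the case $m<n$, where you must enlarge the $Y$-side. Your proposed device---routing every mismatched-padding pair $(\XX',\YY')$ to a \emph{single} reserved admissible label $\ell_0$---cannot work, and the problem is exactly the verifier condition you are trying to satisfy. Recall that for $\XX'$ to be a solution of the balanced instance you need, for \emph{every} $\YY'\in\bits{N}$, that $\VV'(\CC'(\XX'\concat\YY'))=\XX'\YY'$. If all mismatched $\YY'$ are sent to the same label $\ell_0$, then $\VV'(\ell_0)$ is a single fixed string, so the equality $\VV'(\ell_0)=\XX'\YY'$ can hold for at most one such $\YY'$; every other mismatched $\YY'$ kills $\XX'$. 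Thus the forward direction of correctness (a solution of the original lifts to a solution of the balanced instance) fails outright. More generally, any scheme in which many distinct $(\XX',\YY')$ pairs share a label is incompatible with the $\VV'$-inversion requirement.

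The paper's resolution is different in kind: rather than padding $Y$ with inert bits, it \emph{doubles} $Y$ and the label space simultaneously, one bit at a time. Concretely, going from $m$ to $m+1$, a $Y$-vertex $\YY$ becomes two vertices $0\YY$ and $1\YY$, and a label $l$ becomes two labels $0l$ and $1l$; one sets $\CC'(\XX\concat y\YY)=y\concat\CC(\XX\concat\YY)$ and, writing $\VV(\ZZ)=\XX\YY$, sets $\VV'(z\ZZ)=\XX z\YY$. Now every new $Y$-vertex gets its own fresh label, the inversion $\VV'(\CC'(\XX\concat y\YY))=\XX y\YY$ holds iff $\VV(\CC(\XX\concat\YY))=\XX\YY$, and the threshold $k$ is preserved because $y\concat l\ge l\ge k$. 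Iterating $n-m$ times balances the instance. The essential idea you were missing is that enlarging $Y$ forces you to enlarge the admissible label space by the same factor, and to thread the new $Y$-bits through the new label-bits so that $\VV'$ can recover them.
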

\begin{proof}
    It is clear that $\EMajBal{}\in\PMA{}$ as \EMajBal{} is a special case of \EMaj{}. 
    To show hardness, we reduce from \EMaj{}.
    We begin with a proof sketch. Consider the bipartite graph $G=(X,Y,E)$ induced by a given \EMaj{} instance, with edges only from $X$ to $Y$. If $|X|=|Y|$ then this is an \EMajBal{} instance and we are done. 
    If $|Y|>|X|$ then we can add ``dummy'' vertices to $X$ with out-degree $0$, to balance the sizes of both parts. Thus, the number of edges in the graph does not increase and therefore the bound on the number of edges is still satisfied, so we obtain a valid \EMajBal{} instance, and winners are preserved.
    If $|Y|<|X|$, we add vertices to $Y$, but this time we must make sure that if a winner $\XX^*$ exists in $G$ then the edges from $\XX^*$ to the new $Y$-vertices are present. To do so, we proceed inductively, adding $1$ bit to the representation of the vertices of $Y$ in each step (and in total $n-m$ steps). We think of each vertex $\YY\in Y$ as ``splitting'' into two vertices $\YY_1=0\YY$ and $\YY_2=1\YY$, and similarly any edge label $l$ is split into $0l$ and $1l$. We then make sure that if $\XX\YY$ was an edge with label $l$, then $\XX\YY_1$ and $\XX\YY_2$ are edges with labels $0l$ and $1l$ respectively. It is not hard to show that solutions are preserved by this reduction. 

    Formally, let $n\neq m$, and suppose we are given an instance $\langle \CC,\VV,k \rangle$ of \EMaj{}, where $\CC$ has $n+m$ inputs and $m+1$ outputs, $\VV$ has $m+1$ inputs and $n+m$ outputs, and $k\ge 1$.
    If $n=m$ then this is also a \EMajBal{} instance and we are done.
    We consider the cases $n<m$ and $m<n$.
    
    First, assume $n<m$. Define circuits $\CC'\colon \bits{2m}\to\bits{m+1}$ and $\VV'\colon\bits{m+1}\to\bits{2m}$ as follows. 
    For $\XX,\YY\in\bits{m}$, where $\XX=\XX_1\XX_2$ with $\XX_1\in\bits{m-n}$ and $\XX_2\in\bits{n}$, we set:
    \begin{equation}
    \CC'(\XX\concat \YY)=
    \begin{cases}
    \CC(\XX_2\concat \YY) & \text{ if } \XX_1=0^{m-n}\\
    0^{m+1} & \text{otherwise}
    \end{cases}
    \end{equation}
    For $\ZZ\in\bits{m+1}$, we set 
    \begin{equation}
        \VV'(\ZZ)=0^{m-n} \VV(\ZZ).
    \end{equation}
    Consider the instance $\langle \CC',\VV',k \rangle$ of \EMajBal{}.
    Let $\XX^*\in\bits{n}$ be a solution for \EMaj{} on $\langle \CC,\VV,k \rangle$, and let $\YY\in\bits{m}$.
    We have $\VV(\CC(\XX^*\concat \YY))=\XX^*\YY$ and $\CC(\XX^*\concat \YY)\ge k$. Consider $0^{m-n}\XX^*$. We have $\CC'(0^{m-n}\XX^*\concat \YY)=\CC(\XX^*\concat \YY)\ge k$. Furthermore, we have $\VV'(\CC'(0^{m-n}\XX^*\concat \YY))=\VV'(\CC(\XX^*\concat \YY))=0^{m-n}\VV(\CC(\XX^*\concat \YY))=0^{m-n}\XX^*\YY$. Hence, $0^{m-n}\XX^*$ is a solution for \EMajBal{} on $\langle \CC',\VV',k \rangle$.

    Conversely, assume $0^{n-m}\XX^*$ is a solution for \EMajBal{} on $\langle \CC',\VV',k \rangle$ (note that any solution must have a prefix of $0^{n-m}$, as otherwise $\CC'$ maps it to $0^{m+1}$ regardless of $\YY$, and $0^{m+1}<k$). 
    Let $\YY\in\bits{m}$. 
    We have $\CC(\XX^*\concat \YY)=\CC'(0^{n-m}\XX^*\concat \YY)\ge k$. 
    Furthermore, we have 
    $0^{n-m}\VV(\CC(\XX^*\concat \YY))=\VV'(\CC(\XX^*\concat \YY))=\VV'(\CC'(0^{n-m}\XX^*\concat \YY))=0^{n-m}\XX^*\YY$, implying that $\VV(\CC(\XX^*\concat \YY))=\XX^*\YY$. Therefore, $\XX^*$ is a solution for \EMaj{} on $\langle \CC,\VV,k \rangle$.

    For the second case, assume $m<n$, and let $m'=m+1$.
    We construct an instance $\langle \CC',\VV',k \rangle$ where $\CC'$ has $n+m'$ inputs and $m'+1$ outputs, and $\VV'$ has $m'+1$ inputs and $n+m'$ outputs, such that solutions are preserved. Applying this method inductively $n-m$ times yields an instance where $n=m'$, namely a \EMajBal{} instance.
    We define $\langle \CC',\VV',k \rangle$ as follows.
    For $\XX\in\bits{n}$ and $\YY'\in\bits{m'}$, where $\YY'=y\YY$ with $y\in\{0,1\}$ and $\YY\in\bits{m}$, we set 
    \begin{equation}
        \CC'(\XX\concat \YY')=y\CC(\XX\concat \YY).
    \end{equation}
    Let $\ZZ'\in\bits{m'+1}$, where $\ZZ'=z\ZZ$ with $z\in\{0,1\}$ and $\ZZ\in\bits{m}$. Denote $\XX\YY=\VV(\ZZ)$, where $\XX\in\bits{n}$ and $\YY\in\bits{m}$. We then set
    \begin{equation}
        \VV'(\ZZ')=\XX z\YY.
    \end{equation}

    Let $\XX^*\in\bits{n}$ be a solution for \EMaj{} on $\langle \CC,\VV,k \rangle$, and let $\YY'\in\bits{m'}$, where $\YY'=y\YY$ with $y\in\{0,1\}$ and $\YY\in\bits{m}$.
    We have $\VV(\CC(\XX^*\concat \YY))=\XX^*\YY$ and $\CC(\XX^*\concat \YY)\ge k$. Thus, we have that \[\VV'(\CC'(\XX^*\concat \YY'))=\VV'(y\CC(\XX^*\concat \YY))=\XX^*y\YY=\XX^*\YY'\]
    and also $\CC'(\XX^*\concat \YY')=y\CC(\XX^*\concat \YY)\ge 0\CC(\XX^*\concat \YY)\ge k$. Hence, $\XX^*$ is a solution for \EMaj{} on $\langle \CC',\VV',k \rangle$.

    Conversely, assume $\XX^*$ is a solution for \EMaj{} on $\langle \CC',\VV',k \rangle$. Let $\YY\in\bits{m}$. 
    We have $0\CC(\XX^*\concat \YY)=\CC'(\XX^*,0\YY)\ge k$, which implies that $\CC(\XX^*\concat \YY)\ge k$. 
    Furthermore, we have $\VV'(\CC'(\XX^*,0\YY))=\XX^*0\YY$, which, by definition of $\VV'$, implies that $\VV(\CC(\XX^*\concat \YY))=\XX^*\YY$. 
    Thus, $\XX^*$ is a solution for \EMaj{} on $\langle \CC,\VV,k \rangle$.
\end{proof}

\begin{lemma}
\label{lem:Emaj_1_pma}
$\kEMajBal{1}$ is \PMA{}-complete.
\end{lemma}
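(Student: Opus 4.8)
The plan is to prove the two inclusions that together give \PMA{}-completeness. Containment $\kEMajBal{1}\in\PMA{}$ is immediate: an input $\langle\CC,\VV\rangle$ of $\kEMajBal{1}$, with $\CC\colon\bits{2n}\to\bits{n+1}$ and $\VV\colon\bits{n+1}\to\bits{2n}$, is already, verbatim, an instance of \EMaj{} with $m=n$ and threshold $k=1$, so the identity map is a reduction $\kEMajBal{1}\reduce\EMaj{}$ and hence $\kEMajBal{1}\in\PMA{}$.

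For \PMA{}-hardness I would reduce from \EMajBal{}, which is \PMA{}-complete by \Cref{thm:emaj_balanced}; since \PMA{} is closed under $\reduce$, this suffices. The only gap between the two problems is that an \EMajBal{} instance with threshold $k$ forbids the $k$ labels $0,1,\dots,k-1$, whereas $\kEMajBal{1}$ forbids only the label $0$. So the idea is simply to renumber labels, shifting every label down by $k-1$. Given an instance $\langle\CC,\VV,k\rangle$, I would define $\CC'\colon\bits{2n}\to\bits{n+1}$ by $\CC'(\XX\concat\YY)=\CC(\XX\concat\YY)-(k-1)$ whenever $\CC(\XX\concat\YY)\ge k$ and $\CC'(\XX\concat\YY)=0^{n+1}$ otherwise, and define $\VV'\colon\bits{n+1}\to\bits{2n}$ by $\VV'(z)=\VV(z+(k-1))$ whenever $z\ge 1$ and $z+(k-1)<2^{n+1}$, with $\VV'(z)$ set arbitrarily (say $0^{2n}$) elsewhere. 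Neither the vertex sets nor their sizes change, the graph stays balanced, exactly one label is forbidden, and both circuits are polynomial-size and polynomial-time constructible from $\langle\CC,\VV,k\rangle$, so $\langle\CC',\VV'\rangle$ is a legitimate $\kEMajBal{1}$ instance.

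The core of the argument is to verify that the two instances induce the same bipartite graph: for all $\XX,\YY\in\bits{n}$ the pair $(\XX,\YY)$ is an edge of the \EMajBal{} instance (that is, $\CC(\XX\concat\YY)\ge k$ and $\VV(\CC(\XX\concat\YY))=\XX\YY$) if and only if it is an edge of the $\kEMajBal{1}$ instance (that is, $\CC'(\XX\concat\YY)\ge 1$ and $\VV'(\CC'(\XX\concat\YY))=\XX\YY$). Writing $l=\CC(\XX\concat\YY)$, this holds because $l\ge k$ is equivalent to $l-(k-1)\ge 1$, and because on the labels actually produced by $\CC'$---those of the form $l-(k-1)$ with $k\le l\le 2^{n+1}-1$---the reverse shift $z\mapsto z+(k-1)$ does not overflow, so $\VV'(\CC'(\XX\concat\YY))=\VV(l)$ whenever $l\ge k$. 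Since the edge sets coincide, a string $\XX^*$ is adjacent to all of $Y$ in one instance exactly when it is in the other, so $\XX^*$ solves \EMajBal{} on $\langle\CC,\VV,k\rangle$ if and only if it solves $\kEMajBal{1}$ on $\langle\CC',\VV'\rangle$.

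The only delicate point, and the main obstacle such as it is, is the arithmetic bookkeeping: checking that the shifted values taken by $\CC'$ stay within $n+1$ bits (they lie in $[1,2^{n+1}-k]$) and that $\VV'$ is consulted by the predicate only on labels $z$ with $z+(k-1)<2^{n+1}$, so that the choice of $\VV'$ on overflowing inputs is irrelevant. The degenerate case $k\ge 2^{n+1}$, in which \EMajBal{} is trivially a no-instance, needs no special treatment, since then $\CC'\equiv 0^{n+1}$ and the constructed $\kEMajBal{1}$ instance is also a no-instance.
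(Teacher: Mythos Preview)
Your proposal is correct and follows essentially the same approach as the paper: both reduce from \EMajBal{} by collapsing the forbidden labels into the single forbidden label $0$. The paper's construction is slightly simpler than yours: instead of shifting labels down by $k-1$ and modifying $\VV$ accordingly, it leaves $\VV$ untouched and defines $\CC'(\XX\concat\YY)=\CC(\XX\concat\YY)$ when $\CC(\XX\concat\YY)\ge k$ and $\CC'(\XX\concat\YY)=0^{n+1}$ otherwise---so valid labels keep their original names and the verifier needs no change.
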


\begin{proof}
Clearly $\kEMajBal{1}$ is in \PMA{}, as an instance $\langle \CC,\VV\rangle$ can be reduced trivially to an instance $\langle \CC,\VV, 1\rangle$ of \EMajBal{}.
For hardness, we show that $\EMajBal{}\reduce\kEMajBal{1}$. 
Let $\langle \CC,\VV, k\rangle$ be an instance of \EMajBal{} with $\CC\colon\bits{2n}\to\bits{n+1}$, $\VV\colon\bits{n+1}\to\bits{2n}$, and $k\ge 1$.
Construct the instance $\langle \CC',\VV\rangle$ of $\kEMajBal{1}$, where $\CC'\colon\bits{2n}\to\bits{n+1}$ is defined as follows.
For any $\XX,\YY\in\bits{n}$, let
\[
\CC'(\XX\concat \YY)=
\begin{cases}
\CC(\XX\concat \YY) & \text{ if } \CC(\XX\concat \YY)\ge k\\
0^{n+1} & \text{ if } \CC(\XX\concat \YY) < k
\end{cases}
\]
Now, let $\XX^*$ be a solution for \EMajBal{} on $\langle \CC,\VV, k\rangle$. 
Then for all $\YY\in\bits{n}$ we have $\CC(\XX^*\concat \YY)\ge k$ and $\VV(\CC(\XX^*\concat \YY))=\XX^*\YY$. 
Therefore, we have $\CC'(\XX^*\concat \YY)=\CC(\XX^*\concat \YY)\ge k\ge 1$ and $\VV(\CC'(\XX^*\concat \YY))=\VV(\CC(\XX^*\concat \YY))=\XX^*\YY$.
Thus, $\XX^*$ is a solution for $\kEMajBal{1}$ on $\langle \CC',\VV\rangle$.

Conversely, let $\XX^*$ be a solution for $\kEMajBal{1}$ on $\langle \CC',\VV\rangle$.
Then for all $\YY\in\bits{n}$ we have that $\CC'(\XX^*\concat \YY)\ge 1$ and $\VV'(\CC'(\XX^*\concat \YY))=\XX^*\YY$.
Let $\YY\in\bits{n}$. 
If $\CC(\XX^*\concat \YY)<k$ then $\CC'(\XX^*\concat \YY)=0^{n+1}$, a contradiction to the above. 
Thus $\CC(\XX^*\concat \YY)\ge k$.
Hence, by definition of $\CC'$ we have that $\VV(\CC(\XX^*\concat \YY))=\VV(\CC'(\XX^*\concat \YY))=\XX^*\YY$.
We conclude that $\XX^*$ is a solution for \EMajBal{} on $\langle \CC,\VV,k\rangle$.
\end{proof}

We can now prove \Cref{thm:Emaj_k_pma}.
\begin{proof}
(\textbf{\Cref{thm:Emaj_k_pma}}).
Since the case $k=1$ was settled in \Cref{lem:Emaj_1_pma}, let $k\ge 2$.
Clearly, \PMA{} contains $\kEMajBal{k}$, as an instance $\langle \CC,\VV\rangle$ can be reduced trivially to an instance $\langle \CC,\VV, k\rangle$ of \EMajBal{}.
To show \PMA{}-hardness, we prove that $\kEMajBal{1}\reduce\kEMajBal{k}$, which suffices by \Cref{lem:Emaj_1_pma}.
Let $I=\langle \CC,\VV\rangle$ with $\CC\colon\bits{2n}\to\bits{n+1}$ and $\VV\colon\bits{n+1}\to\bits{2n}$.
Without loss of generality assume $k\le 2^n$ (otherwise we can determine whether it is a yes- or no-instance in constant time, and construct a fixed yes- or no-instance of \EMajBal{} accordingly).
Furthermore, assume that $k$ is a power of two (otherwise, let $k'$ be the smallest power of two greater than $k$; the proof holds for $k'$, and  $\kEMajBal{k'}$ can be reduced to $\kEMajBal{k}$ analogously to the proof of \Cref{lem:Emaj_1_pma}).

To provide intuition, we begin with a proof sketch for the case $k=2$.
We define the instance $I'=\langle \CC',\VV'\rangle$ where $\CC'\colon\bits{2(n+1)}\to\bits{n+2}$ and $\VV'\colon\bits{n+2}\to\bits{2(n+1)}$. 
Since $\CC'$ has one additional output bit compared with $\CC$, its output space is twice as large, namely there are twice as many edge labels. For a label $l$ in $I$ we think of $0l$ and $1l$ as its copies in $I'$. 
Similarly, we can think of each edge as being duplicated: 
If $e:=\XX\YY\in\bits{2n}$ is an edge in $I$ (with $\XX,\YY\in\bits{n}$), then in $I'$ we consider $e':=0\XX0\YY\in\bits{2(n+1)}$ and $e'':=0\XX1\YY\in\bits{2(n+1)}$ as edges corresponding to $e$ (we ignore the edge if $\XX$ is prefixed with $1$ rather than $0$, but we allow both prefixes for $\YY$, and so we have two copies of $e$ and not four).

We wish to define $I'$ so that $e$ is valid if and only if $e'$ and $e''$ are.
Thus, consider a label $l\in\bits{n+1}$ of some edge $e$ in $I$, and its copies $0l$ and $1l$ in $I'$. 
When $l\ge 2$ (interpreted as a binary number), we have that $1l\ge 2$ and $0l\ge 2$ (again, we interpret $0l$ and $1l$ as binary numbers, obtained by prefixing $l$ with the bits $0$ and $1$ respectively). Therefore, both copies are within the acceptable range of labels of $\kEMajBal{2}$, and we can set $\CC'(e')=0l$ and $\CC'(e'')=1l$ (and define $\VV'$ accordingly).

When $l=0^{n+1}$, we do not need to copy the label, as the edge $e$ is invalid in $I$, so it suffices to set $\CC'(e')=\CC'(e'')=0^{n+2}$ to ensure $e'$ and $e''$ are invalid in $I'$. Thus, the label $1l=10^{n+1}$ is currently available for future use.

However, when $l=1$ (or in the general case, when $1\le l<k$), the label is valid in $I$, while not all of its copies are within the acceptable range in $I'$: We have that $0l<2$ (prefixing with $0$ the binary string representing the integer $l$ still yields the same integer), while $1l$ is valid (prefixing $l$ with $1$ yields the integer $2^{n+1}+1>2$).
Therefore, we must find a valid label to replace $0l$. 
To do so, we exploit the fact that the copy $1\concat 0^{n+1}$ (representing the integer $2^n\ge 2$) of the label $0^{n+1}$ is available to replace $0l$.

It is not difficult to generalize this idea to any $k\ge 2$ value. 
Instead of adding one bit to the label space, we add $\log(k)$ bits, thus creating $k$ copies of each edge and each label.
As before, the copies of an invalid edge of $I$ can all be mapped by $\CC'$ to $0^{n+\log(k)}$ to guarantee they remain invalid.
Hence, we have $k-1$ copies of $0^{n+1}$ (all of which encode integers greater or equal to $2^{n+1}\ge k$) which can be used to replace the smallest copy of each label $l\in\bits{n+1}$ with $1\ge l<k$ (the smallest copy of $l$ is always $0^{\log(k)}l$, which encodes an integer equal to $l$, and is thus invalid in $I'$).

We proceed with the formal construction of $I'=\langle \CC',\VV' \rangle$.
Let $k'=\log(k)$ (recall that $k$ is a power of two).
We define two circuits $\CC'\colon \bits{2(n+k')}\to\bits{n+k'+1}$ and $\VV'\colon \bits{n+k'+1}\to\bits{2(n+k')}$. 
Let $\XX',\YY'\in\bits{n+k'}$. Denote $\XX'=\XX'_1\XX'_2$ with $\XX'_1\in\bits{k'}$ and $\XX'_2\in\bits{n}$, and $\YY'=\YY'_1\YY'_2$ with $\YY'_1\in\bits{k'}$ and $\YY'_2\in\bits{n}$.
To define the behavior of the circuits, we make a case distinction based on the values of $\XX'_1$, $\YY'_1$, and $\CC(\XX'_2\concat \YY'_2)$.

First, if $(\XX'_1\ne 0^{k'}) \lor (\CC(\XX'_2\concat \YY'_2)=0^{n+1}) \lor (\VV(\CC(\XX'_2\concat \YY'_2))\ne \XX'_2\YY'_2)$, we let $\CC'(\XX'\concat \YY')=0^{n+k'+1}$. The first condition ensures that only strings of the form $0^{k'}\XX'_2$ can be candidates for winners, while the second and third ensure that copies of invalid edges remain invalid in the reduced instance.

Now, assuming $(\XX'_1=0^{k'}) \wedge (\CC(\XX'_2\concat \YY'_2)\ne 0^{n+1}) \wedge (\VV(\CC(\XX'_2\concat \YY'_2)) = \XX'_2\YY'_2)$, let $l:=\CC(\XX'_2\concat \YY'_2)$ and $i:=\YY'_1$.
We define $\CC'$ formally by case analysis.
For the first case, assume $l\ge k$. Then we let $\CC'(\XX'\concat \YY')=i\concat l$ (where now we interpret $i$ and $l$ as bit strings and concatenate them, obtaining the $i$-th copy of label $l$).

For the second case, assume $1\le l<k$.
Then, if $i\ne 0^{k'}$ we let $\CC'(\XX'\concat \YY')=i\concat l$ as before. 
If $i=0^{k'}$, let $l'$ be the $k'$-bit representation of the binary number $l$ ($k'$ bits suffice, since $l<k=2^{k'}$). Then we set $\CC'(\XX'\concat \YY')=l'\concat 0^{n+1}$.
Namely, it is the $l'$-th copy of $0^{n+1}$ (notice how $l'$ now defines the copy index instead of the label, since the $0$-th copy of the label $l$ is invalid, and so we need to refer to the unused copies of $0^{n+1}$).

The circuit $\VV'$ will be defined as follows. On input $i\concat l$, where $i\in\bits{k'}$ and $l\in\bits{n+1}$, consider the following cases.
If $l\ne 0^{n+1}$ (i.e., $i\concat l$ is not a copy of $0^{n+1}$), let $\XX\YY:=\VV(l)$ with $\XX,\YY\in\bits{n}$. We then define $\VV'(i\concat l)=0^{k'}\XX i\YY$.
If $l=0^{n+1}$ (i.e., $i\concat l$ is a copy of $0^{n+1}$), let $i'$ be the $(n+1)$-bit representation of the integer $i$, and let $\XX\YY:=\VV(i')$ where $\XX,\YY\in\bits{n}$.
Then, define $\VV'(i\concat l)=0^{k'}\XX0^{k'}\YY$ (this is the smallest copy of the edge $\XX\YY$).

We proceed with proving correctness of the reduction.
Suppose $\XX^*\in\bits{n}$ is a solution for the instance $I$ of $\kEMajBal{1}$. Then for all $\YY\in\bits{n}$ we have $\CC(\XX^*\concat \YY)>0$ and $\VV(\CC(\XX^*\concat \YY))=\XX^*\YY$.
We wish to show that $0^{k'}\XX^*\in\bits{n+k'}$ is a solution for $I'$.
Let $\YY'_1\concat\YY'_2\in\bits{n+k'}$, where $\YY'_1\in\bits{k'}$ and $\YY'_2\in\bits{n}$, and denote $l:=\CC(\XX^*\concat \YY'_2)>0$.
We make the following case distinction.

If $l\ge k$, then by definition we have $\CC'(0^{k'}\XX^*\YY'_1\YY'_2)=\YY'_1\concat l\ge k$, and furthermore $\VV'(\CC'(0^{k'}\XX^*\YY'_1\YY'_2))=\VV'(\YY'_1\concat l)=0^{k'}\XX^*\YY'_1\YY'_2$, and thus $0^{k'}\XX^*$ is a solution for $\kEMajBal{k}$ on $\langle\CC',\VV'\rangle$.

If $1\le l<k$ and $\YY'_1\ne 0^{k'}$, then we have $\CC'(0^{k'}\XX^*\concat \YY'_1\YY'_2)=\YY'_1\concat l\ge 2^{n+1}\ge k$ (the prefix $\YY'_1$ ensures the binary interpretation of this string is at least $2^{n+1}$, and we assumed $k\le 2^n$).
Furthermore, we have that $\VV'(\CC'(0^{k'}\XX^*\concat \YY'_1\YY'_2))=\VV'(\YY'_1\concat l)=0^{k'}\XX^*\YY'_1\YY'_2$.

If $1\le l<k$ and $\YY'_1=0^{k'}$, let $l'$ be the $k'$-bit representation of the binary number $l$. 
Then by definition we have $\CC'(0^{k'}\XX^*\concat \YY'_1\YY'_2)=l'\concat 0^{n+1}\ge 2^{n+1}\ge k$ (since $l'>0$, the prefix $l'$ ensures the binary interpretation of this string is at least $2^{n+1}$).
Furthermore, we have that $\VV'(\CC'(0^{k'}\XX^*\concat \YY'_1\YY'_2))=\VV'(l'\concat 0^{n+1})=0^{k'}\XX^*0^{k'}\YY'_2=0^{k'}\XX^*\YY'_1\YY'_2$.

Hence, we have that $0^{k'}\XX^*$ is a solution for $\kEMajBal{k}$ on $\langle\CC',\VV'\rangle$.

Conversely, suppose $0^{k'}\XX^*\in\bits{n+k'}$ is a solution for $\kEMajBal{k}$ on $\langle\CC',\VV'\rangle$ (a solution must have a $0^{k'}$ prefix as otherwise $\CC'$ maps it to $0^{n+k'+1}$ with any $\YY'\in\bits{n+k'}$).
Let $\YY\in\bits{n}$, and denote $l:=\CC(\XX^*\concat \YY)$.
If $l=0^{n+1}$, then clearly $\CC'(0^{k'}\XX^*\concat \YY'\YY)=0^{n+k'+1}$ for any $\YY'\in\bits{k'}$, a contradiction to $0^{k'}\XX^*$ being a solution for $\langle\CC',\VV'\rangle$.
Hence, $l\ne 0^{n+1}$.

Consider the string $0^{k'}\YY$.
We make the following case distinction.

If $l\ge k$ then we have that 
\begin{equation}
\label{eq1:thm_Emaj_k_pma}
0^{k'}\XX^*0^{k'}\YY=\VV'(\CC'(0^{k'}\XX^*,0^{k'}\YY))=\VV'(0^{k'}\concat l)    
\end{equation}
(the former equality is due to the definition of a solution, and the latter by definition of $\CC'$).
On the other hand, denoting $\XX_l\YY_l=\VV(l)$ with $\XX_l,\YY_l\in\bits{n}$, by definition of $\VV'$ we have that
\begin{equation}
\label{eq2:thm_Emaj_k_pma}
    \VV'(0^{k'}\concat l)=0^{k'}\XX_l0^{k'}\YY_l
\end{equation}
By \Cref{eq1:thm_Emaj_k_pma,eq2:thm_Emaj_k_pma} we have that $\XX_l=\XX^*$ and $\YY_l=\YY$.
Hence, $\XX^*\YY=\XX_l\YY_l=\VV(l)=\VV(\CC(\XX^*\concat \YY))$. 

If $l<k$, let $l'$ be the $k'$-bit representation of the binary number $l$.
By similar reasoning to \Cref{eq1:thm_Emaj_k_pma}, we have that
\begin{equation}
\label{eq3:thm_Emaj_k_pma}
0^{k'}\XX^*0^{k'}\YY=\VV'(\CC'(0^{k'}\XX^*,0^{k'}\YY))=\VV'(l'\concat 0^{n+1})    
\end{equation}
On the other hand, denoting $\XX_l\YY_l=\VV(l)$ with $\XX_l,\YY_l\in\bits{n}$, by definition we have that 
\begin{equation}
\label{eq4:thm_Emaj_k_pma}
\VV'(l'\concat 0^{n+1})=0^{k'}\XX_l0^{k'}\YY_l
\end{equation}
By \Cref{eq3:thm_Emaj_k_pma,eq4:thm_Emaj_k_pma} we have that $\XX_l=\XX^*$ and $\YY_l=\YY$.
Hence, $\XX^*\YY=\XX_l\YY_l=\VV(l)=\VV(\CC(\XX^*\concat \YY))$. 

Therefore, $\XX^*$ is a solution for $\langle\CC,\VV\rangle$.
Therefore, the reduction preserves satisfiability, completing the proof of \PMA{}-hardness.
\end{proof}

\begin{proof}
(\textbf{\Cref{thm:EmajSet_pma}}).
Showing \PMA{}-hardness of \EMajSet{} is straightforward, as it can be reduced from $\kEMajBal{1}$ trivially by specifying the set $\{0\}$ of invalid edge labels, while keeping the Boolean circuit untouched.
For inclusion in \PMA{}, we now show that $\EMajSet{}\reduce\EMajBal{}$.
We begin by introducing the following notation.
For a set $T\subseteq \NN_0$ and an element $t\in T$, denote by $idx_T(t)\in [|T|]_0$ the index of $t$ in the list $ordered(T)$, where $ordered(T)$ is the sequence of elements of $T$ sorted in increasing order.
Conversely, for $i\in [|T|]_0$ denote by $T(i)\in T$ the $i$-th element of $ordered(T)$.
Notice that both $idx_T(t)$ and $T(i)$ can be computed in time polynomial in $|T|$ (and in $\log(t')$ for all $t'\in T$).

Now, let $\langle \CC,\VV,S\rangle$ be an instance of \EMajSet{}, where $\CC$ has $2n$ inputs and $n+1$ outputs, and $\VV$ the opposite.
Denote $k:=|S|$, $K=[k]_0$, $S':=S\bs K$, and $K':=K\bs S$.
Notice that $|S|=|K|$, and thus $|S'|=|K'|$.
We construct the instance $\langle \CC',\VV', k \rangle$ of $\EMajBal{}$, where $\CC'$ and $\VV'$ are defined as follows.
The number of inputs and outputs of $\CC'$ and $\VV'$ match those of $\CC$ and $\VV$ respectively.
For $\XX,\YY\in\bits{n}$, define:
\[
\CC'(\XX\concat \YY)=
\begin{cases}
\CC(\XX\concat \YY) & \text{ if } \CC(\XX\concat \YY)\in S\cap K \text{ or } \CC(\XX\concat \YY)\notin S\cup K \\
idx_{S'}(\CC(\XX\concat \YY)) & \text{ if } \CC(\XX\concat \YY)\in S' \\
S'(\CC(\XX\concat \YY)) & \text{ if } \CC(\XX\concat \YY)\in K'
\end{cases}
\]
For $\ZZ\in\bits{n+1}$, define:
\[
\VV'(\ZZ)=
\begin{cases}
\VV(\ZZ) & \text{ if } \ZZ\in S\cap K \text{ or } \ZZ\notin S\cup K \\
\VV(idx_{S'}(\ZZ)) & \text{ if } \ZZ\in S' \\
\VV(S'(\ZZ)) & \text{ if } \ZZ\in K'
\end{cases}
\]
Suppose $\XX^*\in\bits{n}$ is a solution for \EMajSet{} on $\langle \CC,\VV,S\rangle$. Let $\YY\in\bits{n}$. 
We have that $\VV(\CC(\XX^*\concat \YY))=\XX^*\YY$ and $\CC(\XX^*\concat \YY)\notin S$.
If $\CC(\XX^*\concat \YY)\notin S\cup K$, then $\CC'(\XX^*\concat \YY)\ge k$, and furthermore $\VV'(\CC'(\XX^*\concat \YY))=\VV'(\CC(\XX^*\concat \YY))=\VV(\CC(\XX^*\concat \YY))=\XX^*\YY$.
Otherwise, we have that $\CC(\XX^*\concat \YY)\in K'$, and thus $\CC'(\XX^*\concat \YY)=S'(\CC(\XX^*\concat \YY))\in S'$.
Hence, we have that $\VV'(\CC'(\XX^*\concat \YY))=\VV'(S'(\CC(\XX^*\concat \YY)))=
\VV(idx_{S'}(S'(\CC(\XX^*\concat \YY))))=\VV(\CC(\XX^*\concat \YY))=\XX^*\YY$.
Thus, we have that $\XX^*$ is a solution for \EMajBal{} on $\langle\CC',\VV' \rangle$.

Conversely, suppose $\XX^*\in\bits{n}$ is a solution for \EMajBal{}{} on $\langle \CC',\VV',k\rangle$. Let $\YY\in\bits{n}$.
Then $\VV'(\CC'(\XX^*\concat \YY))=\XX^*\YY$ and $\CC'(\XX^*\concat \YY)\notin K$.
If $\CC(\XX^*\concat \YY)\in S'$ then $\CC'(\XX^*\concat \YY)=idx_{S'}(\CC(\XX^*\concat \YY))\in K$, a contradiction. 
Hence, we have that $\CC(\XX^*\concat \YY)\notin K\cup S'=S$.
Additionally, this implies that $\CC(\XX^*\concat \YY)\notin S\cup K$.
Thus, we have that $\VV(\CC(\XX^*\concat \YY))=\VV'(\CC(\XX^*\concat \YY))=\VV'(\CC'(\XX^*\concat \YY))=\XX^*\YY$.
Therefore, $\XX^*$ is a solution for \EMajSet{} on $\langle\CC,\VV, S\rangle$.
\end{proof}

\begin{proof}
(\textbf{\Cref{thm:PMA_Usig2}}).
It is clear that $\PMA{}\subseteq\Sig{2}$, since the condition of \EMajBal{} can be checked in polynomial time for any pair of strings $\XX$ and $\YY$.
Thus, by \Cref{thm:Emaj_k_pma} it suffices to show that any instance $\langle\CC,\VV\rangle$ of $\kEMajBal{1}$ may have at most one solution. 
Assume towards contradiction there exist two strings $\XX_1,\XX_2\in\bits{n}$ where $\XX_1\neq\XX_2$ such that for all $\YY\in\bits{n}$ and $\XX\in\{\XX_1,\XX_2\}$ we have $\VV(\CC(\XX\concat \YY))=\XX \YY \;\land\; \CC(\XX\concat \YY)\neq 0^{n+1}$. 
Then we have $\CC(\XX_1\concat \YY)\in\bits{n+1}\bs \{0^{n+1}\}$ and $\CC(\XX_2\concat \YY)\in\bits{n+1}\bs \{0^{n+1}\}$ for all $\YY\in\bits{n}$.
Furthermore, since $\XX_1\neq\XX_2$, we have $|\{\XX_1\YY\colon \YY\in\bits{n}\}\cup \{\XX_2\YY\colon \YY\in\bits{n}\}|=2^n+2^n=2^{n+1}$.
Thus, by the Pigeonhole Principle, there exist $\YY_1,\YY_2\in\bits{n}$ such that $\CC(\XX_1\concat \YY_1)=\CC(\XX_2\concat \YY_2)$. 
Therefore, we must have that either $\VV(\CC(\XX_1\concat \YY_1))\neq\XX_1 \YY_1$ or $\VV(\CC(\XX_2\concat \YY_2))\neq\XX_2 \YY_2$, a contradiction to the choice of $\XX_1$ or $\XX_2$.
\end{proof}

\begin{proof}
(\textbf{\Cref{thm:PMA_bounds}}, \SymP{} bound).
To prove this, we show that $\kEMajBal{1}\in\SymP{}$. Let $I=\langle\CC,\VV\rangle$ be an instance of $\kEMajBal{1}$.
For a pair $t=(\XX,\YY)$ where $\XX,\YY\in\bits{n}$, let us say that $t$ is \emph{verified} if $\VV(\CC(\XX\concat \YY))=\XX\concat \YY \;\land\; \CC(\XX\concat \YY)\neq 0^{n+1}$.

Consider the bipartite graph $G=(X,Y,E)$ with $X=Y=\bits{n}$ and $E=\{(\XX,\YY)\in X\times Y\colon (\XX,\YY) \text{ is verified}\}$. Indeed, a vertex has out-degree $2^n$ in $G$ if and only if it is a winner in $I$.
Since $\VV$ has $n+1$ inputs, and any pair $(\XX,\YY)$ that is mapped to $0^{n+1}$ by $\CC$ is not verified by design, there are at most $2^{n+1}-1$ verified pairs; this is also an upper bound on $|E|$. If all vertices in $Y$ have in-degree at least two, then we get $|E|\geq 2^{n+1}$, a contradiction to the above. Hence, there exists a vertex $\YY'\in Y$ with in-degree at most one. 

Now, for $\XX\in X$ and $S\subseteq Y$ with $|S|=2$, define the polynomial-time predicate 
\[P(G,S,\XX)=
\begin{cases}
    1 & \text{ if } \forall \YY\in S\;\;(\XX,\YY)\in E\\
    0 & \text{otherwise}
\end{cases}
\]

If $I$ is a no-instance, namely all vertices in $X$ have out-degree at most $2^n-1$, we show that there exists a set $S=\{\YY_1,\YY_2\}$ where $\YY_1,\YY_2\in\bits{n}$, such that for all $\XX$ we have $P(G,S,\XX)=0$, namely either $(\XX,\YY_1)\notin E$ or $(\XX,\YY_2)\notin E$. Indeed, we can construct such a set $S$ as follows. Take $\YY'$ with $in_G(\YY')\leq 1$ as discussed above. If $in_G(\YY')=1$, namely there exists a vertex $\XX''$ such that $(\XX'',\YY')\in E$, then pick $\YY''\in V$ such that $(\XX'',\YY'')\notin E$ (there must exist such $\YY''$ since $I$ is a no-instance). If $in_G(\YY')=0$ then any arbitrary $\YY''$ will do. The set $S=\{\YY',\YY''\}$ then satisfies the above property.

Conversely, if $I$ is a yes-instance with a solution $\XX^*\in\bits{n}$, then for all sets $S=\{\YY_1,\YY_2\}$ we have that $P(G,S,\XX^*)=1$.
Hence, we conclude that $\kEMajBal{1}\in\SymP{}$.
\end{proof}

For the following, we provide a formal definition of the class \coAM{}.

\begin{definition}
\label{def:coAM}
A language $L\subseteq\Sigma^*$ is in $\coAM{}$ if there is a polynomial-time predicate $P(\cdot,\cdot,\cdot)$ such that for all $w\in\Sigma^*$ we have:
\begin{itemize}
    \item $w\in L\implies \Prob_{\XX}[\exists\YY\;\; P(w,\XX,\YY)=1]\le 1/3$, and
    \item $w\notin L\implies \Prob_{\XX}[\exists\YY\;\; P(w,\XX,\YY)=1]\ge 2/3$
\end{itemize}
where $|\XX|$ and $|\YY|$ are bounded by a polynomial function of $|w|$.
\end{definition}

\begin{proof}
(\textbf{\Cref{thm:PMA_bounds}}, \coAM{} bound).
Let $w=\langle \CC,\VV \rangle$ be an $\kEMajBal{1}$ instance, and let $G=(X,Y,E)$ be the graph induced by $w$, where $X=Y=\bits{n}$ and $E$ is defined by $\CC$ and $\VV$ (as detailed in \Cref{sec:PMA}). 
Let $l=\{l_1,l_2\}\subseteq\bits{n+1}$ be two edge labels, and $\YY=\{\YY_1,\YY_2\}\subseteq \bits{n}$ be two $Y$-vertices of $G$.
Further, let $\XX_1$ and $\XX_2$ be the $X$-vertices that $\VV$ maps $l_1$ and $l_2$ to, respectively; that is, we have $\VV(l_1)=\XX_1\concat \YY'$ and $\VV(l_2)=\XX_2\concat \YY''$ for some (insignificant) $\YY'$ and $\YY''$. 
Define the predicate $P(\cdot,\cdot,\cdot)$ by $P(w,l,\YY)=1$ if and only if $(\XX_i,\YY_i)\notin E$ for each $i\in\{1,2\}$. 

If $w$ is a yes-instance, and $\XX^*$ is its solution, then for a random choice of $l$, with probability at least $3/4$ we have that either $l_1$ or $l_2$ belong to $\XX^*$ (that is, for some $i\in\{1,2\}$ we have $\VV(l_i)=\XX^*\concat\YY^*$ for some $\YY^*$, and $\CC(x^*\concat\YY^*)=l_i$). 
If indeed such a label is chosen (and therefore $\XX^*\in\{\XX_1,\XX_2\}$), then $P(w,l,\YY)=0$ for any choice of $\YY$, since no $Y$-vertex can be found that is not adjacent to $\XX^*$.
Thus, we have $\Prob_{l\in\bits{n+1}}[\exists\YY\;\; P(w,l,\YY)=1]\le 1/4<1/3$.

Conversely, if $w$ is a no-instance, then for any choice of $l=(l_1,l_2)$, both corresponding $x_1$ and $x_2$ have some $y_1$ and $y_2$ such that $(x_1,y_1)\notin E$ and $(x_2,y_2)\notin E$. 
Hence, $\Prob_{l\in\bits{n+1}}[\exists\YY\;\; P(w,l,\YY)=1]=1\ge 2/3$.
\end{proof}

Our next goal is to show a lower bound for \PMA{}, and specifically we show that it contains \coNP{}. To do so, we reduce from the canonical \coNP{} version of SAT, defined as follows.

\begin{wbox}
    \UnSat{}\\ 
    \textbf{Input:} A set $\fXX=\{x_1, \dots,x_n\}$ of Boolean variables and a Boolean formula $\psi(\fXX)$ over $\fXX$.\\
    \textbf{Question:}
    $\forall \XX\in\bits{n}\;\;\psi(\XX)=0$?
\end{wbox}

\begin{proof}
(\textbf{\Cref{thm:PMA_bounds}}, \coNP{} bound).
We show that $\UnSat{}\reduce\kEMajBal{1}$, implying any \coNP{} problem is reducible to \EMaj{} and is thus in \PMA{}.
Given an instance $\psi$ of \UnSat{}, we construct the instance $\langle \CC,\VV\rangle$ of $\kEMajBal{1}$, defined by:

\begin{equation}
\forall \XX,\YY\in\bits{n}\; \CC(\XX\concat \YY)=
\begin{cases}
1 \YY & \text{ if } \XX=0^n \land \psi(\YY)=0\\
0^{n+1} & \text{otherwise}
\end{cases}
\end{equation}

\begin{equation}
\forall \ZZ\in\bits{n+1}, \text{ where } \ZZ=z \ZZ' \text{ with } z\in\{0,1\} \text{ and }\ZZ'\in\bits{n},\;\; \VV(\ZZ)=0^n \ZZ'
\end{equation}

First, suppose $\psi$ is a yes-instance of \UnSat{}. We wish to show that $0^n$ is a solution for $\kEMajBal{1}$ on $\langle \CC,\VV\rangle$. Let $\YY\in\bits{n}$. Since $\psi$ is a yes-instance, we have $\psi(\YY)=0$, and therefore we have $\CC(0^n\concat \YY)=1 \YY\neq0^{n+1}$. Furthermore, we have $\VV(\CC(0^n\concat \YY))=\VV(1 \YY)=0^n \YY$. Therefore, $0^n$ is a solution for $\kEMajBal{1}$ on $\langle \CC,\VV\rangle$ by definition.

Conversely, suppose $\langle \CC,\VV\rangle$ is a yes-instance of $\kEMajBal{1}$, and let $\XX^*$ be its solution. Assume towards contradiction $\psi$ is a no-instance, namely there exists $\YY'\in\bits{n}$ such that $\psi(\YY')=1$. Then we have $\CC(\XX^*\concat \YY')=0^{n+1}$, and therefore $\XX^*$ is not a solution for $\kEMajBal{1}$ on $\langle \CC,\VV\rangle$, a contradiction.
\end{proof}
\section{Proofs of Section \ref{sec:PNP} (\PNP{})}
\label{sec:PNP_Appendix}
In this section, we provide all proofs missing from \Cref{sec:PNP}.
For our reductions, we build on the result of \cite{papadimitriou1984complexity} that determining the existence of a unique optimal Traveling Salesman Problem tour is \PNP{}-complete; this was the first problem shown to be complete for \PNP{}, after which several others have been identified (see, e.g., \cite{krentel1986complexity}).

\begin{wbox}
\UTSP\\
\textbf{Input:} Graph $G$ given as an adjacency matrix.\\
\textbf{Question:} Is there a unique minimal-weight Hamiltonian cycle in $G$?
\end{wbox}

\begin{theorem}
\label{thm:UTSP}
(\cite{papadimitriou1984complexity}).
\UTSP{} is \PNP{}-complete.
\end{theorem}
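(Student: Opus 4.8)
The plan is to establish the two inclusions \UTSP{}$\in\DeltaP{2}$ and $\DeltaP{2}$-hardness of \UTSP{} separately. For membership, recall $\DeltaP{2}=\PNP{}$. On input $G$ I would first ask the \SAT{} oracle whether $G$ has any Hamiltonian cycle at all (reject if not); otherwise I would binary-search for the optimal tour weight $W$ using the \NP{} predicate ``$G$ has a Hamiltonian cycle of weight at most $k$'', which needs only $O(\log(n W_{\max}))$ queries where $W_{\max}$ is the largest edge weight. Finally I would pose one more \NP{} query, ``$G$ contains two distinct Hamiltonian cycles (up to reversal), each of weight exactly $W$'', and accept iff the answer is negative. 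This is a polynomial-time computation with a \SAT{} oracle, so \UTSP{}$\in\DeltaP{2}$.

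For hardness I would reduce from \textsc{Lex-Max-SAT-Bit-Zero}: given a satisfiable Boolean formula $\Phi(v_1,\dots,v_n)$, decide whether $v_n=0$ in the lexicographically maximum satisfying assignment of $\Phi$. This is \DeltaP{2}-complete by the standard unrolling of a $\PNP{}$-machine's polynomially many adaptive \SAT{} queries into one formula whose lex-max satisfying assignment records all the answers, together with closure of \DeltaP{2} under complement. Given $\Phi$, I would first replace it by $\Phi'$ on variables $v_1,\dots,v_{n-1},v_n,v_{n+1}$ (with $v_{n+1}$ assigned lowest priority), where
\[
\Phi' \;:=\; \bigl(v_n \wedge \Phi(v_1,\dots,v_{n-1},1)\bigr)\ \vee\ \bigl(\neg v_n \wedge \neg v_{n+1} \wedge \Phi(v_1,\dots,v_{n-1},0)\bigr).
\]
A short case analysis shows that, writing $a^*$ for the lexicographically greatest prefix on $v_1,\dots,v_{n-1}$ admitting a satisfying extension (this is the same prefix for $\Phi$ and for $\Phi'$), the number of satisfying extensions $(v_n,v_{n+1})$ of $a^*$ in $\Phi'$ is exactly $1$ if $v_n=0$ in the lex-max assignment of $\Phi$, and at least $2$ otherwise.

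From $\Phi'$ I would build a \UTSP{} instance in the classical way: a chain of $n+1$ ``diamond'' choice gadgets, one per variable (left branch $=$ true), where taking the false branch of diamond $i\le n-1$ costs $w_i:=4^{(n+1)-i}$ while both branches of diamonds $n$ and $n+1$ cost $0$; plus the standard clause-checking apparatus (clause nodes reachable by detours from the paths of their satisfied literals), calibrated so that a tour whose induced assignment violates $\Phi'$ costs at least $P:=2w_1>\sum_i w_i$ more than any tour respecting $\Phi'$. Then every minimum-weight Hamiltonian cycle induces a satisfying assignment; the geometrically decreasing weights $w_i$ force $v_1,\dots,v_{n-1}$ to equal $a^*$; and, since the last two diamonds are free, the optimal tours correspond bijectively to the satisfying extensions $(v_n,v_{n+1})$ of $a^*$ in $\Phi'$ (provided the completion of each such tour is forced). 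By the property of $\Phi'$ above, $G$ then has a \emph{unique} optimal tour iff $v_n=0$ in the lex-max satisfying assignment of $\Phi$, which is the required reduction.

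The main obstacle is the clause-checking apparatus: it must simultaneously penalise non-satisfying assignments by strictly more than $\sum_i w_i$ and, for each fixed satisfying assignment, admit exactly one minimum-weight completion, so that the optimal-tour count is controlled purely by the free variables $v_n,v_{n+1}$. Since a clause can have several satisfied literals, a naive detour gadget leaves routing freedom; I would remove it either by a ``linear detour chain'' per clause that forces a canonical order of visits, or by tie-breaking detours with literal-indexed weights small enough to be dominated by every $w_i$, and then verify that these tie-breakers do not perturb the lexicographic argument.
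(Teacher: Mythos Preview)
The paper does not prove this theorem; it is stated purely as a cited result from Papadimitriou (1984) and used as a black box (in particular, to establish that \UOPT{} is \DeltaP{2}-complete). There is therefore no paper proof to compare your proposal against.

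Your membership argument is correct and is exactly the algorithm the paper alludes to when it says, in the proof that $\UOPT\in\DeltaP{2}$, that one ``can use an algorithm analogous to the one by Papadimitriou''.

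Your hardness sketch is a reasonable plan and differs from Papadimitriou's original, which simulates a generic $\PNP$ computation directly rather than passing through a lex-max-SAT intermediate. Your $\Phi'$ construction is correct: the lex-max prefix on $v_1,\dots,v_{n-1}$ is indeed the same for $\Phi$ and $\Phi'$, and the count of satisfying extensions behaves as you claim. The real work, as you yourself flag, is the clause apparatus: each satisfying assignment must yield a \emph{unique} optimal tour completion, and your tie-breaking idea has to be executed so that the tie-breakers are dominated by every $w_i$ yet still single out one detour when a clause has several satisfied literals. One point you do not address on the hardness side (though you handle it for membership) is orientation: an undirected Hamiltonian cycle has two traversals, so without a directed variant or an asymmetry gadget, every tour is automatically non-unique and the reduction collapses. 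This is routine to fix but must be fixed.
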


Before addressing the main results of this section, we define a problem we denote \UOPT{}. This is a more general Boolean-circuit formulation of \UTSP{}, which we show is also \PNP{}-complete. It will be convenient to use this formulation for the reductions throughout this section.

\begin{wbox}
    \UOPT\\
    \textbf{Input:} Boolean circuit $\CC$, with $n$ inputs and $n$ outputs.\\
    \textbf{Question:}
    $\exists \XX\in\bits{n} \;\; \forall \XX'\in\bits{n} \text{ with } \XX'\neq \XX \;\; \CC(\XX)>\CC(\XX')$?
\end{wbox}

\begin{lemma}
\label{lem:UOPT}
\UOPT{} is \PNP{}-complete.
\end{lemma}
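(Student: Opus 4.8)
The plan is to establish membership in $\DeltaP{2}$ and $\DeltaP{2}$-hardness separately; recall that $\DeltaP{2}=\PNP{}$, the class of languages polynomial-time Turing-reducible to \SAT{}.

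For membership, observe that $\CC$ is a yes-instance of \UOPT{} exactly when the maximum value attained by $\CC$ is attained by a single input. A polynomial-time machine with a \SAT{}-oracle decides this as follows: it first computes $M:=\max_{\XX\in\bits{n}}\CC(\XX)$ by binary search over $\{0,\dots,2^{n}-1\}$, using the $\NP{}$ queries ``$\exists\XX\in\bits{n}\colon\CC(\XX)\ge t$''; it then poses the single $\NP{}$ query ``are there $\XX_1,\XX_2\in\bits{n}$ with $\XX_1\neq\XX_2$ and $\CC(\XX_1)=\CC(\XX_2)=M$''. Since $M$ is attained, the input is a yes-instance if and only if there is no such pair. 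Hence $\UOPT{}\in\PNP{}=\DeltaP{2}$.

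For hardness I would reduce from \UTSP{}, which is $\DeltaP{2}$-complete by \Cref{thm:UTSP}. Given a graph $G$ on vertex set $[n]$ with nonnegative integer weight matrix $(w_{ij})$, set $B:=1+\sum_{i,j}w_{ij}$, a strict upper bound on the weight of any Hamiltonian cycle. I would build a circuit $\CC$ whose input encodes a sequence $(v_1,\dots,v_n)\in[n]^{n}$ together with padding bits constrained to be $0$, and whose binary output is $0$ unless the input encodes, in a fixed \emph{canonical} form, a Hamiltonian cycle of $G$, in which case the output is $B-w(v_1,\dots,v_n)\in\{1,\dots,B\}$, where $w(\cdot)$ is the cycle weight. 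The canonical form requires $v_1=1$, that $v_1,\dots,v_n$ be a permutation of $[n]$, that $v_2<v_n$ (breaking the reflection symmetry; for $n\le 2$ the instance is trivial), and that every consecutive pair, including $\{v_n,v_1\}$, be an edge of $G$. Checking these conditions and summing the weights is done by a circuit of size polynomial in the input, so the reduction runs in polynomial time; the numbers of input and output wires can be made equal by padding the shorter with zeros.

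The crux is that the canonical form puts the Hamiltonian cycles of $G$ in bijection with the inputs on which $\CC$ is positive, matching a cycle of weight $w$ with the value $B-w$, while all other inputs give value $0$. Consequently, if $G$ has no Hamiltonian cycle then $\CC\equiv 0$ has no unique maximizer, which agrees with $G\notin\UTSP{}$; and if $G$ does have Hamiltonian cycles, then the maximum of $\CC$ equals $B-w^{\ast}$ (for $w^{\ast}$ the minimum cycle weight) and is attained by exactly the canonical encodings of the minimum-weight cycles, so $\CC$ has a unique maximizer iff $G$ has a unique minimum-weight Hamiltonian cycle. Thus $G\in\UTSP{}$ iff $\CC\in\UOPT{}$, giving $\DeltaP{2}$-hardness and, with the first part, $\DeltaP{2}$-completeness. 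I expect the only delicate point --- and the reason a naive reduction fails --- to be the canonical encoding: without it a single optimal tour would have several preimages, spuriously violating \UOPT{}'s uniqueness requirement, so parsimony has to be enforced by hand.
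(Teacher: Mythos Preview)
Your proof is correct and follows essentially the same approach as the paper: membership via binary search for the optimum plus an oracle query for multiple optimizers, and hardness via a polynomial-time reduction from \UTSP{}. Your treatment of the encoding issue (fixing a canonical form so that the tour-to-input map is a bijection) is in fact more explicit than the paper's terse sketch, which merely says the circuit computes the negated tour length without discussing how rotations, reflections, or non-permutation inputs are handled.
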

\begin{proof}
It is immediate from \Cref{thm:UTSP} that \UOPT{} is \PNP{}-hard: Given a TSP instance we construct a circuit $\CC$ that, given a string representing some permutation of the vertices, calculates the length of the Hamiltonian tour induced by this permutation, and multiplies it by $-1$. Thus, there exists a uniquely minimal-weight TSP tour if and only if there exists a unique $\XX$ maximizing the value $\CC(\XX)$.

As for containment in \PNP{}, we can use an algorithm analogous to the one by \cite{papadimitriou1984complexity} showing \UTSP{} is in \PNP{}:
Use an \NP{} oracle to obtain, with binary search, the optimal value obtained by $\CC$. Use the \NP{} oracle to find a string obtaining the optimal value, and an additional query to check if there is an optimizing string different from the one found. If the first string is the only optimizer, then the answer is yes, otherwise it is no.
\end{proof}

\begin{proof}
(\textbf{\Cref{thm:pnp_results}, $\kCKTCondorcet{2}$}).
To show $\kCKTCondorcet{2}$ is \PNP{}-hard, we reduce from \UOPT{} which, by \Cref{lem:UOPT}, suffices. 
Given a \UOPT{} instance $\CC$, we construct a circuit $\CC'$ that outputs $0$ for any input. Thus, a string $\XX$ is a unique optimizer of $\CC(\XX)$ if and only if it is a Condorcet string in $\langle\CC,\CC'\rangle$ (since the circuit $\CC'$ is always indifferent between any two strings).

To show that $\kCKTCondorcet{2}$ is in \PNP{}, we construct a polynomial-time algorithm $A$ that uses an \NP{} oracle $O$. Given two circuits $\langle \CC_1,\CC_2\rangle$, $A$ works as follows.
\begin{enumerate}
    \item Use $O$ to obtain, with binary search, the optimal values $o_1$ and $o_2$ obtained by $\CC_1$ and $\CC_2$ respectively.
    \item Use $O$ to find $\XX$ such that $\CC_1(\XX)=o_1$ and $\CC_2(\XX)=o_2$, or determine there is none.
    \item Use $O$ to find $\XX'\neq \XX$, such that $\CC_1(\XX')=o_1$ and $\CC_2(\XX')=o_2$, or determine there is none.
    \item If $\XX$ exists but $\XX'$ does not, output $1$, otherwise $0$.        
\end{enumerate}

We wish to prove that $A(\CC_1,\CC_2)=1$ if and only if there exists a Condorcet string $\XX$ in $\langle \CC_1,\CC_2\rangle$.
First, assume $A(\CC_1,\CC_2)=1$. Then there is exactly one string $\XX$ that optimizes both circuits. Let $\XX'\neq \XX$. Since $\XX$ achieves the optimal value in both circuits, we have $\CC_1(\XX)\geq\CC_1(\XX')$ and $\CC_2(\XX)\geq\CC_2(\XX')$. Furthermore, since $\XX$ is the unique string optimizing both circuits, one of these inequalities must be strict. Since this holds for any string $\XX'\neq\XX$, we have that $\XX$ is a Condorcet string.

In the other direction, assume $\XX$ is a Condorcet string. Then it must achieve the optimal value in both circuits, since if there existed a string $\XX'$ with $\CC_1(\XX')>\CC_1(\XX)$ or $\CC_2(\XX')>\CC_2(\XX)$, then $\XX$ is not a Condorcet string, a contradiction. Furthermore, if there exists another string $\XX'$ that optimizes both circuits, then clearly $\XX$ and $\XX'$ tie on both circuits, a contradiction to $\XX$ being a Condorcet string. Hence, $\XX$ must be the only string optimizing both circuits, implying $A(\CC_1,\CC_2)=1$.
\end{proof}

\begin{proof}
(\textbf{\Cref{thm:pnp_results}, \CKTConsensus{}}).
By \Cref{lem:UOPT}, we have that \CKTConsensus{} is \PNP{}-hard because \UOPT{} is a special case of \CKTConsensus{} in which there is only one circuit.

We wish to show that $\CKTConsensus{}\in\PNP{}$. Suppose we are given a \CKTConsensus{} instance $\fCC=\langle\CC_1,\dots,\CC_m\rangle$. By definition, $\XX$ is a solution for $\fCC$ if and only if it is a unique optimizer for all circuits $\CC_i$. 
In particular, a solution would optimize $\CC_1$.
Thus, it suffices to find a string $\XX^*$ optimizing $\CC_1$ and use an additional \NP{} query to check if there is a circuit not optimized by $\XX^*$.
This is formalized in the following algorithm, which uses an \NP{} oracle $O$.
\begin{enumerate}
    \item Use $O$ to find, with binary search, the optimal value of $\CC_1$.
    \item Use $O$ to find a string $\XX^*$ optimizing $\CC_1$.
    \item Use a single call to $O$ to answer the following:
    \[\exists \XX'\ne \XX^*,i\text{ such that } \CC_i(\XX')\geq \CC_i(\XX^*)\]
    \item Output the opposite of the previous query's result.
\end{enumerate}
\end{proof}

\begin{proof}
(\textbf{\Cref{thm:pnp_results}, \SDomStrat{}}).
To show that \SDomStrat{} is \PNP{}-hard, we reduce from $\UOPT$. Given a circuit $\CC$ with $n$ input bits, we construct a circuit $\CC'$ with $2n$ input bits, which on input $(\XX\concat\YY)$ with $\XX,\YY\in\bits{n}$, ignores $\YY$ and simply outputs $\CC(\XX)$. Correctness follows from the definitions.

We now wish to show that $\SDomStrat$ is in \PNP{}. Given a circuit $\CC$ for \SDomStrat, we construct an algorithm that is essentially similar to the one in the \PNP{}-completeness proof of \CKTConsensus{} (\Cref{thm:pnp_results}).
\begin{enumerate}
    \item Fix the string $0^n$, and use an \NP{} oracle to find, with binary search, the optimal value of $\CC(\XX\concat 0^n)$.
    \item Use the oracle to find a string $\XX^*$ optimizing $\CC(\XX\concat 0^n)$.
    \item Use a single call to the oracle to answer the following:
    \[\exists \XX'\ne\XX^*,\YY\text{ such that } \CC(\XX'\concat \YY)\geq \CC(\XX^*\concat \YY).\]
    \item Output the opposite of the previous query's result.
\end{enumerate}
\end{proof}

\begin{proof}
(\textbf{\Cref{thm:pnp_results}, \CktWinThresh{}}).
We begin by showing that \PNP{} contains $\CktWinThresh{}$. Given an instance $\CC\colon\bits{2n}\to\{0,1\}$, call a string $\XX$ an \emph{all-winner} if $\CC(\XX\concat\YY)=1$ for any $\YY\in\bits{n}$, and an \emph{all-loser} if $\CC(\XX\concat\YY)=0$ for any $\YY\in\bits{n}$.
We construct the following algorithm using an \NP{} oracle $O$.
\begin{enumerate}
    \item Using $O$, find with binary search the smallest string $\XX^*\in\bits{n}$ which satisfies:
    \[\forall \XX',\YY\in\bits{n}\;\; \XX'\le\XX^* \lor \CC(\XX'\concat \YY)=0.\]
    Namely, $\XX^*$ is the minimal string such that any string above it is an all-loser. Observe that the largest $n$-bit string, $2^n-1$, trivially satisfies this condition, and therefore $\XX^*$ is well defined.
    \item Use a single query to $O$ asking whether
    $\forall \YY\;\;\CC(\XX^*\concat \YY)=1$; if so answer yes, otherwise no.
\end{enumerate}
We show that the algorithm returns yes if and only if $\CC$ is a yes-instance.
Suppose there exists a solution $\XX$ for the instance $\CC$.
In particular, $\XX$ is the smallest string such that all strings above it are all-losers (there cannot be a smaller string satisfying this, since the solution itself is not an all-loser). 
Hence, the string $\XX^*$ found in Step 1 is the solution for $\CC$. 
Therefore, the condition of Step 2 is satisfied by $\XX^*$ and the algorithm returns yes.

Conversely, assume the algorithm returns yes.
Consider the string $\XX^*$ found in Step 1. All strings above $\XX^*$ are all-losers (by the condition of Step 1), and $\XX^*$ itself is an all-winner (by the condition of Step 2).
Hence, $\XX^*$ is a solution for $\CC$. 

To show \CktWinThresh{} is \PNP{}-hard, we reduce from \UOPT{}.
Given an instance $\CC$ of \UOPT{}, we construct a circuit $\CC'$,  defined for every $\YY,\XX,\YY',\XX'\in\bits{n}$ by
\[\CC'(\YY\XX\concat\YY'\XX')=
\begin{cases}
    1 & \text{ if } \CC(\XX)=\YY \land [(\YY>\YY') \lor (\CC(\XX')\neq \YY')\lor (\YY\XX=\YY'\XX')]\\
    0 & \text{ otherwise}
\end{cases}
\]
We show $\CC$ is a yes-instance of \UOPT{} if and only if $\CC'$ is a yes-instance of \CktWinThresh{}.
Assume $\CC$ is a yes-instance. Then there exists $\XX^*\in\bits{n}$ such that $\CC(\XX^*)>\CC(\XX)$ for any $\XX\neq\XX^*$. Denoting $\YY^*:=\CC(\XX^*)$, we show that $\XX^*\YY^*$ is a solution for $\CC'$. 

Let $\YY',\XX'\in\bits{n}$.
We wish to prove that $\YY^*\XX^*$ is an all-winner, and that if $\YY'\XX'>\YY^*\XX^*$ then $\YY'\XX'$ is an all-loser.
For the first part, we make the following case distinction. 
If $\XX'\YY'=\XX^*\YY^*$, then by definition $\CC'(\YY^*\XX^*\concat \YY'\XX')=1$.
If $\XX'\neq \XX^*$ then observe that we either have that $\CC(\XX')\neq \YY'$ or $\CC(\XX')=\YY'$, where in the latter case we must have $\YY^*>\YY'$ since $\YY^*$ is the uniquely-achieved optimum of $\CC$. Hence, we have that $\CC'(\YY^*\XX^*\concat \YY'\XX')=1$.
If $\XX'=\XX^*$ and $\YY'\ne \YY^*$, then since $\YY^*=\CC(\XX^*)$, we have that $\YY'\ne\CC(\XX^*)=\CC(\XX')$, once again implying that $\CC'(\YY^*\XX^*\concat \YY'\XX')=1$.

For the second part, suppose $\YY'\XX'>\YY^*\XX^*$. Then, in particular we have that $\YY'\geq\YY^*$ (as $\YY'$ and $\YY^*$ comprise of the more significant bits of $\YY'\XX'$ and $\YY^*\XX^*$ respectively). Therefore, since $\YY^*$ is the optimal value of $\CC$, and it is uniquely achieved, we must have $\CC(\XX')\neq \YY'$. Thus, we have $\CC'(\YY'\XX'\concat \YY''\concat \XX'')=0$ for any $\YY'',\XX''\in\bits{n}$, namely $\YY'\XX'$ is an all-loser.
We conclude that $\YY^*\XX^*$ is a solution for \CktWinThresh{} on $\CC$.

Conversely, let $\XX^*,\YY^*\in\bits{n}$ such that $\YY^*\XX^*$ is a solution for \CktWinThresh{} on $\CC'$. In particular, we have that $\CC'(\YY^*\XX^*\concat \YY\XX)=1$ for all $\YY,\XX\in\bits{n}$, implying $\CC(\XX^*)=\YY^*$.
Assume towards contradiction that there exists $\XX'\in\bits{n}$ such that $\YY':=\CC(\XX')\geq\CC(\XX^*)$. Then we have that $\CC'(\YY^*\XX^*\concat \YY'\XX')=0$, a contradiction to the choice of $\YY^*\XX^*$.
Hence, $\XX^*$ is the unique optimizer of $\CC$.
\end{proof}
\section{Proofs of Section \ref{sec:Sig2}}
\label{sec:Sig2_Appendix}
In this section, we provide all proofs missing from \Cref{sec:Sig2}.
We begin with a definition of a canonical \Sig{2} satisfiability problem proved to be \Sig{2}-complete by \cite{stockmeyer1976polynomial}, which we will then use for our reductions.

\begin{wbox}
    \QSAT\\ 
    \textbf{Input:} Two sets $\fXX=\{x_1, \dots,x_n\}$ and $\fYY=\{y_1, \dots,y_n\}$ of Boolean variables and a Boolean formula $\psi(\fXX\concat \fYY)$ over $\fXX\cup \fYY$.\\
    \textbf{Question:}
    $\exists \XX^*\in\bits{n}\;\forall\YY'\in\bits{n}\;\psi(\XX^*\concat \YY')=1$?
\end{wbox}

\begin{proof}
(\textbf{\Cref{thm:CKTUVAL_Sig2}}).
Clearly \CKTUVAL{} is contained in \Sig{2}, as for given strings $\XX$ and $\YY$ we can verify in polynomial time whether they obtain different values when given to the circuit.
To show \CKTUVAL{} is \Sig{2}-hard, we reduce from \QSAT{}. Throughout the proof, we interpret bit string as non-negative binary numbers ranging from $0$ to $2^n-1$, and we denote $\bar{\YY}=2^n-1$ (i.e., the all-ones sequence).
We first provide a short proof sketch. Given a Boolean formula $\psi$ over variables sets $\fXX=\{x_1,\dots,x_n\}$ and $\fYY=\{y_1,\dots,y_n\}$, we construct a circuit $\CC$ with $2n$ input bits, denoted $\XX,\YY\in\bits{n}$. We design $\CC$ such that, if $\XX\neq\XX'$ then $\CC(\XX\concat \YY)\neq\CC(\XX'\concat \YY')$ for all $\YY$ and $\YY'$. We do so by ensuring that different strings $\XX$ are allocated disjoint ranges of values when evaluated by $\CC$. Thus, when focusing on a specific string $\XX'$ we need not worry about collisions with other $\XX$ strings. 
Once separate ranges are guaranteed, we wish to ensure $\XX'$ is a solution for $\psi$ if and only if $\CC(\XX'\concat \bar{\YY})$ is obtained uniquely. To do so, we partition all possible $\YY$ values into pairs $(0,1),(2,3),\dots$, except for $\bar{\YY}$ and $\bar{\YY}-1$, which remain unpaired (and are treated separately, as described in the formal proof). We then ensure that if $\YY$ and $\YY'$ are a pair, then $\CC(\XX\concat \YY)=\CC(\XX\concat \YY')$. More specifically, we ensure that if $\psi(\XX\concat \YY)=\psi(\XX\concat \YY')=0$, then $\CC(\XX\concat \YY)=\CC(\XX\concat \YY')=\CC(\XX\concat \bar{\YY})$, but if $\psi(\XX\concat \YY)=\psi(\XX\concat \YY')=1$, then $\CC(\XX\concat \YY)=\CC(\XX\concat \YY')\ne\CC(\XX\concat \bar{\YY})$. 
It then follows that $\CC(\XX\concat \bar{\YY})$ is uniquely obtained if and only if for all $\YY$ we have $\psi(\XX\concat \YY)=1$ (in particular, strings of the form $\XX\concat\bar{\YY}$ are the only candidates for obtaining unique values, as all other strings are paired---precisely for that purpose).

We proceed with the formal proof.
Throughout the proof, we treat all strings as binary numbers. Thus, e.g., $\XX\concat 0$ would be the concatenation of the bit strings representing the numbers $\XX$ and $0$ (in particular, $0$ is not a single bit here).
For all $\YY\in[2^{n}-2]_0$, we define the following function:
\[
p(\YY)=
\begin{cases}
\YY+1 & \text{ if $\YY$ is even}\\
\YY-1 & \text{ if $\YY$ is odd}
\end{cases}
\]
Thus, for all $\YY\in[2^{n-1}-1]_0$, we have that $p(2\YY)=2\YY+1$ and $p(2\YY+1)=2\YY$, and we say $2\YY$ and $2\YY+1$ are \emph{paired}.
Thus, all $\YY\in [2^n-2]_0$ have a pair.
With $\psi$ and $\bar{\YY}$ defined as above, we construct the following circuit $\CC$ on $2n$ input bits.
For $\XX,\YY\in[2^n]_0$, define:
\[
\CC(\XX\concat \YY)=
\begin{cases}
    2\XX & \text{if }\big((\YY\leq \bar{\YY}-2)\land(\psi(\XX\concat \YY)=\psi(\XX\concat p(\YY))=1)\big) \text{ or}\\
    & \big((\YY=\bar{\YY}-1)\land(\psi(\XX\concat \YY)=\psi(\XX\concat \bar{\YY})=\psi(\XX\concat 0)=\psi(\XX\concat 1)=1)\big)\\
    2\XX+1 & otherwise
\end{cases}
\]
We observe some properties of $\CC$. 
\begin{observation}
\label{y_hat_mapping}
For all $\XX\in[2^n]_0$, we have $\CC(\XX\concat \bar{\YY})=2\XX+1$.
\end{observation}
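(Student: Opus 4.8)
The plan is to simply unfold the piecewise definition of $\CC$ at the input $\XX\concat\bar\YY$, where $\bar\YY = 2^n-1$. First I would note that the branch of the definition producing the value $2\XX$ consists of two disjuncts: the first is guarded by the condition $\YY \le \bar\YY - 2$, and the second by the condition $\YY = \bar\YY - 1$. Substituting $\YY = \bar\YY$, the first guard becomes $\bar\YY \le \bar\YY - 2$ and the second becomes $\bar\YY = \bar\YY - 1$, both of which are false. Hence neither disjunct holds, so the \emph{otherwise} branch applies and $\CC(\XX\concat\bar\YY) = 2\XX + 1$, as claimed.

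There is essentially no obstacle here, so the only point worth flagging is a well-definedness check: the first disjunct refers to $p(\YY)$, and $p$ is defined only on $[2^n-2]_0$, so $p(\bar\YY)$ is not defined. This is harmless precisely because that disjunct is gated by $\YY \le \bar\YY - 2$, which already excludes $\YY = \bar\YY$, so $p$ is never evaluated outside its domain when we plug in $\YY = \bar\YY$. Thus the whole argument reduces to checking that two arithmetic inequalities fail, and the observation follows immediately from the definition of $\CC$.
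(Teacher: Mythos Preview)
Your proposal is correct and matches the paper's treatment: the paper states this as an observation without proof, and your unfolding of the definition of $\CC$ at $\YY=\bar\YY$ is exactly the intended (and only reasonable) argument. The well-definedness remark about $p$ is a nice extra sanity check but not something the paper bothers to mention.
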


\begin{observation}
\label{pair_mapping}
For all $\XX\in[2^n]_0$ and $\YY\in [2^n-2]_0$, we have that $\CC(\XX\concat \YY)=\CC(\XX\concat p(\YY))$.
\end{observation}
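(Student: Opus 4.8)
The plan is to prove this observation by a short case analysis on the two-branch definition of $\CC$, after first recording the structural facts about the pairing map $p$ that the argument needs. First I would note that $p$ is an involution on $[2^n-2]_0$: for $\YY\in[2^n-2]_0$ we have $p(p(\YY))=\YY$, since if $\YY=2j$ then $p(\YY)=2j+1$ and $p(2j+1)=2j$, and symmetrically when $\YY=2j+1$. Second I would observe that $p$ maps $[2^n-2]_0$ into itself: the blocks $\{2j,2j+1\}$ for $j\in[2^{n-1}-1]_0$ partition exactly $\{0,1,\dots,2^n-3\}=[2^n-2]_0$, so $p(\YY)$ is never equal to $\bar{\YY}-1$ or to $\bar{\YY}$.

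Now fix $\XX\in[2^n]_0$ and $\YY\in[2^n-2]_0$, so that $\YY\le\bar{\YY}-2$ and, by the remark above, also $p(\YY)\le\bar{\YY}-2$. The second disjunct in the definition of $\CC$ (the one requiring its second argument to equal $\bar{\YY}-1$) is therefore vacuous for both $\CC(\XX\concat\YY)$ and $\CC(\XX\concat p(\YY))$, so each of these values equals $2\XX$ exactly when the corresponding instance of the first disjunct holds and equals $2\XX+1$ otherwise. Unwinding the first disjunct, $\CC(\XX\concat\YY)=2\XX$ iff $\psi(\XX\concat\YY)=\psi(\XX\concat p(\YY))=1$, whereas $\CC(\XX\concat p(\YY))=2\XX$ iff $\psi(\XX\concat p(\YY))=\psi(\XX\concat p(p(\YY)))=1$; by the involution property the latter is $\psi(\XX\concat p(\YY))=\psi(\XX\concat\YY)=1$, literally the same condition. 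Hence $\CC(\XX\concat\YY)$ and $\CC(\XX\concat p(\YY))$ always take the same value, which is the assertion.

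I do not expect any real obstacle: this is a routine verification, and the only point needing a moment's care is the boundary behaviour of $p$, namely checking that $p(\YY)$ stays within $[2^n-2]_0$ so that the case split governing $\CC(\XX\concat p(\YY))$ is the same first disjunct (and not the special $\bar{\YY}-1$ disjunct) as the one governing $\CC(\XX\concat\YY)$; this is exactly what the partition remark in the first paragraph supplies. I would state and dispatch this observation alongside Observation~\ref{y_hat_mapping}, whose proof is the even shorter remark that $\bar{\YY}$ satisfies neither $\bar{\YY}\le\bar{\YY}-2$ nor $\bar{\YY}=\bar{\YY}-1$, forcing $\CC(\XX\concat\bar{\YY})=2\XX+1$; both observations are then used when completing the proof of \Cref{thm:CKTUVAL_Sig2}.
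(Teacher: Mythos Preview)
Your proposal is correct and follows exactly the reasoning the paper intends: the paper states this as an observation without proof, treating it as immediate from the definition of $\CC$ together with the involution property $p(p(\YY))=\YY$ and the fact that $p$ stays within $[2^n-2]_0$. Your write-up makes explicit precisely the case analysis and boundary check that the paper leaves implicit.
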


\begin{observation}
\label{different_x_mapping}
For $\XX_1,\XX_2\in [2^n]_0$, if $\XX_1\neq\XX_2$, then $\CC(\XX_1\concat \YY_1)\neq\CC(\XX_2\concat \YY_2)$ for all $\YY_1$ and $\YY_2$.
\end{observation}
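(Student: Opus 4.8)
The plan is to read the claim straight off the definition of $\CC$. The key observation is that, regardless of which branch of the case distinction applies, the output $\CC(\XX\concat\YY)$ always lies in the two-element set $\{2\XX,\,2\XX+1\}$: the first branch returns $2\XX$ and the ``otherwise'' branch returns $2\XX+1$. Consequently, the set of all values $\CC$ can ever produce on inputs whose first half equals $\XX$ is contained in $\{2\XX,\,2\XX+1\}$, and crucially this containment does not depend on the second-half input $\YY$.

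With that in hand, the observation follows from disjointness of these small ranges. Take $\XX_1\neq\XX_2$ and assume without loss of generality $\XX_1<\XX_2$ as integers in $[2^n]_0$, so $\XX_1+1\le\XX_2$ and hence $2\XX_1+1<2(\XX_1+1)\le 2\XX_2$. Therefore $\{2\XX_1,\,2\XX_1+1\}\cap\{2\XX_2,\,2\XX_2+1\}=\emptyset$. Since $\CC(\XX_1\concat\YY_1)\in\{2\XX_1,\,2\XX_1+1\}$ and $\CC(\XX_2\concat\YY_2)\in\{2\XX_2,\,2\XX_2+1\}$ for every choice of $\YY_1$ and $\YY_2$, the two outputs cannot coincide, which is exactly the claim.

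I do not expect any real obstacle here; the only thing worth double-checking is that every branch of the piecewise definition of $\CC$ genuinely returns a value in $\{2\XX,\,2\XX+1\}$ (it does, by inspection of the two cases), so that the range argument is valid. This observation, together with Observations~\ref{y_hat_mapping} and~\ref{pair_mapping}, is what lets the remainder of the reduction analyze each candidate string $\XX$ in complete isolation from the others.
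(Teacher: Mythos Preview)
Your proof is correct and matches the paper's intent. The paper states this as an observation without proof, relying on the reader to see (exactly as you argue) that $\CC(\XX\concat\YY)\in\{2\XX,2\XX+1\}$ from the two branches of the definition, and that these ranges are disjoint for distinct $\XX$.
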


\begin{observation}
\label{y_hat_minus_one_mapping}
For all $\XX\in [2^n]_0$, we have that either $\CC(\XX\concat \bar{\YY}-1)=\CC(\XX\concat \bar{\YY})$ or $\CC(\XX\concat \bar{\YY}-1)=\CC(\XX\concat  0)$.
\end{observation}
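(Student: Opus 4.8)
The plan is to prove \Cref{y_hat_minus_one_mapping} by a short case analysis on which branch of the definition of $\CC$ is triggered by the input $\XX\concat(\bar{\YY}-1)$, reusing \Cref{y_hat_mapping} in one of the two cases.

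First I would record two parity/range facts. Since $\bar{\YY}=2^n-1$ is odd, $\bar{\YY}-1$ is even; and clearly $\bar{\YY}-1>\bar{\YY}-2$. Hence the guard $\YY\le\bar{\YY}-2$ of the first branch in the definition of $\CC$ fails for $\YY=\bar{\YY}-1$, so $\CC(\XX\concat(\bar{\YY}-1))$ is governed entirely by the second branch: it equals $2\XX$ precisely when $\psi(\XX\concat(\bar{\YY}-1))=\psi(\XX\concat\bar{\YY})=\psi(\XX\concat 0)=\psi(\XX\concat 1)=1$, and equals $2\XX+1$ otherwise.

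Then I would split into the two cases. If all four of those $\psi$-values are $1$, then $\CC(\XX\concat(\bar{\YY}-1))=2\XX$; on the other hand, for the input $\XX\concat 0$ we have $0$ even, $0\le\bar{\YY}-2$ (the degenerate cases $n\le 1$ are immediate, since then $\bar{\YY}-1=0$), and $p(0)=1$, so the first branch applies, and as $\psi(\XX\concat 0)=\psi(\XX\concat 1)=1$ it yields $\CC(\XX\concat 0)=2\XX$; thus $\CC(\XX\concat(\bar{\YY}-1))=\CC(\XX\concat 0)$. If instead at least one of the four $\psi$-values is $0$, then $\CC(\XX\concat(\bar{\YY}-1))=2\XX+1$, which by \Cref{y_hat_mapping} is exactly $\CC(\XX\concat\bar{\YY})$. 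In either case the asserted disjunction holds. Taken together with \Cref{y_hat_mapping,pair_mapping,different_x_mapping}, this observation is what one needs to conclude that $\CC(\XX\concat\bar{\YY})$ is attained by a unique input iff $\forall\YY\,\psi(\XX\concat\YY)=1$, which closes the reduction from \QSAT{}.

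I do not expect a real obstacle here: the statement is a bookkeeping consequence of the definition of $\CC$. The only thing needing care is aligning the parities and the range bounds ($\bar{\YY}-1$ even and $>\bar{\YY}-2$; $0$ even, $\le\bar{\YY}-2$, with $p(0)=1$) so that the correct branch of $\CC$ is selected in each case, together with dispatching the trivial small-$n$ edge cases where $\bar{\YY}-2<0$.
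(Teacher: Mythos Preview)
Your proposal is correct and is exactly the intended verification: the paper states this as an unproved observation to be read off directly from the definition of $\CC$, and your two-case analysis (on whether the four $\psi$-values in the $\YY=\bar{\YY}-1$ clause are all $1$) is precisely how one checks it. The parity/range bookkeeping you record (in particular that $0\le\bar{\YY}-2$ and $p(0)=1$ so that the first branch governs $\CC(\XX\concat 0)$) is the only thing that needs saying, and you have it right.
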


Assume there exists $\XX'$ such that for all $\YY$ we have $\psi(\XX'\concat \YY)=1$. Then for all $\YY<\bar{\YY}$ we have that $\CC(\XX'\concat \YY)=2\XX'$, while we have $\CC(\XX'\concat \bar{\YY})=2\XX'+1$ by \Cref{y_hat_mapping}. By \Cref{different_x_mapping}, we have that $\CC(\XX'\concat \bar{\YY})$ is uniquely obtained.

In the other direction, assume there exist $\XX',\YY'\in [2^n]_0$ such that $\CC(\XX'\concat \YY')$ is uniquely obtained. By \Cref{pair_mapping,y_hat_minus_one_mapping}, we must have $\YY'=\bar{\YY}$.
We wish to show that $\psi(\XX'\concat \YY)=1$ for all $\YY\in [2^n]_0$. Assume towards contradiction that there exists $\YY_0\in [2^n]_0$ such that $\psi(\XX'\concat \YY_0)=0$.
If $\YY_0<\bar{\YY}-1$, then we have $\CC(\XX'\concat \YY_0)=2\XX'+1=\CC(\XX'\concat \bar{\YY})=\CC(\XX'\concat \YY')$, a contradiction to the assumption that $\CC(\XX'\concat \YY')$ is obtained uniquely.
Similarly, if $\YY_0\in\{\bar{\YY}-1,\bar{\YY}\}$, then we have $\CC(\XX'\concat \bar{\YY}-1)=\CC(\XX'\concat \bar{\YY})$, a contradiction.
\end{proof}

\begin{proof}
(\textbf{\Cref{thm:CKTPareto_Sig2}}).
Clearly \CKTPareto{} is contained in \Sig{2}, as for given strings $\XX$ and $\YY$ we can verify in polynomial time whether $\XX$ obtains a higher output than $\YY$ in one of the circuits. We reduce from \CKTUVAL{} to show it is \Sig{2}-hard. Given an instance $\CC$ of \CKTUVAL{}, we construct the circuits $\CC_1$ and $\CC_2$, where for all $\XX\in\bits{n}$ we set $\CC_1(\XX)=\CC(\XX)$ and $\CC_2(\XX)=-\CC(\XX)$. If there exists $\XX^*$ such that for all $\YY\ne \XX^*$ we have $\CC(\XX^*)\neq\CC(\YY)$, then for all $\YY\ne \XX^*$ we have that either $\CC_1(\XX^*)>\CC_1(\YY)$ or $\CC_2(\XX^*)>\CC_2(\YY)$.
Therefore, $\langle\CC_1,\CC_2\rangle$ is a yes-instance of \CKTPareto{}.
If there is no such string $\XX^*$, then for all $\XX$ there is some $\YY'$ such that $\CC(\XX)=\CC(\YY')$, implying $\CC_1(\XX)=\CC_1(\YY')$ and $\CC_2(\XX)=\CC_2(\YY')$. Hence, $\langle\CC_1,\CC_2\rangle$ is a no-instance of \CKTPareto{}.
\end{proof}
\section{Proofs of Section \ref{sec:Beyond_sig2}}
\label{sec:Beyond_sig2_Appendix}
In this section, we provide all proofs missing from \Cref{sec:Sig2}.

\begin{proof}
(\textbf{\Cref{thm:wdom_uqsat}}).
Showing that $\UQSAT{}\reduce\WDomStrat{}$ can be obtained via a trivial reduction: Given a Boolean formula $\psi$ on variables $\fXX\cup \fYY$, we construct a Boolean circuit $\CC$ with $2n$ inputs and $1$ output, defined for all $\XX,\YY\in\bits{n}$ by $\CC(\XX\concat \YY)=\psi(\XX\concat \YY)$. 
Let $\XX^*\in\bits{n}$. It holds that 
\[\forall \YY\in\bits{n} \;\psi(\XX^*\concat \YY)=1 \;\land \;\forall \XX'\neq\XX^* \;\exists\YY'\; \psi(\XX'\concat \YY')=0\]
if and only if 
\[\forall \XX',\YY\in\bits{n}\;\CC(\XX^*\concat \YY)\geq\CC(\XX'\concat \YY)\land \forall \XX'\neq\XX^* \; \exists\YY'\;\CC(\XX^*\concat \YY)>\CC(\XX'\concat \YY).\]
Thus, $\XX^*$ is a unique solution of \UQSAT{} on $\psi$ if and only if it is a unique $\WDomStrat{}$ solution on $\CC$.

We now show $\WDomStrat{}\reduce\UQSAT{}$.
In the proof, we treat the instance $\psi$ of \UQSAT{} as a Boolean circuit rather than a formula, since we have from Cook's theorem \cite{cook1971complexity} that we can reduce polynomial-time Boolean circuits to polynomial-size Boolean formulae that preserve solutions.
Given a circuit $\CC$ for \WDomStrat{} with $2n$ inputs $\XX,\YY\in\bits{n}$ and $n$ outputs, we construct a circuit $\psi$ for \UQSAT{} with $4n$ inputs and $1$ output. On inputs $\XX',\XX_1,\XX_2,\YY\in\bits{n}$ we define 
\[
\psi(\XX'\concat \XX_1\concat \XX_2\concat \YY)=
\begin{cases}
1 & \text{if } \XX'=0^n\land \CC(\XX_1\concat \YY)\geq \CC(\XX_2\concat \YY)\\
0 & \text{otherwise.}
\end{cases}
\]
If there exists a solution $\XX_1$ for \WDomStrat{} on $\CC$, then 
$\CC(\XX_1\concat \YY)\geq \CC(\XX_2\concat \YY)$ for all $\XX_2,\YY\in\bits{n}$, and thus $0^n\concat \XX_1$ satisfies $\psi$ for any choice of $\XX_2,\YY$. Further, since \WDomStrat{} is unambiguous, and since any $\XX'\neq 0^n$ yields an output of $0$ from $\psi$, we have that $0^n\concat \XX_1$ is the unique solution for \UQSAT{} on $\psi$.

In the opposite direction, assume there exists a solution $\XX'\concat \XX_1$ for \UQSAT{} on $\psi$. 
We must have that $\XX'=0$ as otherwise $\psi$ outputs $0$ for any choice of $\XX_2,\YY$. 
Additionally, we have that $\CC(\XX_1\concat \YY)\geq \CC(\XX_2\concat \YY)$ for all $\XX_2,\YY\in\bits{n}$, by definition of $\psi$.
Therefore, $\XX_1$ satisfies $\CC$ for any choice of $\XX_2,\YY$. 
Hence, it remains to show $\XX_1$ is unique.
Since $\XX_1$ is a solution for \UQSAT{}, it is by definition the unique string satisfying $\forall \YY\in\bits{n}\;\psi(\XX_1,\YY)=1$. Thus, for any $\XX'_1\neq\XX_1$, there exists $\XX_2,\YY\in\bits{n}$ such that $\psi(0^n\concat \XX'_1\concat \XX_2\concat \YY)=0$, which implies $\CC(\XX'_1\concat \YY)<\CC(\XX_2\concat \YY)$. 
This disqualifies $\XX'_1$ from being another solution for \WDomStrat{} on $\CC$, thereby establishing the uniqueness of $\XX_1$.
\end{proof}

\begin{proof}
(\textbf{\Cref{thm:uqsat_dif2}}).
There exists a solution for \UQSAT{} on formula $\psi$ if and only if there exists $\XX^*$ such that for all $\YY$ we have $\psi(\XX^*\concat \YY)=1$, but there do not exist two such $\XX^*$. Since the former condition is a \Sig{2} statement while the latter is a \Piclass{2} statement, we have that $\UQSAT{}\in\Dif{2}$.
By \Cref{thm:wdom_uqsat}, we have that $\WDomStrat{}\in\Dif{2}$ as well.
\end{proof}

To show \UQSAT{} and \WDomStrat{} are lower bounded by $\Piclass{2}$, we consider the following canonical \Piclass{2} SAT problem \cite{stockmeyer1976polynomial}.

\begin{wbox}
    \coQSAT\\ 
    \textbf{Input:} Two sets $\fXX=\{x_1, \dots,x_n\}$ and $\fYY=\{y_1, \dots,y_n\}$ of Boolean variables and a Boolean formula $\psi(\fXX\concat \fYY)$ over $\fXX\cup \fYY$.\\
    \textbf{Question:}
    $\forall \XX\in\bits{n}\;\exists\YY\in\bits{n}\;\psi(\XX\concat \YY)=0$?
\end{wbox}

\begin{proof}
(\textbf{\Cref{thm:uqsat_pi2}}).
We reduce from \coQSAT{} to \UQSAT{}. Our reduction is analogous to the one showing that \USat{} is \coNP{}-hard \cite{blass1982unique}.
Given a Boolean formula $\psi$ over variables $\fXX=\{x_1,\dots,x_n\},\fYY=\{y_1,\dots,y_n\}$, we construct the formula $\psi'$ on variables $\fXX':=\fXX\cup \{x_0\}$ and $\fYY':=\fYY\cup\{y_0\}$, where $x_0$ and $y_0$ are new variables. 
For all assignments $\XX'=\XX'_0\concat\dots\concat\XX'_n$, $\YY'=\YY'_0\concat\dots\concat\YY'_n$, we denote $\XX=:\XX'_1\concat\dots\concat\XX'_n$ and $\YY=:\YY'_1\concat\dots\concat\YY'_n$, and define:
\[\psi'(\XX'\concat\YY')=(\XX'_0\land\dots\land \XX'_n)\lor(\lnot \XX'_0\land \psi(\XX\concat\YY))\]
Notice that $\YY'_0$ is ignored, and is only there to ensure $|\fXX'|=|\fYY'|$.
Clearly the all-ones assignment to $\fXX'$, denoted $\XX^*$, satisfies $\psi'$ regardless of the assignment to $\fYY'$.
Thus, we are interested in characterizing when it is the \emph{unique} assignment satisfying $\psi'$ for all assignments to $\fYY'$.

Assume $\psi$ is a yes-instance of \coQSAT{}, and let $\bar{\XX}\in\bits{n}$ and $\bar{\XX_0}\in\{0,1\}$ such that $\bar{\XX_0}\concat\bar{\XX}\ne\XX^*$.
Then there exists $\bar{\YY}\in\bits{n}$ such that $\psi(\bar{\XX}\concat \bar{\YY})=0$.
Hence, the term $(\lnot \bar{\XX_0}\land \psi(\bar{\XX}\concat \bar{\YY}))$ evaluates to $0$ by definition of $\bar{\YY}$.
Furthermore, the term $(\bar{\XX_0}\land\dots\land \bar{\XX_n})$ evaluates to $0$, as only the all-ones assignment satisfies it. 
Therefore, $\bar{\XX_0}\concat\bar{\XX}$ does not satisfy $\psi'$ for all assignments to $\fYY'$, implying $\XX^*$ is the unique solution. 

Conversely, assume $\psi'$ is a yes-instance of \UQSAT{}. Then there is a unique $(n+1)$-bit string satisfying $\psi'$ for all assignments to $\fYY'$, and by the above reasoning it must be the all-one assignment, $\XX^*$. 
Let $\bar{\XX}\in\bits{n}$.
Assume towards contradiction that for all $\YY\in\bits{n}$ we have $\psi(\bar{\XX}\concat\YY)=1$.
Then, consider the string $0\concat\bar{\XX}\ne\XX^*$. 
By assumption, we have that $(\lnot 0\land \psi(\bar{\XX}\concat\YY))$ evaluates to $1$ for all $\YY\in\bits{n}$, implying $0\concat\bar{\XX}$ satisfies $\psi'$ for all assignments to $\fYY'$, a contradiction to the uniqueness of $\XX^*$.
Hence, there must exist $\bar{\YY}\in\bits{n}$ such that $\psi(\bar{\XX}\concat\bar{\YY})=0$, and thus $\psi$ is a yes-instance of \coQSAT{}.
\end{proof}

\bibliographystyle{alpha}
\bibliography{USigma2}

\end{document}